\documentclass[%
 reprint,
 twocolumn,
 amsmath,amssymb,
 aps,
 prl,
 superscriptaddress,
]{revtex4-2}
\usepackage{graphicx}
\usepackage{bm}
\usepackage{subcaption}
\graphicspath{{figures/}{images/}{plots/}}
\usepackage{amsmath}
\usepackage{amssymb}
\usepackage{amsthm}
\usepackage{graphicx}
\usepackage{mathrsfs}
\usepackage{braket}
\usepackage{comment}
\usepackage[hidelinks]{hyperref}
\usepackage[capitalize]{cleveref}
\usepackage[dvipsnames]{xcolor}
\usepackage{soul,xcolor}
\usepackage{orcidlink}
\usepackage{color, xcolor, colortbl}
\usepackage{graphicx}
\usepackage{amsthm,amsmath,amssymb,amsfonts}
\usepackage{dcolumn}
\usepackage{url}
\usepackage{enumitem}
\usepackage{algpseudocode}
\usepackage{bm}
\usepackage{appendix}
\usepackage{multirow}
\usepackage{braket}
\usepackage[english]{babel}
\usepackage{aliascnt}
\usepackage{hyperref}
\usepackage[capitalize]{cleveref}
\usepackage{algorithm}
\usepackage{xpatch}
\usepackage{tikz}
\usepackage{adjustbox}
\usepackage{xspace}
\usetikzlibrary{quantikz}
\usepackage{pifont}

\newcommand{\Tr}{\operatorname{Tr}}

\newcommand{\mc}[1]{\mathcal{#1}}

\newtheorem{thm}{Theorem}[section]
\crefname{thm}{Theorem}{Theorems}
\Crefname{thm}{Theorem}{Theorems}

\newaliascnt{lem}{thm}
\newaliascnt{rem}{thm}
\newaliascnt{prop}{thm}
\newaliascnt{cor}{thm}
\newaliascnt{assumption}{thm}
\newaliascnt{defn}{thm}
\newaliascnt{notation}{thm}
\newaliascnt{fact}{thm}
\newaliascnt{setup}{thm}
\newaliascnt{condition}{thm}

\newtheorem{condition}[condition]{Condition}
\newtheorem{lem}[lem]{Lemma}

\newtheorem{prop}[prop]{Proposition}

\newtheorem{assumption}[assumption]{Assumption}

\aliascntresetthe{lem}
\aliascntresetthe{rem}
\aliascntresetthe{prop}
\aliascntresetthe{cor}
\aliascntresetthe{assumption}
\aliascntresetthe{defn}
\aliascntresetthe{notation}
\aliascntresetthe{fact}
\aliascntresetthe{setup}
\aliascntresetthe{condition}

\crefname{lem}{Lemma}{Lemmas}
\Crefname{lem}{Lemma}{Lemmas}

\crefname{rem}{Remark}{Remarks}
\Crefname{rem}{Remark}{Remarks}

\crefname{prop}{Proposition}{Propositions}
\Crefname{prop}{Proposition}{Propositions}

\crefname{cor}{Corollary}{Corollaries}
\Crefname{cor}{Corollary}{Corollaries}

\crefname{assumption}{Assumption}{Assumptions}
\Crefname{assumption}{Assumption}{Assumptions}

\crefname{defn}{Definition}{Definitions}
\Crefname{defn}{Definition}{Definitions}

\crefname{notation}{Notation}{Notations}
\Crefname{notation}{Notation}{Notations}

\crefname{condition}{Condition}{Conditions}
\Crefname{condition}{Condition}{Conditions}

\crefname{fact}{Fact}{Facts}
\Crefname{fact}{Fact}{Facts}
\Crefname{setup}{Setup}{Setups}

\usepackage{bbm}

\NewDocumentCommand{\ketbra}{mG{#1}}{\mathinner{|{#1}\rangle\!\langle{#2}|}}

\usepackage{silence}


\usetikzlibrary{fit}
\tikzset{%
  highlight/.style={rectangle,rounded corners,fill=blue!15,draw,fill opacity=0.3,thick,inner sep=0pt}
}

\counterwithout{thm}{section}
\aliascntresetthe{lem}
\aliascntresetthe{rem}
\aliascntresetthe{prop}
\aliascntresetthe{cor}
\aliascntresetthe{assumption}
\aliascntresetthe{defn}
\aliascntresetthe{notation}
\aliascntresetthe{fact}
\aliascntresetthe{setup}
\aliascntresetthe{condition}

\graphicspath{{figures/}}

\newcommand{\DeptMath}{Department of Mathematics, University of Michigan, Ann Arbor, MI 48109, USA}
\begin{document}

\title{Sublinear thermal state and ground state preparation for free fermions via system-bath interaction}

\author{Zhiyan Ding\orcidlink{0000-0001-8863-403X}}
\email{zyding@umich.edu}
\affiliation{\DeptMath}
\author{Lin Lin\orcidlink{0000-0001-6860-9566}}
\email{lin@caltech.edu}
\affiliation{Department of Computing and Mathematical Sciences, California Institute of Technology, Pasadena, CA 91125, USA}
\affiliation{Applied Mathematics and Computational Research Division, Lawrence Berkeley National Laboratory, Berkeley, CA 94720, USA}
\affiliation{Department of Mathematics, University of California, Berkeley, CA 94720, USA}
\author{Ke Wang\orcidlink{0009-0001-3225-887X}}
\email{kwmath@umich.edu}
\affiliation{\DeptMath}


\begin{abstract}

Dissipative algorithms for preparing thermal and ground states typically require a total Hamiltonian simulation time that grows at least linearly with system size $N$, even when their mixing time grows only polylogarithmically. We achieve sublinear simulation time for free fermions by engineering local interactions with a bath whose size grows in proportion to the system size. Starting from arbitrary initial states, our thermal state preparation algorithm achieves trace-norm error $\epsilon$ in total simulation time $\widetilde{\mathcal{O}}(\sqrt{N}/\epsilon)$ at fixed inverse temperature and bounded single-particle bandwidth, while ground state preparation of gapped systems requires only $\mathrm{polylog}(N/\epsilon)$ time. We further establish lower bounds for thermal state preparation within the bath protocols considered, identifying two obstacles to faster preparation, namely fourth-order deviations from the target thermal state and fluctuations induced by a random frequency shared by all bath qubits. Our results demonstrate that the linear simulation time barrier in dissipative state preparation can be overcome and clarify the bath and filter-design constraints that govern further complexity improvements.
\end{abstract}

\maketitle

{\it Introduction} --
Quantum state preparation is a foundational challenge in quantum computation and simulation. Given a Hamiltonian $H$, its thermal and ground states encode key equilibrium properties of the underlying quantum system and serve as essential resources in quantum many-body physics, quantum chemistry, and materials science.

Dissipative state preparation algorithms use controlled interactions with an environment or bath to drive the system toward the desired state. Early heuristic studies provided important evidence that engineered dissipation can successfully prepare pure and thermal quantum states in specific structured settings~\cite{KrausBuchlerDiehlEtAl2008,DiehlMicheliKantianEtAl2008,VerstraeteWolfIgnacioCirac2009,KastoryanoReiterSorensen2011,KrauterMuschikJensenEtAl2011,Temme_2011,RoyChalkerGornyiEtAl2020,zhou2021symmetry,shtanko2023}. Recently, substantial progress has been made toward systematically designed dissipative algorithms with rigorous performance and complexity guarantees. One direction involves designing Lindblad generators whose stationary states coincide with, or closely approximate, the desired Gibbs or ground states, together with algorithms and analyses establishing end-to-end complexity guarantees in relevant regimes~\cite{Cubitt2023,Chen2025Efficient,ChenKastoryanoGilyen2023,Ding2024,Ding_2025,li2024dissipative,gilyen2024,hahn2025,rouze2024efficient,zhan2025rapidquantumgroundstate}. More recently, system-bath interaction algorithms have provided a complementary approach, in which the dissipative dynamics is implemented directly through interactions with a bath that is reset after each round of evolution. A growing body of work has developed such constructions and established increasingly rigorous guarantees on stationary-state accuracy and mixing behavior~\cite{hahn2026efficientquantumthermalstate,ding2025endtoendefficientquantumthermal,lloyd2025quantumthermal,slezak2026polynomialtime,wang2025lindblad,chen2026overcominglambshiftsystembath}.

In recent developments, dissipative algorithms can be designed without explicit knowledge of the target state, requiring only access to Hamiltonian simulation of the form $\exp(-iHt)$. The end-to-end complexity of such algorithms is measured by the total Hamiltonian simulation time required to prepare the target state to a prescribed accuracy. Let $N$ denote the system size. For Lindbladian-based algorithms, the state-of-the-art complexity scales as $\widetilde{\mathcal{O}}(N)$ for thermal- and ground state preparation in regimes of weak interactions~\cite{zhan2025rapidquantumgroundstate,rouze2024optimal,smid2025rapidmixingquantumgibbs}, while a complexity of $\widetilde{\mathcal{O}}(N^4)$ has been established for free-fermion ground state preparation~\cite{wang2025lindblad} with single ancilla bath.

An at least linear dependence on the system size $N$ is a common feature of existing dissipative algorithms. This dependence is closely tied to the locality of the dissipative transitions and the need to drive the entire system toward the target state. Consider Lindbladian-based algorithms as an example. The jump operators are often constructed from Heisenberg-picture evolutions of local operators and therefore remain quasi-local, acting effectively on only a constant region of the system. To reach the target state from an arbitrary initial state, dissipative transitions must collectively act on all system qubits. In typical local constructions, this requires at least $\Theta(N)$ applications of such dissipative transitions. Consequently, the total Hamiltonian simulation time naturally exhibits a lower bound of the form $T_{\rm total}=\Omega(N)$, explaining the at least linear system-size dependence that appears in existing approaches. On the other hand, the locality of the interaction is often a key feature of dissipative algorithms, as it allows for efficient implementation of the algorithm to ensure that the dynamics remains physically realizable. This raises a natural question:

\emph{Can dissipative algorithms prepare thermal and ground states with sublinear total Hamiltonian simulation time in the system size while retaining local interactions?}

In this work, we show that the linear-in-$N$ barrier can be overcome for free fermions using the recently developed system-bath interaction framework. For thermal state preparation to a target accuracy $\epsilon$, locally coupling the system to $\Theta(N)$ bath qubits reduces the total Hamiltonian simulation time to $\widetilde{\mathcal{O}}(\sqrt{N}/\epsilon)$, yielding a polynomial improvement over existing dissipative approaches. For ground state preparation, the complexity further improves to $\widetilde{\mathcal{O}}(\operatorname{polylog}(N/\epsilon))$, achieving a nearly optimal dependence on the system size. We also establish a lower bound for thermal state preparation, demonstrating the sharpness of our analysis in the regime considered and identifying the design bottlenecks that limit further improvement.

Although free-fermion systems are classically simulable, establishing efficient quantum state-preparation algorithms for this benchmark remains of significant interest. Free fermions provide a natural testbed for exploring the potential of dissipative state-preparation algorithms~\cite{molpeceres2025} and have also served as a benchmark for studying their robustness to noise~\cite{molpeceres2026benchmarkquantumalgorithmsground}. Our results provide rigorous complexity guarantees that are sublinear in the system size $N$ for this benchmark problem under the framework of dissipative state preparation.


In our construction, we restrict the total Hamiltonian to be quadratic, which is necessary for the system-bath interaction dynamics to preserve the free-fermion structure. Importantly, however, convergence of the algorithm does not require the initial state to be a free-fermion state and holds for arbitrary initial states. This suggests that the underlying mechanism is not intrinsically tied to free-fermion initial data and may admit extensions to more general quantum systems.

Employing an extensive bath, whose size grows in proportion to the system size, has become a common design choice in dissipative algorithms~\cite{hahn2026efficientquantumthermalstate,lloyd2025quantumthermal,molpeceres2025}, yet it is often motivated heuristically, and it remains theoretically unclear whether such a choice can yield a provable improvement in end-to-end complexity. We note that, while increasing the number of bath qubits can enhance the available dissipative transitions, it also introduces additional fluctuations that may drive the system away from the target state. Balancing these competing effects is nontrivial and lies beyond the scope of existing analyses of dissipative algorithms. Consequently, it is not a priori clear whether the advantages of an extensive bath outweigh its potential drawbacks. Our results rigorously demonstrate the advantages of employing an extensive bath over existing system-bath interaction algorithms and their analyses. Furthermore, they establish the use of an extensive bath as a promising design principle for overcoming the linear-in-$N$ complexity barrier in dissipative state preparation, a regime that, to the best of our knowledge, remains beyond the reach of existing Lindbladian-based algorithms.


{\it System-bath interaction algorithm} --
Given an $N$-qubit system Hamiltonian $H$ and an inverse temperature
$0<\beta<\infty$, the thermal state is defined as $\sigma_\beta\propto e^{-\beta H}$, while we denote the ground state by $\ketbra{\psi_0}$. The system-bath
interaction algorithm~\cite{hahn2026efficientquantumthermalstate,
ding2025endtoendefficientquantumthermal,lloyd2025quantumthermal,
molpeceres2025} prepares the target state by iteratively applying a
quantum channel $\Phi_\alpha$ induced by a joint system-bath evolution
followed by tracing out the bath. Specifically, each iteration is given by
\begin{equation}\label{eqn:Phi_alpha}
\rho_{n+1}
=
\Phi_{\alpha}(\rho_n)
:=
\mathbb{E}\left[
\Tr_E\left(
U^\alpha(T)
\left(\rho_n\otimes\rho_E\right)
U^\alpha(T)^\dagger
\right)
\right].
\end{equation}
Here, $\rho_E$ denotes the initial bath state at the beginning of each iteration, and $U^\alpha(t)
:=
\mathcal{T}\exp\left(
-i\int_{-T}^{t} H_\alpha(s)\,\mathrm{d}s
\right)$ is the time-evolution operator governing the joint system-bath dynamics
up to time $t$. We take the joint Hamiltonian to be
\begin{equation}\label{eqn:H_alpha_formula}
H_\alpha(t)
=
H+H_E+\alpha H_{\rm int}(t),
\end{equation}
where $H_E$ is the bath Hamiltonian, $H_{\rm int}(t)$ describes the
system-bath interaction, and $\alpha$ controls the interaction strength. The choice of $H_E$, $H_{\rm int}(t)$, and $\rho_E$ might contain randomness, and the expectation $\mathbb{E}[\cdot]$ in~\eqref{eqn:Phi_alpha} is taken over the randomness in the construction of the system-bath interaction algorithm.

In designing a system-bath interaction algorithm, the key is to choose
$H_E$, $H_{\rm int}(t)$, and $\rho_E$ such that the resulting quantum
channel $\Phi_\alpha$ admits a unique stationary state close to the target
state. In practice, the interaction Hamiltonian is often constructed as a
sum of local system-bath couplings of the form
\begin{equation}\label{eqn:H_int_formula}
H_{\rm int}(t)
=
f(t)\sum_i A_i\otimes B_i+\mathrm{h.c.},
\end{equation}
where $A_i$ and $B_i$ are local operators acting on the system and bath,
respectively, and $f(t)$ is a complex valued function. Each
term $A_i\otimes B_i$ induces transitions that exchange energy between
the system and bath, while $f(t)$, often referred to as a filter function,
is designed to control the transition rates between different energy
levels. Assuming access to the joint evolution with simultaneous system-bath couplings, we characterize the
end-to-end complexity of the algorithm by the total Hamiltonian simulation
time $T_{\rm total}=2T\,m$, where $m$ denotes the number of iterations and $2T$ is the Hamiltonian evolution time required in each iteration.

{\it Main results} --
We assume $H=\sum_{i,j} h_{ij} c^\dagger_ic_j$, where $h=(h_{ij})$ is a Hermitian matrix. Since $h$ is Hermitian, there exists a unitary matrix $U$ such that $h=U\Lambda U^\dagger$. Then,
\begin{equation}\label{eqn:H_diagonal}
\begin{aligned}
H&=\sum^N_{k=1}\lambda_k\left(\sum_j (U^\dagger)_{k,j}c_j\right)^\dagger\left(\sum_j (U^\dagger)_{k,j}c_j\right)\\
&:=\sum^N_{k=1}\lambda_k d^\dagger_k d_k\,.
\end{aligned}
\end{equation}
Because $U$ is unitary, $\{d^\dagger_k,d_k\}$ forms a new set of creation and annihilation operators.

For free fermions, we use a bath with zero Hamiltonian for thermal state preparation and a bath with a randomly sampled frequency for ground state preparation. We call these the zero-bath and random-bath settings, respectively. In both cases, the construction uses $\Theta(N)$ bath qubits, and $H_{\rm int}(t)$ in~\eqref{eqn:H_int_formula} consists of $\Theta(N)$ local system-bath couplings. Moreover, the choices of the local operators $A_i$ and $B_i$ are independent of $h$. The main results are summarized in the following theorem:
\begin{thm}[Sublinear thermal state preparation for free fermions]\label{thm:main_thermal} Let $\Lambda=\max_k|\lambda_k|=\mathcal O(1)$. There exists a system-bath interaction algorithm $\Phi_{\alpha}$ with $T=\widetilde{\Theta}(1+\beta)$, such that when $m=\widetilde{\Theta}\left(\frac{\min\{N^{1/2}\exp(\beta\Lambda/2),N\}}{\epsilon}\right)$, $\sup_{\rho}\left\|\Phi_{\alpha}^m(\rho)-\sigma_{\beta}\right\|_1\leq \epsilon$. In particular, when $\beta=\Theta(1)$, $\Lambda=\Theta(1)$, to prepare the Gibbs state $\sigma_\beta$ with $\epsilon$-accuracy, the total Hamiltonian simulation time is
\begin{equation}\label{eqn:total_time}
T_{\rm total}=\widetilde{\mathcal{O}}(N^{1/2}/\epsilon)\,.
\end{equation}
Here, $\widetilde{\Theta},\widetilde{\mathcal{O}}$ hide some polylogarithmic factors of $\beta,N,1/\epsilon$.
\end{thm}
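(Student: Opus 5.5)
The plan is to construct $\Phi_\alpha$ explicitly and reduce everything to single-particle (quasi-free) linear algebra. \emph{Construction.} Take $N$ bath fermionic modes $b_i$ (Jordan--Wigner encoded in $N$ qubits), with $H_E=0$ in the zero-bath setting. The coupling is on-site, $H_{\rm int}(t)=f(t)\sum_i c_i^\dagger b_i+\mathrm{h.c.}$, with $f(t)$ a Gaussian filter of width $\widetilde\Theta(1+\beta)$ supported (up to negligible tails) on $[-T,T]$ and modulated by $e^{-i\omega t}$. Each bath mode is prepared empty or filled, and $\omega$ is random; the law of $(\omega,\text{occupation})$ is chosen so that the induced gain/loss rates at energy $\lambda$ satisfy the detailed-balance ratio $e^{-\beta\lambda}$. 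The ratio is exact only up to filter errors, which we control by the Gaussian width.

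\emph{Step 1: closed single-particle recursion.} Since $H+\alpha H_{\rm int}(t)$ is quadratic, the one-shot map on the joint modes is a Bogoliubov (here number-preserving) rotation. Tracing out a bath prepared in a Gaussian state therefore gives a quasi-free channel. It is determined by a transfer matrix $K_\alpha\in\mathbb C^{N\times N}$ and a source $Q_\alpha$ acting on the correlation matrix $\Gamma_{ij}=\Tr(\rho\, c_i^\dagger c_j)$ by $\Gamma\mapsto K_\alpha\Gamma K_\alpha^\dagger+Q_\alpha$. The same $K_\alpha$ governs all higher moments, which is why arbitrary, non-Gaussian initial states can be treated.

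Expanding the Dyson series in $\alpha$ and working in the $d_k$ basis of~\eqref{eqn:H_diagonal} gives
\[
K_\alpha=I-\alpha^2 G+\mathcal O(\alpha^4),\qquad Q_\alpha=\alpha^2 R+\mathcal O(\alpha^4).
\]
Here $G_{kk'}\propto\sum_i \overline{U_{ik}}U_{ik'}\,\hat f(\lambda_k)\overline{\hat f(\lambda_{k'})}$, averaged over the random $\omega$. The crucial point of the extensive bath appears here. Because every site has its own bath mode, $\sum_i|U_{ik}|^2=1$, so each eigenmode's damping rate $r_k=G_{kk}=\Theta(\min\{e^{-\beta\Lambda/2},1\})$ is independent of $N$ and of $h$. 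With a single bath qubit this rate would instead be $\mathcal O(1/N)$.

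\emph{Step 2: fixed point.} I will show the unique fixed point $\Gamma^\ast$ of the recursion satisfies
\[
\Gamma^\ast=(e^{\beta h}+1)^{-1}+\Delta,\qquad \|\Delta\|=\mathcal O(\alpha^2).
\]
The second-order terms satisfy detailed balance exactly up to filter tails, so the leading deviation comes from the fourth-order terms. The channel's stationary state is then the Gaussian state with correlation $\Gamma^\ast$. To compare it to $\sigma_\beta$ I would use relative entropy rather than mode-wise trace distance: $D(\rho_{\Gamma^\ast}\|\sigma_\beta)=\mathcal O(\|\Delta\|_F^2)$ with constants depending on $e^{\beta\Lambda}$. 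Pinsker then gives
\[
\|\rho_{\Gamma^\ast}-\sigma_\beta\|_1=\mathcal O\bigl(\sqrt N\,\alpha^2\bigr).
\]
This forces $\alpha^2\sim \epsilon/\sqrt N$. If the bias instead has to be controlled mode by mode, one only gets the $N\alpha^2$ bound, which explains the $\min\{\cdot,N\}$ in the statement.

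\emph{Step 3: mixing for arbitrary $\rho$.} From $\|K_\alpha^m\|\le(1-c\alpha^2 r_{\min})^m$, I will bound
\[
\|\Phi_\alpha^m(\rho)-\rho_{\Gamma^\ast}\|_1\le C N\|K_\alpha^m\|.
\]
The argument expands in the basis of normal-ordered monomials of the $d_k$: the adjoint channel maps a degree-$p$ monomial to $K$-rotated monomials plus lower-degree terms. An induction on degree (or a coupling to the fixed Gaussian) yields the bound with only polynomial-in-$N$ prefactors. Taking logarithms, $m=\widetilde\Theta\bigl(1/(\alpha^2 r_{\min})\bigr)=\widetilde\Theta\bigl(\sqrt N e^{\beta\Lambda/2}/\epsilon\bigr)$ iterations suffice. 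With $T=\widetilde\Theta(1+\beta)$ this gives~\eqref{eqn:total_time}.

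\emph{Main obstacle.} I expect the hardest step to be Step 2: showing that the second-order deviation from detailed balance really vanishes, so that the bias is genuinely $\mathcal O(\alpha^2)$ uniformly in $N$. The off-diagonal entries of $G$ couple nearly degenerate modes, and one must check they do not spoil the Fermi--Dirac fixed point. My first attempt is to use the Gaussian filter's exact KMS-type relation $\hat f(\lambda)$ vs.\ $\hat f(-\lambda)$ to show that $R$ and $G$ satisfy $R=G(e^{\beta h}+1)^{-1}+(e^{\beta h}+1)^{-1}G^\dagger$ to all off-diagonal orders. I would then bound the Lyapunov solution using $\|G^{-1}\|\lesssim e^{\beta\Lambda/2}$.
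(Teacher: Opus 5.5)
\textbf{Your construction cannot give sublinear scaling, and the paper proves exactly this obstruction.}

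With your hopping-only coupling $f(t)\sum_i c_i^\dagger b_i+\mathrm{h.c.}$ and $f(t)=g(t)e^{-i\omega t}$, fix a realization of $\omega$. A mode at energy $\lambda$ gains a particle at rate $\propto p_{\rm b}(\omega)|\hat g(\lambda-\omega)|^2$ and loses one at rate $\propto(1-p_{\rm b}(\omega))|\hat g(\lambda-\omega)|^2$. The per-realization ratio $p_{\rm b}(\omega)/(1-p_{\rm b}(\omega))$ therefore does not depend on $\lambda$, so detailed balance can hold only after averaging over $\omega$. Because a single $\omega$ is shared by all $N$ sites, the averaged channel is a mixture of quasi-free channels, not a quasi-free channel. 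Hence your claims that one pair $(K_\alpha,Q_\alpha)$ governs all moments, and that the stationary state is the Gaussian state with correlation matrix $\Gamma^\ast$, are false.

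Concretely, take $H=\lambda\sum_i n_i$. The stationary state is $\mathbb{E}[v(P_\infty)^{\otimes N}]$, where $P_\infty$ is the stationary law of the random affine recursion $P\mapsto a(\omega)+(1-q(\omega))P$ driven by the common $\omega$. The map $p_{\rm b}$ must be nonconstant, otherwise every energy would relax to the same occupation; since $\hat g$ has no zeros, this gives $\mathrm{Var}(P_\infty)=\Theta(\alpha^2)$. Thus every pair of modes carries $\mathrm{Cov}(n_i,n_j)=\Theta(\alpha^2)$, which a Gaussian state with $\Gamma\propto I$ cannot have. As in the proof of \cref{thm:random_bath_lower_bound_rigor}, this yields $\|\rho_\ast-\sigma_\beta\|_1\gtrsim\min\{1,N\alpha^2\}$, forcing $\alpha^2\lesssim\epsilon/N$. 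Combined with the $\Omega(\alpha^{-2}\log(1/\epsilon))$ mixing lower bound, your algorithm has $T_{\rm total}=\Omega(N/\epsilon)$. So your Step~2 bound $\sqrt N\alpha^2$ fails, and the flaw lies in the construction, not in the analysis.

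\textbf{What is missing: detailed balance realization by realization.} You need second-order detailed balance to hold \emph{for each realization}, with no randomness shared across modes. The paper achieves this with a deterministic zero bath ($H_E=0$, vacuum $\rho_E$) and two bath families. From the vacuum, the pairing term $c_i^\dagger b_{i,+}^\dagger$ produces gain with amplitude $\hat f(\lambda)$, and $c_i b_{i,-}^\dagger$ produces loss with amplitude $\hat f(-\lambda)$. The filter is chosen so that $\hat f(-\nu)=e^{\beta\nu/2}\hat f(\nu)$, which makes the gain/loss ratio $e^{-\beta\lambda}$ up to tails. With a hopping-only coupling, the asymmetry of $\hat f$ never enters the gain/loss ratio, so the KMS-type filter relation you invoke at the end is unusable in your setup.

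The obstacle you anticipate also does not exist. By your own formula, $G_{kk'}\propto(U^\dagger U)_{kk'}\hat f(\lambda_k)\overline{\hat f(\lambda_{k'})}$ is diagonal. This reflects the exact fact exploited in \cref{prop:zero_bath}: the bath Hamiltonian and bath state are invariant under rotating the bath modes by the same $U$. The dynamics therefore splits into $N$ commuting single-mode problems, $\Phi_\alpha=\Phi_{\alpha,1}\circ\cdots\circ\Phi_{\alpha,N}$. The fixed point is then an exact product of diagonal single-mode states, with per-mode bias $O(\alpha^2)$ coming from the fourth-order terms.

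Only at that point does your relative-entropy tensorization apply; it is \cref{lem:product_state_error}, with $\sigma_j\geq CI$ and $C=\Omega(e^{-\beta\Lambda})$. That constant is where the factor $e^{\beta\Lambda/2}$ and the $\min\{\cdot,N\}$ come from. Mixing then follows from a mode-by-mode telescoping argument with per-mode error $\epsilon/N$. This replaces your monomial induction, whose polynomial-in-$N$ trace-norm prefactor you did not justify.
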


The rigorous statement of the theorem is presented in~\cref{sec:zero_bath_thermal}~\cref{thm:thermal_zero_bath}, with the proof provided in \cref{sec:proof_thermal}. The construction of the system-bath interaction algorithm is based on the choice $H_E=0$, referred to as the zero-bath setting  (see~\cref{assump:zero_bath}), together with a suitably chosen filter function $f(t)$. The design of the filter function is motivated by recent developments in Lindbladian-based algorithms~\cite{Ding_2025}. Specifically, $f(t)$ is chosen to ensure that the leading-order dissipative generator in the infinite-window limit satisfies the Kubo–Martin–Schwinger (KMS) detailed-balance condition.

In contrast to previous system–bath interaction algorithms and analyses~\cite{ding2025endtoendefficientquantumthermal,wang2025lindblad}, our algorithm employs $\Theta(N)$ ancilla bath qubits to accelerate the mixing of the quantum channel $\Phi_{\alpha}$. However, the increased number of system–bath coupling operators introduces additional perturbation errors, which require the coupling strength $\alpha$ to remain sufficiently small. Consequently, a direct extension of previous analyses still yields a complexity that scales at least linearly with $N$.

To control these errors, we use a transformation that decomposes the total system--bath Hamiltonian into $N$ independent fermionic modes. This decoupling enables us to analyze the perturbation error of each mode separately using the fermionic tensor-product structure, thereby allowing a larger coupling strength $\alpha$ and achieving sublinear-in-$N$ complexity. To the best of our knowledge, it remains unclear whether Lindbladian-based algorithms~\cite{ChenKastoryanoGilyen2023,Ding_2025} can achieve a similar sublinear-in-$N$ complexity through such decoupling techniques, as their implementations still require $\Theta(N)$ jump operators in this setting.


Our second main result concerns ground state preparation, which is summarized as follows:
\begin{thm}[Polylogarithmic ground state preparation for free fermions]\label{thm:main_ground}
Assume $\max_k|\lambda_k|=\mathcal O(1)$ and $\Delta:=\min_k|\lambda_k|>0$. There exists a system-bath interaction algorithm $\Phi_{\alpha}$ with $T=\widetilde{\Theta}(\Delta^{-1})$, such that when $m=\widetilde{\Theta}\left(\Delta^{-1}\mathrm{polylog}(N/\epsilon)\right)$, $\sup_{\rho}\left\|\Phi_{\alpha}^m(\rho)-\ket{\psi_0}\bra{\psi_0}\right\|_1\leq \epsilon$. In particular, to prepare the ground state $\ket{\psi_0}$ with $\epsilon$-accuracy, the total Hamiltonian simulation time is $T_{\rm total}=\widetilde{\mathcal{O}}(\Delta^{-2})$.
\end{thm}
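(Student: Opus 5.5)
The construction is a random-bath version of the thermal algorithm. The bath consists of $N$ fermionic modes $b_1,\dots,b_N$; by Jordan--Wigner these are $N$ bath qubits. The bath Hamiltonian is $H_E=\omega\sum_j b_j^\dagger b_j$, where $\omega$ is a random frequency shared by all bath modes, drawn from a distribution supported on $[-\Lambda-1,\Lambda+1]$ (or a symmetric interval covering the spectrum). Each bath mode starts in the vacuum, so $\rho_E=\ketbra{0}$. The interaction is $H_{\rm int}(t)=f(t)\sum_j c_j^\dagger b_j+\mathrm{h.c.}$, which is local and independent of $h$, with a filter $f$ of width $T=\widetilde\Theta(\Delta^{-1})$.

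\textbf{Step 1: reduce to independent mode pairs.} Since $\sum_j c_j^\dagger b_j=\sum_k d_k^\dagger \tilde b_k$ with $\tilde b_k=\sum_j (U^\dagger)_{kj}b_j$, and $H_E$ and $\rho_E$ are invariant under the bath rotation $b\mapsto U^\dagger b$, the joint quadratic Hamiltonian splits as a sum over $k$ of two-mode Hamiltonians in $(d_k,\tilde b_k)$. These commute across different $k$. This is the same decoupling used for \cref{thm:main_thermal}. Conditioned on $\omega$, the channel therefore factorizes in the fermionic tensor-product sense into single-mode channels $\Phi^{(k)}_\omega$, and each one depends only on $\lambda_k-\omega$.

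\textbf{Step 2: analyze a single mode pair.} Each pair evolves in an explicit $2\times 2$ single-particle dynamics, i.e.\ a Landau--Zener/Rabi problem with detuning $\lambda_k-\omega$. Because the bath starts empty, $\Phi^{(k)}_\omega$ is an amplitude-damping-type channel. It moves population from $d_k^\dagger d_k=1$ to $0$ with probability $p_k(\omega)$, plus phase damping. To target the ground state, modes with $\lambda_k<0$ must be filled rather than emptied. I would handle this with a particle--hole choice of coupling: use $c_j b_j$ terms alongside $c_j^\dagger b_j$, so that resonance with $\omega>0$ empties positive modes and resonance with $\omega<0$ fills negative modes. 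Equivalently, split the coupling by the sign of $\omega$. The filter $f$ has energy resolution $\sim T^{-1}\ll\Delta$, so transitions in the wrong direction are suppressed to $e^{-\Omega(T\Delta)}$, which is $\mathrm{poly}(\epsilon/N)$ for $T=\widetilde\Theta(\Delta^{-1})$. With $\alpha=\Theta(T^{-1})$, once we average over $\omega$ each mode is hit resonantly with probability $\Theta(\Delta)$, which gives $\mathbb E_\omega p_k=\Omega(\Delta)$. The key point is that $\alpha$ need not shrink with $N$, because the modes do not interfere.

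\textbf{Step 3: handle the shared randomness and conclude.} The average over $\omega$ does not factorize across modes, so $\Phi_\alpha$ is a mixture of product channels. However, $\ketbra{\psi_0}$ is an approximate common fixed point of every $\Phi_\omega$, up to $\mathrm{poly}(\epsilon/N)$ leakage. I would therefore track the occupation ``wrong-number'' operator $\mathcal N=\sum_{\lambda_k>0}d_k^\dagger d_k+\sum_{\lambda_k<0}d_kd_k^\dagger$ and show $\Tr(\mathcal N\Phi^m(\rho))\le N(1-c\Delta)^m+m\,\mathrm{poly}(\epsilon/N)$. Since $1-\langle\psi_0|\rho|\psi_0\rangle\le\Tr(\mathcal N\rho)$, this yields trace-norm error $\epsilon$ for $m=\widetilde\Theta(\Delta^{-1}\log(N/\epsilon))$. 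Coherences between sectors are controlled by the gentle measurement lemma. The total time is $2Tm=\widetilde{\mathcal O}(\Delta^{-2})$.

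\textbf{Main obstacle.} The hardest step is Step 2: getting a uniform per-mode bound with $\mathbb E_\omega p_k=\Omega(\Delta)$ together with exponentially small reverse transitions and heating for the finite-$T$ filter, beyond leading order in $\alpha$. I would attack this with the exact $2\times2$ propagator and a Gaussian filter $f$.
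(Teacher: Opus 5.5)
\paragraph{Verdict.} Your scaffolding matches the paper's. You rotate the bath together with the system so the quadratic generator splits into commuting one-mode pieces. You note that the $\omega$-average does not factorize, so you track the wrong-occupation operator in the Heisenberg picture; this works because each $\Phi_{\omega,j}^\dagger$ fixes $n_k$ for $k\neq j$. You finish with Fuchs--van de Graaf. The genuine gap is the bath initial state, and it breaks Step 2 and hence the leakage bound in Step 3.

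\paragraph{Why the vacuum bath fails.} You fix $\rho_E=\ketbra{0}$ for every $\omega$, with $\omega$ on a symmetric interval. An empty mode at negative frequency is population-inverted, so the wrong-direction transitions are \emph{resonant}, not $e^{-\Omega(T\Delta)}$-suppressed.
\begin{itemize}
\item Hopping coupling $d_k^\dagger\tilde b_k+\mathrm{h.c.}$ with an empty bath: the only allowed process is $\ket{1}_d\ket{0}_b\to\ket{0}_d\ket{1}_b$, resonant at $\omega\approx\lambda_k$. For $\lambda_k<0$ it therefore empties a negative-energy mode whenever $\omega\approx\lambda_k<0$.
\item Pairing remedy $d_k^\dagger\hat b_k^\dagger$: your sign is reversed. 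This term carries the phase $e^{i(\lambda_k+\omega)t}$, so it is resonant at $\omega\approx-\lambda_k$. With an empty bath it fills negative modes only for $\omega>0$; for $\omega<0$ it fills \emph{positive} modes.
\end{itemize}
Whether you add both terms or ``split by the sign of $\omega$'' (these are not equivalent), the $\omega<0$ half of the distribution heats the system. It does so at the same $\widetilde\Theta(\Delta)$ rate at which the other half cools it. So $\ketbra{\psi_0}$ is not an approximate common fixed point, and the $m\,\mathrm{poly}(\epsilon/N)$ leakage term in Step 3 fails.

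\paragraph{The paper's fix.} The paper takes $\rho_E$ to be the ground state of $H_E$: vacuum if $\omega>0$, fully occupied if $\omega<0$. It uses a single coupling type, $c_i^\dagger b_i^\dagger+c_ib_i$, with $\omega$ uniform on $[-S,S]$ and a Gaussian filter. Then for $\omega>0$ the one-mode channel can only fill and for $\omega<0$ it can only empty. Both are resonant at $\omega\approx-\lambda_j$, which is automatically the cooling direction, while the reverse direction is detuned by at least $\Delta$. Your hopping coupling with this same bath state would also work, resonant at $\omega\approx\lambda_j$.

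\paragraph{The alternative fix and its cost.} You could instead restrict $\omega$ to $(0,S]$ with an empty bath and use both hopping and pairing. That requires two independent bath families, as with $b_{j,\pm}$ in the zero-bath assumption. On a shared family, $\sum_jc_j^\dagger b_j$ couples $d_k$ to $\sum_jU^*_{jk}b_j$, whereas $\sum_jc_jb_j$ couples it to $\sum_jU_{jk}b_j$. For complex $U$ these are different modes, and the pairing partner of $d_k$ generally overlaps the hopping partners of $d_l$, $l\neq k$. The Hamiltonian then no longer splits into commuting one-mode blocks, so your Step 1 decoupling breaks.

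\paragraph{After the correction.} With the bath state corrected, the rest of your plan goes through. The paper controls the beyond-leading-order reverse transitions by Dyson-series bounds of the form $\sum_i\frac{(C_0\alpha\sqrt\sigma)^{2i-1}}{(2i-1)!}e^{-\sigma^2\Delta^2/(2i-1)}$, with $\alpha\sqrt\sigma$ a small constant. This replaces the exact $2\times 2$ propagator you suggest.
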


A detailed version of the above theorem is put in \cref{sec:state_random_bath}~\cref{thm:random_bath_ground} with the proof provided in \cref{sec:pf_mixing_time_random_bath_ground}. The construction of the system-bath interaction algorithm is based on a random-bath setting (see~\cref{assump:zero_bath}), where $H_E$ is chosen to be a random diagonal matrix and the filter function $f(t)$ is chosen properly.

We note that achieving the $\mathrm{polylog}(1/\epsilon)$ dependence requires a sufficiently large interaction strength $\alpha=\widetilde{\Theta}(\sqrt{\Delta})$, for which the resulting quantum channel $\Phi_\alpha$ can no longer be accurately approximated by Lindbladian evolution~\cite{wang2025lindblad}. The analysis therefore requires going beyond the conventional Lindblad approximation for open quantum systems. In particular, it is necessary to account for all higher-order terms in the Dyson expansion and establish that their contributions preserve the ground state to the desired accuracy without introducing additional system-size dependencies.

{\it Zero-bath and random-bath settings} -- In this work, we consider two classes of engineered baths: the zero-bath and random-bath settings.


\begin{assumption}[Zero bath and random bath]\label{assump:zero_bath}
For the system-bath interaction terms, we choose
\[
H_{\rm int}(t)=f(t)\sum^N_{i=1}\left(c^\dagger_i\otimes b^\dagger_{i,+}+c_i\otimes b^\dagger_{i,-}\right)+\mathrm{h.c.}\,.
\]
Here, $\{b^\dagger_{j,+},b_{j,+}\}^N_{j=1}$ and $\{b^\dagger_{j,-},b_{j,-}\}^N_{j=1}$ are two families of fermionic creation and annihilation operators on a single $2N$-mode fermionic bath and commute with system operators. In addition,
\begin{itemize}
  \item (Zero-bath) $H_E=0$ and $\rho_E=\ket{0}_+\bra{0}_+\otimes \ket{0}_-\bra{0}_-$, where $\ket{0}_+$ and $\ket{0}_-$ are the vacuum states of the two mode families.
  \item (Random-bath) $H_E=\omega\sum^{N}_{i=1}\left(b^\dagger_{i,+}b_{i,+}+b^\dagger_{i,-}b_{i,-}\right)$ and
  $
\rho_E=
\left[
(1-p_{\rm b}(\omega))|0\rangle\langle0|
+p_{\rm b}(\omega)|1\rangle\langle1|
\right]^{\otimes 2N}.
$ Here, \(\omega\) is sampled from a distribution \(\mu\), and \(p_{\rm b}(\omega)\in[0,1]\).
\end{itemize}
\end{assumption}

Although the above assumptions differ only in the choices of $H_E$ and $\rho_E$, their analyses are very different. In the random-bath setting, the additional randomness provides greater flexibility in designing the filter function $f(t)$. In particular, the choice of $f$ in~\cref{assump:zero_bath_ground} ensures that the Dyson expansion of the quantum channel $\Phi_\alpha$ preserves the target state to arbitrarily high order in the interaction strength $\alpha$. Consequently, at fixed spectral gap, we can choose a coupling strength $\alpha=\tilde{\Theta}(1)$ while maintaining arbitrarily high accuracy in preserving the target state. Moreover, the mixing time of $\Phi_\alpha$ remains $\tilde{\Theta}(1)$. Together, these properties lead to the polylogarithmic complexity established in~\cref{thm:main_ground}.
However, the same advantage does not extend to thermal state preparation: randomness shared across the bath qubits introduces additional fluctuations that obstruct sublinear complexity, as shown in~\cref{thm:random_bath_lower_bound}. By contrast, although the zero-bath setting provides less flexibility in filter design, it avoids the fluctuations induced by a shared random bath frequency. This makes it possible to achieve thermal state preparation with complexity $\widetilde{\mathcal{O}}(\sqrt{N}/\epsilon)$ in~\cref{thm:main_thermal}, yielding a polynomial improvement over existing dissipative algorithms.

{\it Lower bounds for thermal state preparation} -- As established in~\cref{thm:main_thermal,thm:main_ground}, thermal state preparation can be achieved with sublinear total Hamiltonian simulation time, although its system-size dependence remains less favorable than the polylogarithmic scaling achieved for ground state preparation. To explore the possibility of further improvements and identify the limitations of existing choices of system-bath interaction algorithms, we establish: (i). Two lower bounds for thermal state preparation in the zero-bath and random-bath settings, presented in~\cref{thm:zero_bath_lower_bound_new_1,thm:random_bath_lower_bound}, respectively. (ii). A \emph{necessary and sufficient condition} for achieving improved complexity in the zero-bath setting, presented in~\cref{thm:zero_bath_lower_bound_rigor}. For the simplicity of presentation, in this section, we assume $\beta=1$ and $\|h\|\leq 1$ and only introduce the two lower bounds result. We put~\cref{thm:zero_bath_lower_bound_rigor} and technical details in~\cref{sec:zero_bath_lower_bound}.

We start with zero-bath setting. Our first result shows that the filter functions in~\cref{thm:main_thermal} give a lower bound that matches the upper bound in~\cref{thm:main_thermal} up to polylogarithmic factors.

\begin{thm}[Lower bound result for zero-bath]\label{thm:zero_bath_lower_bound_new_1}
For $N\ge2$, consider the system-bath interaction algorithm under~\cref{assump:zero_bath}. If the filter function $f(t)$ is chosen as~\cref{thm:main_thermal} (\cref{thm:thermal_zero_bath}), then, whenever $T^{-1}$, $\alpha$, and $\epsilon$ are smaller than sufficiently small constants independent of $N$ and
$\int_{|t|>T/2}|f(t)|\mathrm{d}t=\mathcal{O}(\alpha^4)$ with a fixed implicit constant, achieving $\epsilon$-accuracy for thermal state preparation requires, in the worst case,
$T_{\rm total}=\Omega(\sqrt{N}/\epsilon)$.
\end{thm}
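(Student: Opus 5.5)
The plan is to exploit the quadratic structure of the joint Hamiltonian and reduce the dynamics to a single-mode problem. In the eigenbasis of $h$ from~\eqref{eqn:H_diagonal}, I define rotated bath modes $\tilde b_{k,\pm}=\sum_j (U^\dagger)_{k,j} b_{j,\pm}$ (with the conjugate rotation for the $-$ family). The interaction in~\cref{assump:zero_bath} is invariant under a common unitary rotation of system and bath modes, so it becomes $f(t)\sum_k (d_k^\dagger\tilde b_{k,+}^\dagger + d_k\tilde b_{k,-}^\dagger)+\mathrm{h.c.}$. Since $H_E=0$, the joint Hamiltonian splits into $N$ commuting quadratic blocks, one per mode $k$, each coupling $d_k$ to the pair $\tilde b_{k,\pm}$. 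The vacuum bath is invariant under this rotation. Hence $\Phi_\alpha$ restricted to the relevant observables acts independently on each mode, with a single-mode channel $\Phi^{(k)}$ depending only on $\lambda_k$. Its action on the occupation $n_k=\langle d_k^\dagger d_k\rangle$ is affine:
\[
n_k\mapsto (1-\gamma_k)n_k+\gamma_k\bar n_k,
\]
where $\gamma_k=\alpha^2 g(\lambda_k)+\mathcal O(\alpha^4)$ and $\bar n_k$ is the fixed point. Writing the Dyson expansion for a single mode, the second-order term is KMS-balanced by the choice of $f$ in~\cref{thm:thermal_zero_bath}. Therefore $\bar n_k=n_\beta(\lambda_k)+\delta_k$, with $\delta_k=c(\lambda_k)\alpha^2+\mathcal O(\alpha^4)$ after the truncation tails are included. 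The tail assumption $\int_{|t|>T/2}|f|=\mathcal O(\alpha^4)$ ensures finite-$T$ effects do not cancel the $\alpha^2$ deviation. This is the ``fourth-order deviation'' in the abstract: the fixed point is off by order $\alpha^4$ relative to the rate $\alpha^2$, i.e., $\delta_k/\gamma_k\sim\alpha^2$.

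Next I choose a worst-case instance and initial state. I take $h$ with all $\lambda_k$ equal to a value $\lambda_*$ where $c(\lambda_*)\neq 0$; this needs a short computation showing that the fourth-order correction is generically nonzero for the given $f$. Then the fixed point $\tilde\sigma$ of $\Phi_\alpha$ is a product Gaussian state, with each mode's occupation shifted by $\delta\sim\alpha^2$ relative to $\sigma_\beta$. The trace distance between two $N$-fold product qubit-like states whose single-mode Bernoulli parameters differ by $\delta$ is $\Theta(\min\{1,\sqrt N\delta\})$; I will bound it below via the Hellinger/fidelity product formula. Hence $\epsilon$-accuracy at stationarity forces $\alpha^2\lesssim \epsilon/\sqrt N$. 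Separately, I start from the occupation-vector state $n_k(0)=1-n_\beta$, i.e., the fully filled or empty state, whose distance to $\sigma_\beta$ is $1-o(1)$. After $m$ steps each occupation deviation is $(1-\gamma)^m$ times an $\Theta(1)$ number, so by the same product-state bound we need $\sqrt N(1-\gamma)^m\lesssim\epsilon$. This gives $m\gtrsim \gamma^{-1}\log(\sqrt N/\epsilon)\gtrsim \alpha^{-2}$. Using $\gamma\lesssim\alpha^2 T$, or simply $\gamma\lesssim\alpha^2\|f\|_1^2$ with the normalization of $f$, yields $T_{\rm total}=2Tm\gtrsim \alpha^{-2}\gtrsim\sqrt N/\epsilon$. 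The final error is at least the larger of the stationary bias and the transient, so both constraints hold simultaneously.

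The main obstacle is the fourth-order computation: showing that $c(\lambda_*)\neq0$ uniformly for the filter of~\cref{thm:thermal_zero_bath}, and that higher orders and the tail do not cancel it when $\alpha,T^{-1}$ are below fixed constants. I would do this explicitly on the $3$-mode (one system, two bath) free-fermion problem. There the single-mode channel is computable from the $3\times3$ time-ordered single-particle propagator. I would expand $\langle d^\dagger d\rangle$ exactly to order $\alpha^4$ in terms of double and quadruple time integrals of $f$. For the remaining order-$\alpha^6$ terms I would use a crude bound by $\|f\|_1^6$. A secondary technical point is that trace distance must be lower-bounded for product states with a $\sqrt N$ gain. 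For this I would use fidelity multiplicativity, $F(\rho^{\otimes N},\sigma^{\otimes N})=F^N$, together with Fuchs--van de Graaf; this gives distance $\ge 1-(1-c\delta^2)^{N/2}\gtrsim \sqrt N\delta$ when $\sqrt N\delta\le1$, via a measurement of total particle number.
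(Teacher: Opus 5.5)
Your skeleton matches the paper's: mode decoupling, the affine occupation update, an $\mathcal O(\alpha^2)$ stationary bias amplified by $\sqrt N$, and $m\gtrsim\alpha^{-2}$. However, the step you defer is the actual content of the statement. You need $c(\lambda_*)\neq0$, i.e.\ $\Delta_4(\lambda_*)\neq0$, for the filter of \cref{thm:thermal_zero_bath}, and that filter contains an \emph{arbitrary} admissible Gevrey cutoff. Expanding the three-mode problem to order $\alpha^4$ only yields the formula $\Delta_4=(1-p)G(\lambda)-pG(-\lambda)$; it does not show that this is nonzero. ``Generically nonzero'' is not an argument: the paper stresses that it is open whether \emph{some} filter satisfies the fourth-order condition, so failure must be proved for every admissible cutoff in this family.

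The paper does this in \cref{lem:zero_bath_lower_bound_rigor}. Writing $\hat f=e^{-\nu/4}q$ with $q$ even, the fourth-order condition becomes $D(\nu)=e^{\nu/4}\operatorname{Re}J(\nu)-e^{-\nu/4}\operatorname{Re}J(-\nu)\equiv0$. Since $D(0)=0$, it suffices to prove $D'(0)>0$. This follows from three ingredients: an explicit integral formula for $D'(0)$; an opposite-monotonicity (Chebyshev-type) inequality bounding the relevant ratio by $\tau(x)$; and the bound $-q'(x)\ge m(x)q(x)>\tau(x)q(x)$ on $(0,3]$, which uses the nonincreasing cutoff and $S\le3$.

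You also overlook that the filter depends on $h$ through $S=\Lambda+1$. With all $\lambda_k=\lambda_*$, the filter itself moves with $\lambda_*$, so you would need nonvanishing for the filter with $S=1+|\lambda_*|$. The paper avoids this by pinning $S=2$: one mode sits at energy $1$ and the other $N-1$ modes sit at a small $\lambda\neq0$, where $D(\lambda)\neq0$ follows from $D(0)=0$ and $D'(0)>0$.

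There is a second concrete error in your trace-distance lower bound. For Bernoulli parameters differing by $\delta$, the fidelity is $F=1-c\delta^2+\dots$. Fuchs--van de Graaf then gives only $\tfrac12\|\rho^{\otimes N}-\sigma^{\otimes N}\|_1\ge 1-F^N\approx cN\delta^2$. Indeed, for $\sqrt N\delta\le1$ one has $1-(1-c\delta^2)^{N/2}\le cN\delta^2/2\ll\sqrt N\delta$, so your claimed inequality fails. Used as written, the stationarity constraint becomes $N\alpha^4\lesssim\epsilon$, which yields only $T_{\rm total}\gtrsim\sqrt{N/\epsilon}$.

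The linear bound needs a classical binomial comparison instead. For diagonal product states, the trace distance equals the total variation distance between $\mathrm{Bin}(N,r)$ and $\mathrm{Bin}(N,p)$, because total particle number is a sufficient statistic. That distance is at least $c_p\min\{1,\sqrt N|r-p|\}$, via a threshold test or the characteristic function of $e^{i\theta\sum_k n_k}$ with $\theta\sim N^{-1/2}$. This is the paper's \eqref{eq:zero_bath_binomial_shift}.

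Finally, ``the final error is at least the larger of the bias and the transient'' is not automatic for a single initial state, since the two can cancel. The paper instead compares the outputs from $v(0)^{\otimes N}$ and $v(1)^{\otimes N}$. Their first-mode occupations differ by exactly $(1-q)^m$, which forces $m\gtrsim\alpha^{-2}\log(1/\epsilon)$ regardless of the bias.
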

While~\cref{thm:zero_bath_lower_bound_new_1} establishes the sharpness of the upper bound in~\cref{thm:main_thermal} for the specific filter functions used in~\cref{thm:main_thermal}, it won't rule out the possibility of achieving improved complexity with a different choice of filter function. To complete the whole story, we establish a \emph{necessary and sufficient condition} for achieving improved complexity in the zero-bath setting for general free fermion Hamiltonians in~\cref{thm:zero_bath_lower_bound_rigor} of~\cref{sec:zero_bath_lower_bound}, thereby providing a precise characterization of the design constraints on the filter function $f(t)$. In \cref{cond:zero_bath_improvement} of ~\cref{sec:zero_bath_lower_bound}, we will show that the necessary condition for achieving improved complexity is to satisfy the detailed balance condition for higher order (fourth-order) terms in the interaction strength $\alpha$, which adds another constraint to the design of the filter function $f(t)$. Unfortunately, it remains an open question whether there exists a filter function that satisfies the improvement condition.

Next, we consider the random-bath setting, where the additional random frequency $\omega$ provides greater flexibility in designing the filter function $f(t)$. Specifically, given $f$, we define the leading order of the transition rates as follows:
\begin{equation}\label{eqn:u_d_definition}
\begin{aligned}
&u(\omega;\lambda)
=\bigl(1-p_{\rm b}(\omega)\bigr)
|\widehat f(\lambda+\omega)|^2
+p_{\rm b}(\omega)
|\widehat f(\omega-\lambda)|^2,\\
&d(\omega;\lambda)
=\bigl(1-p_{\rm b}(\omega)\bigr)
|\widehat f(\omega-\lambda)|^2
+p_{\rm b}(\omega)
|\widehat f(\lambda+\omega)|^2.
\end{aligned}
\end{equation}
Similar to the calculation in the proof of~\cref{thm:main_ground}, to make sure the fixed point is an approximately thermal state when $\alpha$ is small, we need the leading order detailed-balance conditions:
\begin{equation}\label{eqn:random_bath_second_order_thermal_condition}
\frac{\int_{-\infty}^{\infty} u(\omega;\lambda) d\mu(\omega)}{\int_{-\infty}^{\infty} d(\omega;\lambda) d\mu(\omega)} = \exp(-\lambda)\,.
\end{equation}
Different from zero-bath case, thermal state preparation in this case does not require $f(t)$ to satisfy the thermal equilibrium condition for each individual realization of $\omega$; instead, it suffices for the condition to hold after averaging over the random frequency $\omega$. Unfortunately, in the following theorem, we will demonstrate that this flexibility does not necessarily lead to an improved complexity:
\begin{thm}\label{thm:random_bath_lower_bound}
For $N\ge2$, set the system bath interaction algorithm with~\cref{assump:zero_bath} (random-bath) and $f(t)\in L^1(\mathbb{R})$ such that
$\mathbb{E}_{\omega}\left(u(\omega;\lambda)\right)>0$ and $\mathbb{E}_{\omega}\left(d(\omega;\lambda)\right)>0$ for all $\lambda\in[-1,1]$. If, for some $\lambda\in[-1,1]$, $u(\omega;\lambda)\ne e^{-\lambda}d(\omega;\lambda)$ on a set of positive $\mu$-measure, then, whenever $T^{-1}$, $\alpha$, and $\epsilon$ are smaller than sufficiently small constants independent of $N$ and
$\int_{|t|>T/2}|f(t)|\mathrm{d}t=\mathcal{O}(\alpha^2)$, achieving $\epsilon$-accuracy for thermal state preparation requires, in the worst case over this class of diagonal Hamiltonians,
$T_{\rm total}=\Omega(N/\epsilon)$.
\end{thm}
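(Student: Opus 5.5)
We reduce to a diagonal Hamiltonian $h=\mathrm{diag}(\lambda,\dots,\lambda)$ (or $\lambda_k\equiv\lambda$ for the bad value $\lambda$ in the hypothesis), so every system mode $c_i$ couples only to its own pair $b_{i,\pm}$. Conditioned on the random frequency $\omega$, the joint evolution then factorizes into $N$ identical single-mode problems, and the channel has the form
\[
\Phi_\alpha=\mathbb{E}_\omega\bigl[\mathcal{K}_\omega^{\otimes N}\bigr],
\]
where $\mathcal{K}_\omega$ is a single-mode channel on one fermionic mode. The parity superselection makes this tensor structure legitimate, and the same decomposition is used in the proof of \cref{thm:main_ground}. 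Dyson-expanding $\mathcal{K}_\omega$ to second order gives a two-level Markov chain on the occupation of mode $i$. The rates are
\[
\alpha^2 u(\omega;\lambda)+\mathcal{O}(\alpha^4)\ \ (0\to1),\qquad \alpha^2 d(\omega;\lambda)+\mathcal{O}(\alpha^4)\ \ (1\to 0),
\]
up to the tail error controlled by $\int_{|t|>T/2}|f|=\mathcal{O}(\alpha^2)$. Coherences decouple from populations and decay. Write $p_\omega=u/(u+d)$ for the per-realization stationary occupation and $r_\omega=\alpha^2(u+d)$ for the relaxation rate.

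\textbf{Step 1: the fixed point is a mixture, not a product.} Restricted to diagonal occupation states, starting from vacuum, after $m$ steps
\[
\Phi_\alpha^m(\rho)=\mathbb{E}_{\omega_1,\dots,\omega_m}\Bigl[\textstyle\bigotimes_i \mathrm{diag}\bigl(1-q_m,\,q_m\bigr)\Bigr],
\]
where $q_m$ depends on the whole sequence $(\omega_1,\dots,\omega_m)$. This sequence is common to all modes because $\omega$ is shared across the bath. The target is the product $\mathrm{diag}(1-p^*,p^*)^{\otimes N}$ with $p^*=e^{-\lambda}/(1+e^{-\lambda})$.

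\textbf{Step 2: lower-bound the trace distance by a collective observable.} Use the total number operator $\mathcal{N}=\sum_i c_i^\dagger c_i$. Under the target, $\mathcal{N}$ is Binomial$(N,p^*)$. Under the algorithm's output, $\mathcal{N}$ is a mixture of binomials with random success probability $q_m$. Since $q_m$ follows the linear recursion
\[
q\mapsto(1-r_\omega)q+r_\omega p_\omega ,
\]
$q_m$ is an average of roughly $1/\bar r=\Theta(\alpha^{-2})$ independent contributions in its stationary regime. Its variance is therefore $\mathrm{Var}(q_m)\approx c\,\alpha^2\,\mathrm{Var}_\mu(\cdot)$, and this is strictly positive because $p_\omega$ is non-constant on a set of positive measure. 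Indeed, $u\neq e^{-\lambda}d$ on a positive-measure set while the averaged ratio equals $e^{-\lambda}$, which forces non-constancy; if instead the averaged condition fails, there is already an $\Omega(\alpha^2)$ bias in the mean. In either case, a bias or spread $\delta$ in $q$ becomes distinguishable from Binomial$(N,p^*)$ once $N\delta^2\gtrsim 1$, since the binomial has standard deviation $\sqrt N$. Concretely, the total variation distance of the $\mathcal{N}$-distributions is $\Omega(\min\{1,\sqrt N\,\delta\})$. With $\delta\sim\alpha$, $\epsilon$-accuracy forces $\alpha\lesssim\epsilon/\sqrt N$.

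\textbf{Step 3: mixing time.} Starting from vacuum versus the fully occupied state, $\mathbb{E}q_m$ relaxes at rate $\bar r=\Theta(\alpha^2)$, using $\mathbb{E}u,\mathbb{E}d>0$. Reaching $\epsilon$-accuracy therefore needs $m\gtrsim\alpha^{-2}$, up to logarithmic factors, or $m\gtrsim\alpha^{-2}\cdot$const for a constant-size mismatch in the mean. Combining with Step 2 gives
\[
m=\Omega(N/\epsilon^2)\quad\text{or, more carefully,}\quad m=\Omega(N/\epsilon).
\]
To get the stated $N/\epsilon$ rather than $N/\epsilon^2$, I would balance more carefully. The transient distance $\sqrt N\,|\mathbb{E}q_m-p^*|\sim\sqrt N e^{-\bar r m}$ must itself be below $\epsilon$, which requires $m\gtrsim\alpha^{-2}\log(\sqrt N/\epsilon)$. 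Together with the constraint $\alpha^2\lesssim\epsilon/N$, obtained by comparing the fluctuation-induced total variation $\sim N\,\mathrm{Var}(q)$ against $\epsilon$ in the regime where variance rather than standard deviation controls the distance between a binomial mixture and a binomial, this yields $m\gtrsim N/\epsilon$. Since $T\geq$ const, we get $T_{\rm total}=2Tm=\Omega(N/\epsilon)$.

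\textbf{Main obstacle.} The main obstacle is Step 2 made rigorous: obtaining a sharp lower bound on the trace distance between a mixture of binomials and a binomial when the mixing variance is $\Theta(\alpha^2)$. The sharp bound is order $N\mathrm{Var}(q)$ when this is small, and it should be proved via a second-moment or Hellinger argument on $\mathcal{N}$. This has to be done uniformly while controlling the $\mathcal{O}(\alpha^4)$ and filter-tail corrections to $p_\omega$ and $r_\omega$, which enter only as lower-order perturbations because $T^{-1}$ and $\alpha$ are small constants.
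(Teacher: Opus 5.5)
\textbf{Verdict: genuine gap.} Your architecture matches the paper's:
\begin{itemize}
\item reduce to $H=\lambda\sum_j n_j$;
\item note that, conditioned on the frequency history, the state is $v(P_t)^{\otimes N}$, so the fixed point is the binomial mixture $\rho_*=\mathbb{E}[v(P_\infty)^{\otimes N}]$;
\item show $\operatorname{Var}(P_\infty)=\Theta(\alpha^2)$, which is positive because $u-p_1Q=(u-e^{-\lambda}d)/(1+e^{-\lambda})$ is nonzero on a set of positive $\mu$-measure;
\item deduce $\alpha^2\lesssim\epsilon/N$ and combine with $m\gtrsim\alpha^{-2}$.
\end{itemize}
However, the step that produces the factor $N$ is exactly the one you leave open, and what you assert there is wrong.

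In Step~2 you apply the mean-shift bound $\Omega(\min\{1,\sqrt N\,\delta\})$ to a mean-preserving spread of standard deviation $\delta\sim\alpha$. For a spread the first-order term cancels (take $P=p^*\pm\delta$ with probability $1/2$ each), and the distance is only of order $N\delta^2$. This is why you end with the inconsistent ``$m=\Omega(N/\epsilon^2)$ or, more carefully, $m=\Omega(N/\epsilon)$''. Separately, if the averaged condition fails, the bias of the mean is $\Omega(1)$, not $\Omega(\alpha^2)$. The fixed point is then a constant distance from $\sigma_1(H)$, and the claim is trivial.

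\textbf{The missing idea is tail control of $P_\infty$.} The corrected target $\|\rho_*-\sigma_1(H)\|_1\gtrsim\min\{1,N\,\mathbb{E}(P_\infty-p^*)^2\}$ does not follow from the second moment of $P_\infty$ alone. So an argument that only uses $\operatorname{Var}(P_\infty)$, such as the second-moment or Hellinger argument you suggest, cannot close it. For example, let $P=p^*+\Delta$ with probability $1/N$ and $P=p^*$ otherwise. Then $N\,\mathbb{E}(P-p^*)^2=\Delta^2=\Theta(1)$, while the mixture is within trace distance $2/N$ of $v(p^*)^{\otimes N}$.

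The paper proves the mixture bound under the hypothesis $\mathbb{E}|P-\mathbb{E}P|^4\le C\operatorname{Var}(P)^2$. It tests against the bounded function $k\mapsto e^{ik\theta}$, i.e.\ $h(p)=(1+(e^{i\theta}-1)p)^N$, with $\theta=c\min\{N^{-1/2},(N\sqrt{\mathbb{E}(P-p)^2})^{-1}\}$. Two facts then do the work:
\begin{itemize}
\item The first-order term is a real multiple of $h'(p)$, while $\operatorname{Im}(h''/h')$ is of order $N\theta$. Hence a mean shift cannot cancel the variance term, and the first- and second-order terms together have modulus at least $N^2\theta^2\,\mathbb{E}(P-p)^2/32$.
\item The fourth moment controls the cubic remainder.
\end{itemize}
The paper then verifies the hypothesis from the stationarity identity $AX+B\overset{d}{=}X$, with $X=P_\infty-\mathbb{E}P_\infty$, $A=1-q(\omega)$ and $B=a(\omega)-\mathbb{E}[P_\infty]\,q(\omega)$. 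Since $\|q\|_\infty,\|B\|_\infty=\mathcal{O}(\alpha^2)$, this gives $\mathbb{E}X^4=\mathcal{O}(\alpha^4)$. Neither ingredient is in your proposal. The difficulty you flag, the $\mathcal{O}(\alpha^4)$ and tail corrections, is the routine part.

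\textbf{Mixing step.} Your mixing step again needs a mixture bound, this time to control $\sqrt N|\mathbb{E}q_m-p^*|$. The paper avoids this in two ways:
\begin{itemize}
\item It uses that $\rho_*$ is itself an admissible initial state, which directly forces $\|\rho_*-\sigma_1(H)\|_1\le\epsilon$.
\item It compares the outputs from $v(0)^{\otimes N}$ and $v(1)^{\otimes N}$ on the single occupation $n_1$. Their expectation difference is exactly $(1-\bar q)^m$, which gives $m\gtrsim\alpha^{-2}\log(1/\epsilon)$ with no further lemma.
\end{itemize}
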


The rigorous statement and proof of the above theorem are provided in~\cref{thm:random_bath_lower_bound_rigor} of~\cref{sec:state_random_bath}. This result implies that, to overcome the linear-in-$N$ complexity barrier, the filter function $\hat{f}$ and $p_{\rm b}$ must be designed so that the detailed balance conditions hold $\mu$-almost surely. Consequently, despite the additional flexibility introduced by the random bath frequencies, the design constraints on $f$ and $p_{\rm b}$ for thermal state preparation are essentially similar to those in the zero-bath setting, where $\mu=\delta_0$.

The fundamental bottleneck identified in~\cref{thm:random_bath_lower_bound} arises from the additional correlations among system qubits induced by the randomness of $\omega$. For example, consider $H=\lambda\sum_{i=1}^N c_i^\dagger c_i$. Even when~\eqref{eqn:random_bath_second_order_thermal_condition} is satisfied, the algorithm guarantees only that each single-qubit subsystem has the correct thermal marginal; it does not ensure that the joint state is a product state, as required for the target Gibbs state. In contrast, this obstruction is absent in ground-state preparation, where the target is a pure product state and correct single-qubit marginals uniquely determine the global state. As a result, ground-state preparation can fully exploit the additional flexibility provided by the random bath frequencies, achieving polylogarithmic complexity in both $N$ and $1/\epsilon$, as established in~\cref{thm:main_ground}.

{\it Conclusion and outlook} --
We have shown that an extensive engineered bath enables dissipative preparation of free-fermion thermal and ground states with sublinear total Hamiltonian simulation time, while retaining local system-bath couplings and convergence from arbitrary initial states. Specifically, the thermal state preparation algorithm achieves $\widetilde{\mathcal{O}}(\sqrt{N}/\epsilon)$ scaling at fixed inverse temperature, whereas ground state preparation algorithm requires $\widetilde{\mathcal{O}}\left(\Delta^{-2}\mathrm{polylog}(N/\epsilon)\right)$ with $\Delta$ being the spectral gap of the system. Together, these results demonstrate that simultaneous coupling to an extensive bath can yield a provable reduction in simulation time despite the need to control larger errors beyond leading order.

Our lower bound results also expose distinct obstacles to faster thermal state preparation within the protocols studied. In the zero-bath setting, preserving the target state at second order is insufficient: fourth-order corrections must also satisfy a thermal state preservation condition to improve the $\sqrt{N}/\epsilon$ scaling. For a shared random bath frequency, detailed balance only after averaging is likewise insufficient to break the linear-in-$N$ barrier; the second-order transition rates must satisfy detailed balance for almost every sampled frequency. Constructing filters that satisfy these conditions to achieve improved scaling, or proving that no such filters exist, remain open problems.

Very recently,~\cite{yu2026bathassistedcoolingresets} introduced a distinct system-bath approach that achieves a physical evolution time of $\mathrm{polylog}(N/\epsilon)$ for ground state preparation of weakly interacting fermionic systems. Rather than relying on dissipative mixing, the protocol coherently transfers the information and entropy of the initial system state into decoupled bath modes and then adiabatically transports the remaining system-bath state along a ground state path. Thus, although the construction likewise exploits local system-bath interactions, its underlying mechanism is fundamentally different from the relaxation-based mechanism considered here. An interesting direction for future work is to investigate whether some of these ideas—such as the construction of bright and dark bath sectors or the coherent storage of unwanted information in decoupled bath modes—can be incorporated into genuinely dissipative state-preparation schemes, potentially replacing the subsequent adiabatic transport with dissipative relaxation to handle more general interacting systems.

Further directions include quantifying the trade-off between bath size and simulation time for intermediate bath sizes, and determining whether suitable bath correlations or alternative coupling structures can circumvent the identified obstructions. Extending the analysis to weakly interacting fermions would test whether the advantage survives. Free fermions thus provide a controlled setting for identifying which features of engineered dissipation can support faster state preparation in interacting systems.

\section*{Declaration of AI Assistance}
Generative AI tools (ChatGPT 6.0, OpenAI) were used in the development and preparation of this manuscript. The upper-bound results and their proofs were developed independently by the authors, with AI assistance limited to language polishing, checking, and presentation. For the lower-bound results, AI tools contributed ideas and proof strategies, including the central idea underlying the proof of Theorem \ref{thm:zero_bath_lower_bound_new_1}. All AI-assisted arguments were subsequently reformulated, rigorously developed, and independently verified by the authors. The authors take full responsibility for the correctness and originality of all results, proofs, and conclusions presented in this work.

\section*{Acknowledgments}
This material is based upon work supported by the U.S. Department of Energy, Office of Science, Accelerated Research in Quantum Computing Centers, Quantum Utility through Advanced Computational Quantum Algorithms, grant no. DE-SC0025572 (L.L.). L.L. is a Simons Investigator in Mathematics.  This work was also supported in part by the University of Michigan through a startup grant of Z.D. (Z.D., K.W.) and the Van Loo Postdoctoral Fellowship (K.W.).

\bibliographystyle{apsrev4-2}
\bibliography{ref.bib}

\appendix

\setcounter{secnumdepth}{2}
\onecolumngrid
\newpage
\begin{center}
{\large\bfseries Supplementary Material for\\[0.5em]
Sublinear thermal state and ground state preparation for free fermions via system-bath interaction\par}
\end{center}
\vspace{1em}

\section{Zero-bath setting for thermal state preparation}\label{sec:zero_bath_thermal}

In this section, we consider the thermal state preparation problem in the zero-bath setting. We will provide a rigorous statement of Theorem~\ref{thm:main_thermal} for complexity upper bound and Theorem~\ref{thm:zero_bath_lower_bound_rigor} for the complexity lower bound.

We start with the upper bound result. The following theorem provides a rigorous statement of Theorem~\ref{thm:main_thermal}.
\begin{thm}[Zero bath Hamiltonian for thermal state preparation]\label{thm:thermal_zero_bath}
Under~\cref{assump:zero_bath} and $\Lambda:=\max_k|\lambda_k|=\mathcal{O}(1)$, we choose
\begin{equation}\label{eqn:f_thermal_state}
\hat{f}(\nu)=\exp\left(-\sqrt{1+\beta^2\nu^2}/4\right)\omega(\nu/S)e^{-\beta\nu/4}\,,
\end{equation}
where $S=\Lambda+1$ and $\omega$ is a Gevrey function of order $s>1$ with compact support that is nonnegative, positive for $|\nu|<1$, and nonincreasing for $\nu\ge0$, such that $\omega(\nu)=\omega(-\nu)$ and $\omega(\nu)=1$ for $\nu\in[-1/2,1/2]$ and $\omega(\nu)=0$ for $|\nu|>1$. When $T=\widetilde{\Theta}(1+\beta)$ and $\alpha=\widetilde{\Theta}(\epsilon^{1/2}/(\log^{2}(2+\beta S)\min\{N^{1/4}\exp(\beta\Lambda/4),N^{1/2}\}))$, we have
\[
\left\|\rho_{\rm fix}(\Phi_{\alpha})-\sigma_{\beta}\right\|_1<\epsilon\,,
\]
and
\[
\sup_{\rho}\left\|\Phi_{\alpha}^m(\rho)-\sigma_{\beta}\right\|_1\leq \epsilon,\quad \text{when}\quad m=\widetilde{\Theta}\left(\frac{\min\{N^{1/2}\exp(\beta\Lambda/2),N\}}{\epsilon}\right)\,.
\]
In particular, when $\beta=\Theta(1)$, to prepare the Gibbs state $\sigma_\beta$ with $\epsilon$-accuracy, the total Hamiltonian simulation time is $T_{\rm total}=\widetilde{\mathcal{O}}(\beta N^{1/2}/\epsilon)$. Here, $\widetilde{\Theta},\widetilde{\mathcal{O}}$ hide some polylogarithmic factors of $\beta,N,1/\epsilon$.
\end{thm}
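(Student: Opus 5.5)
The plan is to reduce the $N$-mode problem to $N$ independent single-mode problems and then control the product structure. Write $c_i=\sum_k U_{ik}d_k$ and define rotated bath modes $\tilde b_{k,\pm}$ through the same unitary: $\tilde b^\dagger_{k,+}=\sum_i U_{ik} b^\dagger_{i,+}$ and $\tilde b^\dagger_{k,-}=\sum_i \overline{U_{ik}} b^\dagger_{i,-}$. Then
\[
\sum_i\bigl(c_i^\dagger b^\dagger_{i,+}+c_i b^\dagger_{i,-}\bigr)=\sum_k\bigl(d_k^\dagger\tilde b^\dagger_{k,+}+d_k\tilde b^\dagger_{k,-}\bigr).
\]
Because $H_E=0$ and the vacuum $\rho_E$ is invariant under any mode unitary, the joint Hamiltonian becomes $\sum_k\bigl[\lambda_k d_k^\dagger d_k+\alpha f(t)(d_k^\dagger\tilde b^\dagger_{k,+}+d_k\tilde b^\dagger_{k,-})+\mathrm{h.c.}\bigr]$. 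This is a sum of mutually commuting even terms, each acting on the three modes $(d_k,\tilde b_{k,+},\tilde b_{k,-})$. Hence, in the fermionic tensor-product sense, $\Phi_\alpha=\bigotimes_k\Phi^{(k)}_\alpha$, and the target factorizes as $\sigma_\beta=\bigotimes_k\sigma^{(k)}$ with $\sigma^{(k)}\propto e^{-\beta\lambda_k n_k}$. Parity superselection needs some care in defining this fermionic tensor product, but each channel is parity preserving, so the Jordan--Wigner reordering is harmless.

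Next I analyze a single mode. Each $\Phi^{(k)}_\alpha$ is a channel on a qubit (the mode $d_k$), and it preserves parity and the quadratic structure. It therefore has the form of an amplitude-damping/pumping channel with dephasing. Its Dyson expansion to second order gives transition rates $\alpha^2|\hat f(-\lambda_k)|^2$ upward and $\alpha^2|\hat f(\lambda_k)|^2$ downward, up to truncation errors from the finite window $[-T,T]$. These truncation errors are controlled by the Gevrey decay of $f$, with $T=\widetilde\Theta(1+\beta)$. The choice of $\hat f$ gives $|\hat f(-\lambda)|^2/|\hat f(\lambda)|^2=e^{\beta\lambda}$ on $[-\Lambda,\Lambda]$, since $\omega(\cdot/S)$ is even and equals $1$ there. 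So the second-order fixed point is exactly thermal, and the single-mode fixed point satisfies $\|\rho_{\rm fix}(\Phi^{(k)}_\alpha)-\sigma^{(k)}\|_1=\mathcal O(\alpha^2)$, coming from the fourth-order terms. The single-mode spectral gap is $\gamma_k\gtrsim\alpha^2\min(|\hat f(\lambda_k)|^2,|\hat f(-\lambda_k)|^2)\gtrsim\alpha^2 e^{-\beta\Lambda/2}/\mathrm{polylog}$.

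The crux is getting $\sqrt N$ rather than $N$. A naive triangle inequality over the product gives a total error of $N\alpha^2$. Instead I would use a fidelity (or Hellinger/Bures) tensorization for product states. For qubit states with $\|\rho_k-\sigma_k\|=\delta_k$ and $\sigma_k$ bounded away from pure (ratio $e^{\beta\Lambda}$), the infidelity is $1-F(\rho_k,\sigma_k)\lesssim\delta_k^2 e^{\beta\Lambda}$. Then
\[
\|\otimes\rho_k-\otimes\sigma_k\|_1\lesssim\sqrt{\textstyle\sum_k(1-F_k)}\lesssim\sqrt N\,\alpha^2 e^{\beta\Lambda/2}.
\]
Alternatively, $\sqrt N\alpha^2$ can be reached via $\chi^2$ tensorization when the diagonal deviation of each mode is $\alpha^2$ relative to the Gibbs weights. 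The $\min\{\cdot,N\}$ branch reflects the fallback where one uses only the trace-norm triangle bound, $N\alpha^2$, without the $e^{\beta\Lambda}$ factor. Setting $\sqrt N\alpha^2e^{\beta\Lambda/2}\sim\epsilon$ gives the stated $\alpha$. For mixing, each factor contracts as $\|\Phi^{(k)m}(\rho_k)-\rho^{(k)}_{\rm fix}\|\le e^{-\gamma m}$. However, the initial state need not be a product, so I would bound the product channel in the Heisenberg picture. The channel $\otimes\Phi^{(k)m}-\otimes\mathcal P_k$, with $\mathcal P_k$ the replacement channel onto the fixed point, has diamond norm at most $\sum_k\|\Phi^{(k)m}-\mathcal P_k\|_\diamond\le N e^{-\gamma m}$. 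This requires $m\gtrsim\gamma^{-1}\log(N/\epsilon)=\widetilde\Theta(e^{\beta\Lambda/2}\alpha^{-2})$, which equals the stated $m$.

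The main obstacle is the fixed-point step. I need to show that the single-mode deviation at order $\alpha^4$ (and higher orders, for $\alpha$ not tiny) is $\mathcal O(\alpha^2)$ relative to the gap, uniformly in $k$. I also need it to be diagonal, so that the fidelity bound is quadratic in it. Parity preservation plus number-changing selection rules should force the single-mode fixed point to be diagonal. The remaining work is a careful Dyson-series bound on the per-iteration channel, showing $\Phi^{(k)}=\mathrm{id}+\alpha^2\mathcal L_2+\mathcal O(\alpha^4T^4\|f\|_1^4)$ with $\mathcal L_2\sigma^{(k)}=\mathcal O(\text{tail})$.
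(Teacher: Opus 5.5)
Your architecture is the same as the paper's. You rotate the bath modes so the joint Hamiltonian splits into commuting single-mode pieces. You bound each single-mode fixed-point deviation by an $\mathcal O(\alpha^4\|f\|_{L^1}^4)$ perturbation divided by an $\Theta(\alpha^2)$ gap. You tensorize with a $\chi^2$-type estimate; Lemma~\ref{lem:product_state_error} is the relative-entropy version of your fidelity bound, with the same $\min\{\cdot,\cdot\}$ fallback. You get mixing from non-product inputs by a telescoping argument with replacement channels. Computing the diagonal population rates directly, instead of going through the KMS Lindbladian of~\cite{Ding_2025}, is a legitimate and more elementary shortcut for a single mode.

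\textbf{The gap estimate is wrong as written.} You bound it below by $\alpha^2\min\{|\hat f(\lambda_k)|^2,|\hat f(-\lambda_k)|^2\}$. For this filter that minimum is $e^{-\sqrt{1+\beta^2\lambda^2}/2-\beta|\lambda|/2}\,\omega(\lambda/S)^2$, which can be as small as order $e^{-\beta\Lambda}$, not $e^{-\beta\Lambda/2}$.

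More importantly, your resulting $m\gtrsim e^{\beta\Lambda/2}\alpha^{-2}\log(N/\epsilon)$ does \emph{not} equal the stated $m$. With the stated $\alpha$ one already has $\alpha^{-2}=\widetilde\Theta(\min\{N^{1/2}e^{\beta\Lambda/2},N\}/\epsilon)$, so the theorem asserts $m=\widetilde\Theta(\alpha^{-2})$. Your bound is larger by a factor exponential in $\beta\Lambda$, which $\widetilde\Theta$ does not absorb. At $\beta=\Theta(1)$ this is only a constant, so the headline $T_{\rm total}$ survives, but the general statement does not.

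A gap that small would also undercut your fixed-point step. The per-mode error, $\alpha^4$ over the gap, would become $\alpha^2e^{\beta\Lambda/2}$ or worse rather than the $\mathcal O(\alpha^2)$ you use.

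\textbf{The fix is that the relevant rate is the sum.} On diagonal inputs the single-mode channel acts as $r\mapsto a+(1-a-z)r$. By Cauchy--Schwarz on the Kraus coefficients, coherences are multiplied by at most $\sqrt{(1-a)(1-z)}\le 1-(a+z)/2$. Since $\sqrt{1+x^2}\le 1+|x|$, for $|\lambda|\le\Lambda$ and uniformly in $\beta$,
\[
|\hat f(\lambda)|^2+|\hat f(-\lambda)|^2\ \ge\ e^{-\sqrt{1+\beta^2\lambda^2}/2+\beta|\lambda|/2}\,\omega(\lambda/S)^2\ \ge\ e^{-1/2}\,\omega(\Lambda/S)^2 .
\]
This is the paper's $\beta$-independent constant $\kappa=\Omega(1)$. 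With it, the per-mode occupation error is $\mathcal O(\alpha^2\log^4(2+\beta S))$ plus a Gevrey tail, and $m=\widetilde\Theta(\alpha^{-2})$ follows.

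\textbf{Smaller slips.}
\begin{itemize}
\item Rate labels: with $\hat f(\nu)=\int f(t)e^{i\nu t}\,\mathrm dt$, the $d_k^\dagger$ jump carries $\hat f(\lambda_k)$ (compare $L_{j,\rm c}$ in Proposition~\ref{prop:zero_bath}). So the upward rate is $\alpha^2|\hat f(\lambda_k)|^2$. Your labels are reversed, which would make the second-order fixed point anti-thermal.
\item Rotated bath modes: the conjugations are swapped. $\overline{U_{ik}}$ belongs in the $+$ family and $U_{ik}$ in the $-$ family.
\item Exact ratio on $[-\Lambda,\Lambda]$: it comes from evenness and positivity of $\omega$ on $(-1,1)$. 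It does not come from $\omega(\cdot/S)=1$ there, which fails for $\Lambda>1$.
\item Dyson remainder: it should be $\mathcal O(\alpha^4\|f\|_{L^1}^4)=\mathcal O(\alpha^4\log^4(2+\beta S))$, not $\mathcal O(\alpha^4T^4\|f\|_{L^1}^4)$. The extra $T^4=\widetilde\Theta((1+\beta)^4)$ is polynomial in $\beta$ and would not reproduce the stated $\alpha$.
\end{itemize}
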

We put the proof of the above theorem in~\cref{sec:proof_thermal}. In~\cref{thm:thermal_zero_bath}, $f$ is chosen to satisfy the condition $\hat{f}(\nu)\exp(\beta\nu/2)=\hat{f}(-\nu)$. The existence of such a filter function is guaranteed by the construction in~\cite{Ding_2025}. According to~\cite[Section 3]{Ding_2025}, this guarantees that the approximate Lindbladian operator for $\Phi_{\alpha}$ can satisfy the KMS detailed balance condition with respect to the Gibbs state $\sigma_\beta$ up to a coherent term that commutes with the system Hamiltonian $H$.

\subsection{Lower bound for zero-bath setting}\label{sec:zero_bath_lower_bound}

In this section, we discuss the lower bound result for thermal state preparation in the zero-bath setting. As mentioned in main text, we assume $\beta=1$ and $\|h\|\leq 1$ for simplicity. We first provide conditions on the filter function $f(t)$ for achieving improved complexity.

\begin{condition}[Improvement condition for zero-bath setting]\label{cond:zero_bath_improvement}
Define $\hat{f}(\nu)=\int^\infty_{-\infty}f(t)\exp(i\nu t)\mathrm{d}t$. Given $\lambda\in [-1,1]$, we say $f$ satisfies \emph{the $\lambda$-improvement condition} if the following two conditions hold:
\begin{itemize}
\item (Second order thermal state preservation)
\begin{equation}\label{eqn:zero_bath_second_order_thermal_condition}
\frac{\left|\hat{f}(\lambda)\right|^2}{\left|\hat{f}(\lambda)\right|^2+\left|\hat{f}(-\lambda)\right|^2}=p(\lambda):=\frac{1}{1+\exp(\lambda)}
\end{equation}
\item (Fourth order thermal state preservation)
\begin{equation}\label{eqn:zero_bath_fourth_order_thermal_condition}
\bigl(1-p(\lambda)\bigr)G(\lambda)
-p(\lambda)G(-\lambda)=0,
\end{equation}
where $G(\lambda)=-2\operatorname{Re}
   \left[\overline{\hat{f}(\lambda)}J(\lambda)\right]$ with
\[
\begin{aligned}
J(\nu):=\int_{-\infty}^{\infty}
   \int_{-\infty}^{t_1}
   \int_{-\infty}^{t_2}\,
   e^{i\nu(t_1-t_2+t_3)}f(t_3)\left[f(t_1)\overline{f(t_2)}
   +\overline{f(t_1)}f(t_2)\right]dt_3dt_2dt_1\,.
\end{aligned}
\]
In particular, in the case~\eqref{eqn:zero_bath_second_order_thermal_condition} is satisfied,~\eqref{eqn:zero_bath_fourth_order_thermal_condition} is equivalent to $\operatorname{Re}\frac{J(\lambda)}{\hat f(\lambda)}
=\operatorname{Re}\frac{J(-\lambda)}{\hat f(-\lambda)}$.
\end{itemize}
\end{condition}

The following theorem provides a necessary and sufficient condition for achieving improved complexity in the zero-bath setting for general free fermion Hamiltonians:
\begin{thm}[Criterion for improved complexity in the zero-bath setting]\label{thm:zero_bath_lower_bound_rigor}
Consider the system-bath interaction algorithm under~\cref{assump:zero_bath} at inverse temperature $\beta=1$. Let $f\in L^1(\mathbb R)$ be a fixed filter satisfying
\[
|\hat f(\lambda)|^2+|\hat f(-\lambda)|^2>0
\quad\text{for all }\lambda\in[-1,1],
\qquad
\eta(T):=\int_{|t|>T/2}|f(t)|\,\mathrm dt
=\mathcal O(e^{-T^s})
\]
for some $s>0$.

Fix $C_{\rm tail}>0$. There exist constants $\alpha_0,T_0>0$ and $\epsilon_0\in(0,1)$, depending only on $f$ and $C_{\rm tail}$ such that if the parameters $\alpha,T,\epsilon$ satisfy
\begin{equation}\label{eqn:condition_regime}
0<\alpha\le\alpha_0,\qquad
T\ge T_0,\qquad
0<\epsilon\le\epsilon_0,\qquad
\eta(T)\le C_{\rm tail}\alpha^4,
\end{equation}
then the following two conditions are equivalent:
\begin{enumerate}
\item The filter $f$ satisfies the $\lambda$-improvement condition in~\cref{cond:zero_bath_improvement} for every $\lambda\in[-1,1]$.
\item For every $N$-mode free-fermion Hamiltonian $H=\sum_{i,j}h_{ij}c_i^\dagger c_j$ with $\|h\|\le1$ and every $0<\epsilon\le\epsilon_0$, the parameters $\alpha,T$ can be chosen in the above regime, together with a number of iterations $m$, so that
\[
\sup_\rho\|\Phi_\alpha^m(\rho)-\sigma_1(H)\|_1\le\epsilon,
\qquad
T_{\rm total}=2Tm
=\widetilde{\mathcal O}\!\left(\frac{N^{1/4}}{\epsilon^{1/2}}\right).
\]
\end{enumerate}
Moreover, if condition~1 fails, there exist an energy $\lambda\in[-1,1]$ and a constant $c>0$ such that, for $H=\lambda\sum_{i=1}^N c_i^\dagger c_i$, any choice of parameters that satisfy ~\eqref{eqn:condition_regime} and achieve $\epsilon$-accuracy uniformly over all initial states must satisfy
\[
T_{\rm total}\ge c\,\frac{\sqrt N}{\epsilon}.
\]
\end{thm}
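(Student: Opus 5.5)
The plan is to reduce everything to single fermionic modes and then analyze an explicit single-mode channel order by order in $\alpha$.

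\emph{Step 1: reduction to modes.} Write $h=U\Lambda U^\dagger$ and apply the same unitary rotation to the bath families, setting $\tilde b_{k,\pm}=\sum_j (U^\dagger)_{kj} b_{j,\pm}$ (using $\overline{U}$ for the minus family, since $c_i$ couples to $b_{i,-}^\dagger$). The interaction is then $f(t)\sum_k (d_k^\dagger\tilde b_{k,+}^\dagger + d_k\tilde b_{k,-}^\dagger)+\mathrm{h.c.}$. Because $H_E=0$ and the vacuum is invariant under these rotations, the joint evolution factorizes into $N$ commuting three-mode problems $(d_k,\tilde b_{k,+},\tilde b_{k,-})$ with frequency $\lambda_k$. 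Tracing out the bath gives $\Phi_\alpha=\bigotimes_k \phi_{\lambda_k}$ in the fermionic tensor-product sense. Here $\phi_\lambda$ is a channel on one mode. Each $\phi_\lambda$ is parity-compatible and, by the $U(1)$ structure, acts as a classical two-state Markov chain on occupations, plus a decay of coherences.

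\emph{Step 2: single-mode expansion.} For each mode compute the transition probabilities $\Pr(0\to1)=\alpha^2|\hat f(\lambda)|^2+\alpha^4 a_4(\lambda)+O(\alpha^6+\eta)$, and the analogous expression for $1\to0$ with $-\lambda$. This comes from the Dyson expansion of the quadratic three-mode problem. One can write it via the single-particle propagator, a $3\times 3$ ODE, and apply Picard iteration. The fourth-order coefficient should come out as $G(\pm\lambda)$ combined with $|\hat f|^4$ terms. The truncation error from the window $[-T,T]$ is controlled by $\eta(T)\le C_{\rm tail}\alpha^4$.

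The fixed point of the mode is then the occupation $p_{\rm fix}(\lambda)=\frac{\Pr(0\to1)}{\Pr(0\to1)+\Pr(1\to0)}$. Under the second-order condition \eqref{eqn:zero_bath_second_order_thermal_condition} this equals $p(\lambda)+\alpha^2\,\kappa(\lambda)+O(\alpha^4)$. The coefficient $\kappa$ is proportional to $(1-p)G(\lambda)-pG(-\lambda)$, after checking that the $|\hat f|^4$ contributions cancel given the second-order condition. If the second-order condition itself fails, the deviation is $\Theta(1)$. Either way, the mode mixes with gap $\Theta(\alpha^2)$. The key quantity is therefore the per-mode bias $\delta(\lambda)$, which is $O(\alpha^4)$ under condition 1 and $\ge c\,\alpha^2$ otherwise.

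\emph{Step 3: product state errors and the two directions.} The global fixed point is $\bigotimes_k\mathrm{diag}(1-p_{\rm fix},p_{\rm fix})$. Its trace distance to $\sigma_1$ behaves like $\min\{1,\sqrt N\,\delta\}$; this is the Hellinger/chi-square scaling for products of near-identical Bernoulli states. The upper bound follows from $\sqrt{\sum_k\delta_k^2}$. For the lower bound, use the uniform Hamiltonian $H=\lambda\sum_i c_i^\dagger c_i$ and the total-number statistic, whose mean shifts by $N\delta$ against fluctuations $\sqrt N$.

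Mixing from an arbitrary initial state costs $m=\Theta(\alpha^{-2}\log(N/\epsilon))$, since each mode contracts by $1-\Theta(\alpha^2)$ and a union over $N$ modes is needed. For 1$\Rightarrow$2: choose $\alpha^4\sqrt N\sim\epsilon$, i.e.\ $\alpha^{-2}\sim N^{1/4}\epsilon^{-1/2}$, with $T=\mathrm{polylog}$ so that $\eta\le C\alpha^4$. For 2$\Rightarrow$1 and the "moreover" part: if condition 1 fails at some $\lambda$, pick the uniform Hamiltonian at that $\lambda$. Accuracy forces $\sqrt N c\alpha^2\lesssim\epsilon$ or better, so $\alpha^2\le\epsilon/(c\sqrt N)$. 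Then $m\gtrsim\alpha^{-2}$ is needed even to move one mode from $|0\rangle$ to near $p(\lambda)$, because the total change per step is $O(\alpha^2)$. This gives $T_{\rm total}\ge 2m\ge c\sqrt N/\epsilon$, which exceeds $N^{1/4}\epsilon^{-1/2}$ for small $\epsilon$ and large $N$.

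\emph{Main obstacle.} The hard step is the exact fourth-order computation in Step 2. It requires identifying the $\alpha^4$ coefficient of $p_{\rm fix}$ precisely as the combination in \eqref{eqn:zero_bath_fourth_order_thermal_condition}, with the triple time-ordered integral $J$. It also requires proving a uniform-in-$\lambda$ remainder bound $O(\alpha^6)$ relative to the $\alpha^2$ gap, including tail effects. I would first try computing the single-particle propagator to third order (this gives the amplitude $\alpha\hat f-\alpha^3 J$) and squaring it, rather than expanding the full many-body channel.
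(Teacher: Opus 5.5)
Your proposal is correct and follows essentially the paper's own route. The steps match: factor $\Phi_\alpha$ into commuting single-mode channels, reduce to the exact two-state population chain $r\mapsto a+(1-q)r$, expand $a,z$ to order $\alpha^4$ to get the stationary bias $\alpha^2\Delta_4(\lambda)/q_2(\lambda)$, use product-state tensorization for sufficiency, and for the lower bound combine a binomial (total-number) comparison at the fixed point with a contraction-rate argument; your one-start-state displacement bound $m\gtrsim 1/q$ works as well as the paper's two-extreme-states bound $(1-q)^m\le\epsilon$. One small correction: each transition probability is the squared modulus of a single odd-order amplitude, so its $\alpha^4$ coefficient is exactly the cross term $G_T(\pm\lambda)=-2\operatorname{Re}[\overline{F_T(\pm\lambda)}J_T(\pm\lambda)]$ (as your own plan of squaring the third-order amplitude shows), and there are no separate $|\hat f|^4$ terms whose cancellation needs checking.
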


We put the proof of~\cref{thm:zero_bath_lower_bound_rigor} in Appendix~\ref{sec:proof_thermal_lower_bound}. In Condition~\ref{cond:zero_bath_improvement}, the second-order thermal state preservation condition~\eqref{eqn:zero_bath_second_order_thermal_condition} is necessary to ensure that the leading $\mathcal{O}(\alpha^2)$ contribution to the channel preserves the target thermal state. Although this contribution vanishes as $\alpha\to 0$, failure to satisfy the condition leaves an $\mathcal{O}(1)$ discrepancy between the fixed point and the target thermal state in this limit. The algorithm therefore cannot achieve arbitrarily small error $\epsilon$ in the weak-coupling regime considered here. When the second-order condition holds, the fourth-order thermal state preservation condition~\eqref{eqn:zero_bath_fourth_order_thermal_condition} is necessary and sufficient for the fixed-point error to be $\mathcal{O}(\alpha^4)$. In this case, $\epsilon$-accuracy can be achieved by choosing $\alpha=\Theta(\epsilon^{1/4}N^{-1/8})$ and $m=\widetilde{\mathcal O}(N^{1/4}/\epsilon^{1/2})$, yielding a total Hamiltonian simulation time of $T_{\rm total}=\widetilde{\mathcal O}(N^{1/4}/\epsilon^{1/2})$ according to the analysis in~\cref{sec:proof_thermal}. In addition, for general free-fermion Hamiltonians, we can also transform the system Hamiltonian to a diagonal form and apply the same analysis to $H=\sum_{i=1}^N\lambda_in_i$ with $\lambda_i\in[-1,1]$, meaning this improved scaling is achievable for general free-fermion Hamiltonians. Thus, together with the second-order condition, the fourth-order condition is sufficient to improve upon the $\sqrt{N}/\epsilon$ scaling for thermal state preparation. However, it remains unclear whether a filter function exists that satisfies both~\eqref{eqn:zero_bath_second_order_thermal_condition} and~\eqref{eqn:zero_bath_fourth_order_thermal_condition} for all $\lambda\in[-1,1]$ and we leave this as an open question for future work.

Together with~\cref{thm:zero_bath_lower_bound_rigor}, the following lemma yields~\cref{thm:zero_bath_lower_bound_new_1}:
\begin{lem}\label{lem:zero_bath_lower_bound_rigor}
Let $1<S\le3$. Let $w\in C_c^\infty([-1,1])$ be even, nonnegative, positive on $(-1,1)$, and nonincreasing for $x\ge0$. Then the filter with $\hat f(\nu)=F_S(\nu):=e^{-\nu/4}e^{-\sqrt{1+\nu^2}/4}w(\nu/S)$ cannot satisfy Condition~\ref{cond:zero_bath_improvement}~\eqref{eqn:zero_bath_fourth_order_thermal_condition} on $[-1,1]$.
\end{lem}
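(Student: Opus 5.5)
The plan is to reduce the fourth-order condition to an explicit identity between frequency-domain quantities built from $F_S$, and then to show that this identity fails near $\lambda=0$.

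First I check that $F_S$ satisfies the second-order condition. From the definition, $F_S(-\nu)=e^{\nu/2}F_S(\nu)$, because $w$ and $\sqrt{1+\nu^2}$ are even. Hence $|F_S(-\lambda)|^2=e^{\lambda}|F_S(\lambda)|^2$, which gives \eqref{eqn:zero_bath_second_order_thermal_condition}. By the remark after Condition~\ref{cond:zero_bath_improvement}, it then suffices to show that
\[
D(\lambda):=\operatorname{Re}\frac{J(\lambda)}{F_S(\lambda)}-\operatorname{Re}\frac{J(-\lambda)}{F_S(-\lambda)}
\]
does not vanish identically on $[-1,1]$. Note that $F_S>0$ on $(-S,S)\supset[-1,1]$, so the quotients are well defined. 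Since $D$ is odd, $D(0)=0$ automatically. I will therefore aim to show $D'(0)\neq0$, or failing that, $D(\lambda_*)\neq 0$ at some explicit small $\lambda_*$.

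Second, I rewrite $J$ in the frequency domain. I substitute $f(t)=\frac{1}{2\pi}\int F_S(x)e^{-ixt}\,dx$ and perform the time-ordered integrals using
\[
\int_{-\infty}^{t}e^{iat'}\,dt'=e^{iat}\Bigl(\pi\delta(a)-i\,\mathrm{P.V.}\tfrac{1}{a}\Bigr).
\]
This is legitimate because $F_S$ is smooth and compactly supported, so $f$ is Schwartz-class. The two nested integrals produce a product of two such distributions, and the outer integral enforces overall frequency conservation at $\nu$. The factor $F_S(\nu)$ then splits off, leaving:
\begin{itemize}
\item a delta--delta term, proportional to $|F_S(\nu)|^2$;
\item mixed delta--principal-value terms, which are Hilbert transforms $\mathcal H[|F_S|^2](\pm\nu)$;
\item a double principal-value term.
\end{itemize}
After taking real parts, the imaginary-unit factors remove the single Hilbert-transform pieces except through their combination with the $\overline{f}f$ versus $f\overline{f}$ orderings. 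I expect the result to take the form
\[
\operatorname{Re}\frac{J(\nu)}{F_S(\nu)}=c_1|F_S(\nu)|^2+c_2\,\mathrm{P.V.}\!\iint \frac{|F_S(x)|^2\,K(\nu,x,y)}{(x-\nu)(y-\nu)}\,dx\,dy
\]
for explicit constants $c_1,c_2$ and an explicit kernel $K$. The main bookkeeping task in this step is to track which of $F_S(\pm x)$ appear, since $\overline{f}$ has transform $\overline{F_S(-x)}$.

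Third, I plug this into $D$ and use the symmetry $|F_S(-x)|^2=e^{x}|F_S(x)|^2$ to write $D(\lambda)$ as an integral against the single weight $g(x):=|F_S(x)|^2$. Here
\[
g(x)=e^{-x/2}e^{-\sqrt{1+x^2}/2}w(x/S)^2.
\]
The local term contributes $c_1(1-e^{\lambda})g(\lambda)$, which is $-c_1\lambda g(0)+O(\lambda^2)$. Differentiating the principal-value part at $\lambda=0$ gives an integral of the form $\int g(x)\,\phi(x)/x^2\,dx$ with an explicit $\phi$, interpreted as a Hadamard finite part. The key point I intend to exploit is the following. The factor $e^{-x/2}$ makes $g$ asymmetric in a definite direction, while $e^{-\sqrt{1+x^2}/2}w(x/S)^2$ is even and nonincreasing in $|x|$. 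For $x>0$ the difference $g(-x)-g(x)=2\sinh(x/2)\,e^{-\sqrt{1+x^2}/2}w(x/S)^2$ is therefore positive. This should make the odd part of the Hilbert-type integral sign-definite, so that $D'(0)$ is a sum of terms of a common sign and hence nonzero. The hypothesis $1<S\le3$ enters when bounding the tail $|x|>1$: it keeps the support of $w(x/S)$ in a range where the exponential weights control the competing contributions.

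The main obstacle is this final sign argument. The double principal-value term need not be sign-definite on its own, and the constants $c_1,c_2$ may have opposite signs, so a cancellation between the local term and the nonlocal term at order $\lambda$ cannot be ruled out a priori. My first attempt will be to show that the nonlocal contribution strictly dominates $|c_1|g(0)$, using the monotonicity of $w$ and the explicit bound $S\le 3$. If that is too delicate, I will instead prove $D(\lambda_*)\neq0$ at one specific $\lambda_*\in(0,1]$. I would establish the necessary inequality with rigorous numerics for a model $w$, combined with a monotonicity or comparison argument showing that the sign does not depend on the particular admissible $w$.
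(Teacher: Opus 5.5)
Your skeleton matches the paper's: $F_S(-\nu)=e^{\nu/2}F_S(\nu)$ gives the second-order condition, $D$ is odd, and the target is $D'(0)\neq0$ via the $\pi\delta-i\,\mathrm{P.V.}$ representation. But the step that carries all the content, the sign of $D'(0)$, is missing. The structure you expect to feed into that step is also not right.

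Because $F_S$ is real, the terms with exactly one principal value are purely imaginary and drop out of $\operatorname{Re}J$. The delta--delta terms of the two orderings $f\bar f f$ and $\bar f f f$ give $\tfrac14F_S(\nu)\bigl(F_S(\nu)^2+F_S(-\nu)^2\bigr)$, not just a multiple of $|F_S(\nu)|^2F_S(\nu)$. Divided by $F_S(\nu)$, this is $\tfrac12 q(\nu)^2\cosh(\nu/2)$ with $q(\nu)=e^{-\sqrt{1+\nu^2}/4}w(\nu/S)$, which is even and cancels in $D$. So there is no local term $c_1(1-e^{\lambda})g(\lambda)$ to balance against, and $D'(0)$ comes entirely from the double principal-value terms.

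Those terms are trilinear, for instance $\mathrm{P.V.}\iint F_S(\nu+x_2-x_3)F_S(x_2)F_S(x_3)(\nu-x_3)^{-1}(x_2-x_3)^{-1}\,\mathrm dx_2\,\mathrm dx_3$. $F_S(\nu)$ does not factor out of them, and they do not reduce to $g=|F_S|^2$ integrated against a kernel. In the paper's normalization $D=e^{\nu/4}\operatorname{Re}J(\nu)-e^{-\nu/4}\operatorname{Re}J(-\nu)$, which is $q(\nu)$ times yours, they yield
\[
D'(0)=\frac1{\pi^2}\int_0^\infty\frac1x\int_0^{1/2}M_s(x)\bigl[-q'(x)\cosh(c_sx)-k_sq(x)\sinh(c_sx)\bigr]\,\mathrm ds\,\mathrm dx .
\]
Here $M_s(x)=\int q(x/2+y)q(x/2-y)\cosh(sy)\,\mathrm dy$ is an autocorrelation of $q$, $k_s=\tfrac12-s$, and $c_s=\tfrac12-\tfrac{s}{2}$.

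For $x>0$ the bracket is a nonnegative term from the decay of $q$ minus a nonnegative term, so the sign is a quantitative competition. The asymmetry $g(-x)-g(x)>0$ that you plan to exploit comes for free from the tilt $e^{-\nu/4}$. It does not use the monotonicity of $w$ and cannot by itself decide the sign. The paper's argument has three ingredients that your proposal lacks:
\begin{enumerate}
\item Monotonicity of $w$ gives the decay-rate bound $-q'(x)\ge m(x)q(x)$, where $m(x)=x/(4\sqrt{1+x^2})$ is minus the logarithmic derivative of $e^{-\sqrt{1+x^2}/4}$.
\item $M_s(x)$ is nondecreasing in $s$ while $k_s\tanh(c_sx)$ is decreasing. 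A Chebyshev-type inequality with weight $\cosh(c_sx)\,\mathrm ds$ then gives $\int_0^{1/2}M_s(x)k_s\sinh(c_sx)\,\mathrm ds\le\tau(x)\int_0^{1/2}M_s(x)\cosh(c_sx)\,\mathrm ds$, with the explicit $\tau(x)=\frac{\cosh(x/2)}{2[\sinh(x/2)-\sinh(x/4)]}-\frac2x$.
\item The elementary comparison $\tau(x)<m(x)$ holds on $(0,3]$. Combined with the first two ingredients, it makes the $s$-integrated bracket at least $q(x)\bigl(m(x)-\tau(x)\bigr)\int_0^{1/2}M_s(x)\cosh(c_sx)\,\mathrm ds$, which is strictly positive for $0<x<S$.
\end{enumerate}
The third ingredient is where $S\le3$ actually enters. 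As $x\to\infty$, $\tau(x)\to\tfrac12$ while $m(x)\to\tfrac14$, so the pointwise comparison fails for large $x$. It is usable only because $q$ vanishes outside $[-S,S]$.

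Your fallback of rigorous numerics for one model $w$, plus an unspecified comparison across $w$, does not close the gap. The lemma concerns every admissible $w$ and every $S\in(1,3]$, and that uniform comparison is exactly the missing argument.
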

We put the proof of~\cref{lem:zero_bath_lower_bound_rigor} in Appendix~\ref{sec:proof_thermal_lower_bound}. For the monotone cutoff chosen in~\cref{thm:thermal_zero_bath}, this lemma and~\cref{thm:zero_bath_lower_bound_rigor} imply~\cref{thm:zero_bath_lower_bound_new_1}, as explained at the end of the proof.

\subsection{Proof of Theorem~\ref{thm:thermal_zero_bath}}\label{sec:proof_thermal}

To prove~\cref{thm:thermal_zero_bath}, we first have the following proposition:
\begin{prop}\label{prop:zero_bath}
Under~\cref{assump:zero_bath}, there are CPTP maps
$\Phi_{\alpha,j}:\mathbb C^{2^N\times2^N}\to
\mathbb C^{2^N\times2^N}$ which act on the $j$th system mode such that
\begin{itemize}
\item
$
[\Phi_{\alpha,j},\Phi_{\alpha,k}]=0
\quad(j\ne k),
\qquad \Phi_\alpha
=\Phi_{\alpha,1}\circ\cdots\circ\Phi_{\alpha,N}.
$
\item $\Phi_{\alpha,j}$ maps diagonal terms to diagonal terms and off-diagonal terms to off-diagonal terms in the basis of $d_j$. Define $n_j=d^\dagger_jd_j$. We have
\begin{equation}\label{eqn:exact_fermionic_mode_channel}
\Phi_{\alpha,j}(\rho)
=\sum_{r\in\{00,10,01,11\}}K_{j,r}\rho K_{j,r}^{\dagger},
\qquad
\sum_{r\in\{00,10,01,11\}}K_{j,r}^{\dagger}K_{j,r}=I,
\end{equation}
where the four Kraus operators take the form:
\begin{equation}\label{eqn:exact_fermionic_mode_kraus}
\begin{aligned}
K_{j,00}&=a_{j,00}^{(0)}(I-n_j)+a_{j,00}^{(1)}n_j,
&K_{j,10}&=a_{j,10}d_j^\dagger,\\
K_{j,01}&=a_{j,01}d_j,
&K_{j,11}&=a_{j,11}^{(0)}(I-n_j)+a_{j,11}^{(1)}n_j.
\end{aligned}
\end{equation}
\end{itemize}
In addition, when $\alpha\|f\|_{L^1}=\mathcal{O}(1)$, there exists $\mc{L}_j$ such that
\begin{equation}\label{eqn:Phi_alpha_j}
\left\|\Phi_{\alpha,j}-\mathcal U_j(T)\circ\exp\left(\mathcal L_{j}\alpha^2\right)\circ \mathcal U_j(T)\right\|_{1\leftrightarrow1}=\mc{O}\left(\alpha^4\min\left\{\left\|f(t)\right\|^4_{L^\infty}T^4,\|f\|_{L^1}^4\right\}\right)\,,
\end{equation}
where $\mathcal U_j(T)[\rho]
=e^{-i\lambda_jn_jT}\rho e^{i\lambda_jn_jT}$. Here, the Lindbladian
\begin{equation}\label{eqn:embedded_fermionic_lindbladian}
\begin{aligned}
\mathcal L_j(\rho)
={}&-i[G_{j,\rm Lamb},\rho]+L_{j,\rm c}\rho L_{j,\rm c}^\dagger
-\frac12\left\{L_{j,\rm c}^\dagger L_{j,\rm c},\rho\right\}+L_{j,\rm n}\rho L_{j,\rm n}^\dagger
-\frac12\left\{L_{j,\rm n}^\dagger L_{j,\rm n},\rho\right\},
\end{aligned}
\end{equation}
with jump operators
\begin{equation}\label{eqn:embedded_fermionic_jump_operators}
L_{j,\rm c}=\int^{T}_{-T}f(t)e^{i\lambda_jt}\mathrm{d}td_j^\dagger,
\qquad
L_{j,\rm n}=\int^{T}_{-T}f(t)e^{-i\lambda_jt}\mathrm{d}td_j,
\end{equation}
and $[G_{j,\rm Lamb},\sigma_\beta]=0$.
\end{prop}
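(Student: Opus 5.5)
The plan is to diagonalize the quadratic system--bath Hamiltonian mode by mode. Write $c_i=\sum_k U_{ik}d_k$ and define rotated bath modes $\tilde b_{k,+}:=\sum_i \overline{U_{ik}}\,b_{i,+}$ and $\tilde b_{k,-}:=\sum_i U_{ik}\,b_{i,-}$. Both rotations are unitary, so these are again canonical fermionic modes. The interaction then becomes
\[
\sum_i c_i^\dagger b_{i,+}^\dagger=\sum_k d_k^\dagger\tilde b_{k,+}^\dagger,\qquad \sum_i c_i b_{i,-}^\dagger=\sum_k d_k\tilde b_{k,-}^\dagger .
\]
Hence
\[
H_\alpha(t)=\sum_k h_k(t),\qquad h_k(t)=\lambda_k n_k+\alpha f(t)\bigl(d_k^\dagger\tilde b_{k,+}^\dagger+d_k\tilde b_{k,-}^\dagger\bigr)+\mathrm{h.c.}
\]
Each $h_k(t)$ is even in the fermionic operators of the triple $\{d_k,\tilde b_{k,\pm}\}$. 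Even operators built on disjoint sets of modes commute, so $U^\alpha(T)=\prod_k U_k(T)$ with commuting factors $U_k$, and each $U_k$ is even.

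The bath state is invariant under the Bogoliubov rotation: the vacuum maps to the vacuum, and it factorizes as $\bigotimes_k|0\rangle_{\tilde b_{k,+}}|0\rangle_{\tilde b_{k,-}}$. Using the fermionic tensor-product structure (a Jordan--Wigner ordering that places each triple contiguously; even operators cause no sign issues), the partial trace over the bath factorizes. This defines
\[
\Phi_{\alpha,j}(\rho)=\mathrm{Tr}_{\tilde b_{j,\pm}}\bigl[U_j(\rho\otimes|00\rangle\langle00|)U_j^\dagger\bigr].
\]
It is CPTP and acts only on mode $j$. The maps commute for distinct $j$, and their composition gives $\Phi_\alpha$.

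For the Kraus form, take the Kraus operators $K_{j,r}={}_{\tilde b}\langle r|U_j|00\rangle_{\tilde b}$ for $r\in\{00,10,01,11\}$. The unitary $U_j$ conserves two charges: it is even, and it preserves $Q=n_j-\tilde n_{j,+}+\tilde n_{j,-}$, since each coupling term preserves $Q$. Starting from the bath vacuum, the outcome $r=10$ raises $Q$ by $-1$, so the system must gain a particle; this forces $K_{j,10}\propto d_j^\dagger$. Likewise $r=01$ forces $K_{j,01}\propto d_j$, while $r=00,11$ leave $n_j$ unchanged, so those Kraus operators are diagonal in $n_j$. This yields the claimed form of $K_{j,r}$. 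It also shows directly that diagonal entries in the $n_j$ basis map to diagonal ones, and off-diagonal to off-diagonal. Completeness of the Kraus operators follows from unitarity.

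For the Lindblad approximation, pass to the interaction picture with respect to $\lambda_j n_j$ on $[-T,T]$. This gives $U_j(T)=e^{-i\lambda_j n_jT}\,V_j\,e^{-i\lambda_j n_jT}$, where $V_j$ is the time-ordered exponential of
\[
\alpha f(t)\bigl(e^{i\lambda_j t}d_j^\dagger\tilde b_{j,+}^\dagger+e^{-i\lambda_j t}d_j\tilde b_{j,-}^\dagger\bigr)+\mathrm{h.c.}
\]
Expand the reduced channel in a Dyson series. Odd orders vanish because bath parity is conserved, and the vacuum expectation of an odd number of bath operators is zero. The second-order term is exactly $\alpha^2\mathcal L_j$. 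Its dissipative part consists of the jumps $L_{j,\rm c}$ and $L_{j,\rm n}$ given by the integrals of $f(t)e^{\pm i\lambda_j t}$. The time-ordered part yields $G_{j,\rm Lamb}$, a combination of $n_j$ and $I$, so it commutes with $\sigma_\beta$, whose $j$-th factor is a function of $n_j$.

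The main obstacle is bounding the remainder uniformly in $N$ by $\alpha^4\min\{\|f\|_\infty^4T^4,\|f\|_{L^1}^4\}$. Because the problem has decoupled to a single three-mode system, the operator norms are $\mathcal O(1)$. The $k$-th Dyson term is therefore bounded by $(\alpha\int_{-T}^T|f|)^k/k!$, and $\int_{-T}^T|f|$ is at most both $2T\|f\|_\infty$ and $\|f\|_{L^1}$. Under $\alpha\|f\|_{L^1}=\mathcal O(1)$ the terms of order $\ge4$ sum to the stated bound. Finally, compare the truncated expansion $\mathrm{id}+\alpha^2\mathcal L_j$ with $e^{\alpha^2\mathcal L_j}$; they differ by $\mathcal O(\alpha^4\|\mathcal L_j\|^2)$, which has the same order, since $\|\mathcal L_j\|\lesssim(\int|f|)^2$.
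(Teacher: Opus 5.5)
Your proposal is correct. For the decoupling it follows the paper's route: rotate the bath modes so that $H_\alpha(t)=\sum_jH_{\alpha,j}(t)$ splits into mutually commuting single-mode pieces, factor $U^\alpha(T)=\prod_jU_{\alpha,j}(T)$, and trace out each bath pair separately.

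Where you genuinely differ is in the remaining claims. The paper's proof never derives the Kraus form \eqref{eqn:exact_fermionic_mode_kraus} or the diagonal/off-diagonal property. It obtains \eqref{eqn:Phi_alpha_j} by invoking the weak-coupling expansions of \cite{ding2025endtoendefficientquantumthermal,wang2025lindblad}. You prove both directly:
\begin{enumerate}
\item The conserved charge $Q=n_j-\tilde n_{j,+}+\tilde n_{j,-}$ pins down the four Kraus operators.
\item The explicit second-order Dyson term is $\alpha^2\mathcal L_j$, with $G_{j,\rm Lamb}$ in the span of $I$ and $n_j$.
\item Term-by-term Dyson bounds for a three-mode problem give the $\alpha^4$ remainder. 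This uses $\int_{-T}^T|f|\le\min\{2T\|f\|_{L^\infty},\|f\|_{L^1}\}$ together with $\|e^{\alpha^2\mathcal L_j}-\mathrm{id}-\alpha^2\mathcal L_j\|=\mathcal O(\alpha^4\|\mathcal L_j\|^2)$.
\end{enumerate}
This buys a self-contained argument in which the $N$-independence of all constants is transparent. The paper's route is shorter but leaves the reader to check that the cited single-ancilla results apply mode by mode.

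Two small fixes are needed.

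First, your rotations have the conjugations swapped. With $c_i=\sum_kU_{ik}d_k$, your displayed identities require $\tilde b_{k,+}=\sum_iU_{ik}b_{i,+}$ and $\tilde b_{k,-}=\sum_i\overline{U_{ik}}\,b_{i,-}$. This matches the paper's $e_{j,\pm}$; your versions produce $UU^{T}$ instead of $UU^\dagger$.

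Second, in \cref{assump:zero_bath} the bath operators commute with the system operators. So a single Jordan--Wigner chain with contiguous triples is not literally available. In a regrouped tensor product, the odd--odd coupling terms carry strings of the conserved triple parities. The factorization of the bath trace is better justified by parity selection: cross terms whose bath words on a given pair have mismatched parity vanish in the vacuum. This is also what the paper's replacement of bath operators by $|1\rangle\langle0|$ and $|0\rangle\langle1|$ implicitly uses.
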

\begin{proof}[Proof of~\cref{prop:zero_bath}]
First, we notice that
\[
c_i^\dagger=\sum_j U^*_{i,j}d_j^\dagger,\qquad c_i=\sum_j U_{i,j}d_j\,.
\]
Define two new sets of bath operators:
\[
e^\dagger_{j,+} = \sum_{i=1}^N U^*_{i,j}b^\dagger_{i,+},\qquad e^\dagger_{j,-} = \sum_{i=1}^N U_{i,j}b^\dagger_{i,-}\,.
\]
Because $U$ is unitary, the operators $\{e^\dagger_{j,+},e_{j,+}\}^N_{j=1}$ and $\{e^\dagger_{j,-},e_{j,-}\}^N_{j=1}$ jointly satisfy the canonical anti-commutation relations. In addition, the bath vacuum remains the same product vacuum state in the new bath modes, and the new bath operators still commute with system operators such as $\{d^\dagger_j,d_j\}$. From~\cref{assump:zero_bath}, we then have
\[
H_{\rm int}(t)=\sum_{j=1}^N\left[f(t)\left(d_j^\dagger\otimes e_{j,+}^\dagger+d_j\otimes e_{j,-}^\dagger\right)+\overline{f(t)}\left(d_j\otimes e_{j,+}+d_j^\dagger\otimes e_{j,-}\right)\right]\,.
\]

We then have
\begin{equation}\label{eqn:fermionic_energy_mode_hamiltonians}
H_{\alpha}(t)=\sum_{j=1}^NH_{\alpha,j}(t),
\end{equation}
where
\begin{equation}\label{eqn:fermionic_energy_mode_hamiltonian}
H_{\alpha,j}(t)
=\lambda_jn_j+\alpha\left[f(t)\left(
d_j^\dagger\otimes e_{j,+}^\dagger
+d_j\otimes e_{j,-}^\dagger\right)
+\overline{f(t)}\left(
d_j\otimes e_{j,+}
+d_j^\dagger\otimes e_{j,-}\right)\right].
\end{equation}
Every interaction term in $H_{\alpha,j}$ is odd in the system and
odd in the bath.  This implies that
\begin{equation}\label{eqn:fermionic_energy_mode_commutation}
[H_{\alpha,j}(s),H_{\alpha,k}(t)]=0
\qquad(j\ne k).
\end{equation}
Consequently,
\begin{equation}\label{eqn:fermionic_joint_unitary_product}
U^\alpha(t)=\prod_{j=1}^NU_{\alpha,j}(t),
\quad
U_{\alpha,j}(t):=\mathcal T\exp\left(
-i\int_{-T}^tH_{\alpha,j}(s)\,\mathrm ds\right),\quad [U_{\alpha,j}(t),U_{\alpha,k}(s)]=0\quad(j\ne k)\,.
\end{equation}

Now, we claim that the reduced channel $\Phi_\alpha$ in~\eqref{eqn:Phi_alpha} can be decomposed as
\begin{equation}\label{eqn:fermionic_channel_decomposition}
\Phi_\alpha=\Phi_{\alpha,1}\circ\cdots\circ\Phi_{\alpha,N},
\end{equation}
where $\Phi_{\alpha,j}$ denotes the reduced channel induced by $U_{\alpha,j}(t)$ after tracing out the bath, i.e.,
\[
\Phi_{\alpha,j}(\rho)=\operatorname{Tr}_E\left[U_{\alpha,j}(T)(\rho\otimes \rho_E)U_{\alpha,j}(T)^\dagger\right]\,.
\]
To derive~\eqref{eqn:fermionic_channel_decomposition}, we first note that $U_{\alpha,j}(t)$ acts on different system and bath modes for different $j$. We then expand each time-evolution operators $U_{\alpha,j}(t)$ into Dyson series. Because trace out only keeps the terms with
$\ket{0}\bra{0}$ or $\ket{1}\bra{1}$ on every bath mode and the bath begins in the vacuum, we can replace all creation and annihilation operators on the bath modes by $\ket{1}\bra{0}$ and $\ket{0}\bra{1}$, respectively. Therefore, we can separately trace out the bath modes for each $j$ and obtain~\eqref{eqn:fermionic_channel_decomposition}.

Finally, for each $\Phi_{\alpha,j}$, applying the weak-coupling expansion in~\cite[Theorem 11]{ding2025endtoendefficientquantumthermal} and~\cite[Theorem S1]{wang2025lindblad} gives~\eqref{eqn:Phi_alpha_j}. The interaction terms $c^\dagger_i\otimes b^\dagger_{i,+}$ and $c_i\otimes b^\dagger_{i,-}$ generate $L_{j,\rm c}$ and $L_{j,\rm n}$ in~\cref{prop:zero_bath}, respectively. Moreover, the Lamb-shift terms are linear combination of $1-n_j$ and $n_j$, so $[G_{j,\rm Lamb},n_k]=0$ for all $k$. This concludes the proof.
\end{proof}

Now, we are ready to prove~\cref{thm:thermal_zero_bath}.

\begin{proof}[Proof of~\cref{thm:thermal_zero_bath}] We first note that the choice of $f$ in~\cref{thm:thermal_zero_bath} satisfies~\cite[Assumption 13]{Ding_2025} (see~\cite[(3.20)]{Ding_2025}). According to~\cite[Section 3.1]{Ding_2025}, for any $j$, there exists a coherent term $B_j$ such that
\[
\mathcal{L}_{j,\rm KMS}(\rho):=-i[B_j,\rho]+\left( L_{j,\rm c,\infty} \rho L_{j,\rm c,\infty}^\dagger - \frac{1}{2} \left\{ L_{j,\rm c,\infty}^\dagger L_{j,\rm c,\infty}, \rho \right\} \right)+\left( L_{j,\rm n,\infty} \rho L_{j,\rm n,\infty}^\dagger - \frac{1}{2} \left\{ L_{j,\rm n,\infty}^\dagger L_{j,\rm n,\infty}, \rho \right\} \right)
\]
satisfies KMS-DBC with respect to the Gibbs state $\sigma_\beta=\prod_j \sigma_{\beta,j}$, where $\sigma_{\beta,j}=\frac{1}{1+\exp(-\beta\lambda_j)}(1-n_j)+\frac{\exp(-\beta\lambda_j)}{1+\exp(-\beta\lambda_j)}n_j$, i.e., $\mathcal{L}_{j,\rm KMS}(\sigma_\beta)=0$. Here,
\[
L_{j,\rm c,\infty}=\int^{\infty}_{-\infty}f(t)\exp(i\lambda_jt)d^\dagger_j\mathrm{d}t,\quad L_{j,\rm n,\infty}=\int^{\infty}_{-\infty}f(t)\exp(-i\lambda_jt)d_j\mathrm{d}t\,.
\]
In addition, according to the formula of $B_{j}$ in~\cite[Theorem 10]{Ding_2025}, we have $[B_{j},n_k]=0$ for any $k$.

Furthermore, according to~\cite[Lemma 30, (D3, D13)]{Ding_2025}, we have $\|f(t)\|_{L^\infty}=\mathcal{O}(1)$, $\|f(t)\|_{L^1}=\mathcal{O}\left(\log(2+\beta S)\right)$ and
\[
\int_{|t|>T}|f(t)|\mathrm{d}t=\mathcal{O}\left((1+\beta)\exp\left(-\Theta\left((T/(1+\beta))^{1/s}\right)\right)\right)\,.
\]
Because $\|d^\dagger_j\|=\|d_j\|=1$, this implies that
\[
\left\|\mathcal{L}_{j,\rm KMS}+i[B_j-G_{j, \rm Lamb},\cdot]-\mathcal{L}_{j}\right\|_{1\leftrightarrow1}=\mathcal{O}\left((1+\beta)\exp\left(-\Theta\left((T/(1+\beta))^{1/s}\right)\right)\right)\,.
\]
Furthermore, according to~\cref{prop:zero_bath}~\eqref{eqn:Phi_alpha_j}, we have
\begin{equation}\label{eqn:Phi_alpha_j_approx}
\begin{aligned}
&\left\|\Phi_{\alpha,j}-\underbrace{\mathcal{U}_j(T)\circ\exp\left(\left(\mathcal{L}_{j,\rm KMS}+i[B_j-G_{j, \rm Lamb},\cdot]\right)\alpha^2\right)\circ \mathcal{U}_j(T)}_{:=\widetilde{\Phi}_{\alpha,j}}\right\|_{1\leftrightarrow1}\\
=&\mc{O}\left(\alpha^2(1+\beta)\exp\left(-\Theta\left((T/(1+\beta))^{1/s}\right)\right)+\alpha^4\|f(t)\|_{L^1}^4\right)\\
=&\mc{O}\left(\alpha^2(1+\beta)\exp\left(-\Theta\left((T/(1+\beta))^{1/s}\right)\right)+\alpha^4\log^4(2+\beta S)\right)
\end{aligned}
\end{equation}
It is straightforward to check that $\widetilde{\Phi}_{\alpha,j}(\sigma_{\beta})=\sigma_{\beta}$ for each $j$. In addition, $\tilde{\Phi}_{\alpha,j}$ also takes the form of~\cref{prop:zero_bath}~\eqref{eqn:exact_fermionic_mode_channel} such that
\begin{equation}\label{eqn:exact_fermionic_mode_channel_perfect}
\widetilde{\Phi}_{\alpha,j}(\rho)
=\sum_{r\in\{00,10,01,11\}}\widetilde{K}_{j,r}\rho \widetilde{K}_{j,r}^{\dagger},
\qquad
\sum_{r\in\{00,10,01,11\}}\widetilde{K}_{j,r}^{\dagger}\widetilde{K}_{j,r}=I,
\end{equation}
where the four Kraus operators have the form
\begin{equation}\label{eqn:exact_fermionic_mode_kraus_perfect}
\begin{aligned}
\widetilde{K}_{j,00}&=\widetilde{a}_{j,00}^{(0)}(I-n_j)+\widetilde{a}_{j,00}^{(1)}n_j,
&\widetilde{K}_{j,10}&=\widetilde{a}_{j,10}d_j^\dagger,\\
\widetilde{K}_{j,01}&=\widetilde{a}_{j,01}d_j,
&\widetilde{K}_{j,11}&=\widetilde{a}_{j,11}^{(0)}(I-n_j)+\widetilde{a}_{j,11}^{(1)}n_j.
\end{aligned}
\end{equation}
According to~\cref{prop:zero_bath} and~\eqref{eqn:Phi_alpha_j_approx} and the eigenvector perturbation theorem, we can set $T=\Omega((1+\beta)\mathrm{polylog}(2+\beta))$ and $\alpha=\widetilde{\mathcal{O}}(\log^{-2}(2+\beta S))$ so that
\[
\rho_{\rm fix}\left(\Phi_\alpha\right)=\prod^N_{j=1}\widetilde{\sigma}_{\beta,j},\ \text{such that}\quad \widetilde{\sigma}_{\beta,j}=\widetilde{p}_jn_j+(1-\widetilde{p}_j)(I-n_j),\quad \Phi_{\alpha,j}(\widetilde{\sigma}_{\beta,j})=\widetilde{\sigma}_{\beta,j},\quad 0\leq \widetilde{p}_j\leq 1
\]
This means that the fixed point of $\Phi_\alpha$ is a product state, and the $j$-th mode of $\rho_{\rm fix}\left(\Phi_\alpha\right)$ does not have off-diagonal terms in the $n_j$ basis.

Furthermore, for any $1\leq j\leq N$, we define two quantum channels
\[
\mathcal{R}_j(\rho)=\left(\prod^j_{i=1}\sigma_{\beta,i}\right)\left(I_{1:j}\otimes \mathrm{Tr}_{1:j}(\rho)\right),\quad \widetilde{\mathcal{R}}_j(\rho)=\left(\prod^j_{i=1}\widetilde{\sigma}_{\beta,i}\right)\left(I_{1:j}\otimes \mathrm{Tr}_{1:j}(\rho)\right)\,,
\]
where $\mathrm{Tr}_{1:j}$ denotes the partial trace over the first $j$ modes and $I_{1:j}$ denotes the identity operator on the first $j$ modes. We also define $\mc{R}_0=\mathrm{id}$ and $\widetilde{\mc{R}}_0=\mathrm{id}$.

Given any $\rho$,
\[
\widetilde{\Phi}^m_{\alpha,j}\left(\mathcal{R}_{j-1}(\rho)\right)=\left(\prod^{j-1}_{i=1}\sigma_{\beta,i}\right)\widetilde{\Phi}^m_{\alpha,j}\left(I_{1:j-1}\otimes \mathrm{Tr}_{1:j-1}(\rho)\right)\,.
\]
We note that $\widetilde{\Phi}^m_{\alpha,j}$ acts on $I_{1:j-1}\otimes \mathrm{Tr}_{1:j-1}(\rho)$ is same as a local quantum channel that only nontrivially acts on the $j$-th qubit with $d^\dagger_j$ replaced by $\ket{1}\bra{0}$ and $d_j$ replaced by $\ket{0}\bra{1}$. Therefore,there exists a uniform constant $\kappa=\Omega(1)$ with $\kappa<1$ such that
\[
\sup_j\left\|\widetilde{\Phi}^m_{\alpha,j}\left(\mathcal{R}_{j-1}(\cdot)\right)-\mathcal{R}_{j}\right\|_{1\leftrightarrow1}=\mathcal{O}\left(\frac{\epsilon}{N}\right),\quad\text{when}\quad m=\Omega\left(\alpha^{-2}\kappa^{-1}\mathrm{polylog}(N/\epsilon)\right)\,.
\]
According to the eigenvalue perturbation theorem and~\cref{eqn:Phi_alpha_j_approx}, when $T=\Omega((1+\beta)\mathrm{polylog}(2+\beta/\kappa))$ and $\alpha=\widetilde{\mathcal{O}}(\kappa^{1/2}/\log^{2}(2+\beta S))$, we also have
\[
\sup_j\left\|\Phi^{m}_{\alpha,j}\left(\widetilde{\mathcal{R}}_{j-1}(\cdot)\right)-\widetilde{\mathcal{R}}_{j}\right\|_{1\leftrightarrow1}=\mathcal{O}\left(\frac{\epsilon}{N}\right),\quad\text{when}\quad m=\Omega\left(\alpha^{-2}\kappa^{-1}\mathrm{polylog}(N/\epsilon)\right)\,.
\]
This implies that
\begin{equation}\label{eqn:zero_bath_mixing_times_bound}
\begin{aligned}
&\left\|\Phi_{\alpha,N}^m\circ \dots \circ \Phi_{\alpha,1}^m(\rho)-\rho_{\rm fix}\left(\Phi_\alpha\right)\right\|_1\\
\leq &\sum^N_{j=1}\left\|\Phi_{\alpha,N}^m\circ \dots \Phi_{\alpha,j}^m\left(\widetilde{\mc R}_{j-1}(\rho)\right)-\Phi_{\alpha,N}^m\circ \dots \Phi_{\alpha,j+1}^m\left(\widetilde{\mc R}_{j}(\rho)\right)\right\|_1\\
\leq &\sum^N_{j=1}\left\|\Phi^{m}_{\alpha,j}\left(\widetilde{\mathcal{R}}_{j-1}(\rho)\right)-\widetilde{\mathcal{R}}_{j}(\rho)\right\|_{1}=\mathcal{O}\left(\epsilon\right),\ \text{when}\ m=\Omega\left(\alpha^{-2}\kappa^{-1}\mathrm{polylog}(N/\epsilon)\right)\,.
\end{aligned}
\end{equation}
for any density operator $\rho$.

Finally, according to~\cite[Theorem 8]{ding2025endtoendefficientquantumthermal} and~\cref{eqn:Phi_alpha_j_approx}, setting $T=\Theta((1+\beta)\mathrm{polylog}(N(1+\beta)/\epsilon))$ and $\alpha=\widetilde{\Theta}(\epsilon^{1/2}/(\log^{2}(2+\beta S)\min\{N^{1/4}\exp(\beta\Lambda/4),N^{1/2}\}))$ gives
\[
\sup_j\left\|
\begin{pmatrix}
\frac{1}{1+\exp(-\beta\lambda_j)}&0\\
0&\frac{\exp(-\beta\lambda_j)}{1+\exp(-\beta\lambda_j)}
\end{pmatrix}
-
\begin{pmatrix}
1-\widetilde p_j&0\\
0&\widetilde p_j
\end{pmatrix}\right\|_1=\mathcal{O}\left(\epsilon\max\left\{\frac{1}{\sqrt{\exp(\beta \Lambda)N}},\frac{1}{N}\right\}\right)\,.
\]
Since $\Lambda:=\max_k|\lambda_k|=\mathcal{O}(1)$, we have
$\begin{pmatrix}
\frac{1}{1+\exp(-\beta\lambda_j)}&0\\
0&\frac{\exp(-\beta\lambda_j)}{1+\exp(-\beta\lambda_j)}
\end{pmatrix}\geq CI$, where $C=\Omega(\exp(-\beta\Lambda))$ is a constant independent of $N$ and $j$. According to~\cref{lem:product_state_error}, after tuning the constants in $T,\alpha$ properly, we have
\[
\left\|\rho_{\rm fix}(\Phi_{\alpha})-\sigma_{\beta}\right\|_1\leq \frac{\epsilon}{2}\,.
\]
Combining this and~\eqref{eqn:zero_bath_mixing_times_bound}, we obtain
\[
\sup_{\rho}\left\|\Phi_{\alpha}^m(\rho)-\sigma_{\beta}\right\|_1\leq \epsilon,\quad \text{when}\quad m=\widetilde{\Theta}\left(\frac{\min\{N^{1/2}\exp(\beta\Lambda/2),N\}}{\epsilon}\right)\,.
\]
This concludes the proof of~\cref{thm:thermal_zero_bath}.
\end{proof}

\subsection{Proof of Theorem~\ref{thm:zero_bath_lower_bound_rigor} and Lemma~\ref{lem:zero_bath_lower_bound_rigor}}\label{sec:proof_thermal_lower_bound}

By~\cref{prop:zero_bath} and~\eqref{eqn:exact_fermionic_mode_kraus},
\[
\Phi_\alpha=\Phi_{\alpha,1}\circ\cdots\circ\Phi_{\alpha,N},
\qquad
\Phi_{\alpha,j}(\rho)
=\sum_{\mu\in\{00,10,01,11\}}K_{j,\mu}\rho K_{j,\mu}^\dagger.
\]
Following the notation in~\cref{prop:zero_bath}, we define
\[
\begin{gathered}
a_j:=a_{\alpha,T}(\lambda_j)=|a_{j,10}|^2,
\qquad
z_j:=z_{\alpha,T}(\lambda_j)=|a_{j,01}|^2,
\qquad
q_j:=q_{\alpha,T}(\lambda_j)=a_j+z_j.
\end{gathered}
\]
Trace preservation gives
\[
\sum_{\mu\in\{00,11\}}|a_{j,\mu}^{(0)}|^2=1-a_j,
\qquad
\sum_{\mu\in\{00,11\}}|a_{j,\mu}^{(1)}|^2=1-z_j.
\]
We first consider the case when the density operator $\rho$ is diagonal in $\ketbra{0}{0},\ketbra{1}{1}$ basis of $d_j$. Define
\[
\begin{gathered}
v(r):=(1-r)\ket{0}\bra{0}+r\ket{1}\bra{1},
\quad 0\le r\le1,
\qquad
\rho_{\mathbf r}:=\bigotimes_{k=1}^Nv(r_k),\\
\rho_{<j}:=\bigotimes_{k<j}v(r_k),
\qquad
\rho_{>j}:=\bigotimes_{k>j}v(r_k).
\end{gathered}
\]
Since
\[
d_j=Z^{\otimes(j-1)}\otimes\ket{0}\bra{1}\otimes I^{\otimes(N-j)},
\qquad
Z:=\ket{0}\bra{0}-\ket{1}\bra{1},
\qquad
Zv(r)Z=v(r),
\]
we obtain, with $n_j=d_j^\dagger d_j$,
\[
\begin{aligned}
\Phi_{\alpha,j}(\rho_{\mathbf r})
&=(1-a_j)(I-n_j)\rho_{\mathbf r}(I-n_j)
 +(1-z_j)n_j\rho_{\mathbf r}n_j+a_jd_j^\dagger\rho_{\mathbf r}d_j
 +z_jd_j\rho_{\mathbf r}d_j^\dagger\\
&=\rho_{<j}\otimes
\begin{pmatrix}
(1-a_j)(1-r_j)+z_jr_j&0\\
0&a_j(1-r_j)+(1-z_j)r_j
\end{pmatrix}
\otimes\rho_{>j}\\
&=\rho_{<j}\otimes v(r_j')\otimes\rho_{>j},
\qquad
r_j'=a_j(1-r_j)+(1-z_j)r_j=a_j+(1-q_j)r_j.
\end{aligned}
\]
Consequently,
\begin{equation}\label{eq:zero_bath_population_update}
\Phi_\alpha\!\left(\bigotimes_{j=1}^N v(r_j)\right)
=\bigotimes_{j=1}^N v(r_j')
=\bigotimes_{j=1}^N
v\!\left(a_{\alpha,T}(\lambda_j)+(1-q_{\alpha,T}(\lambda_j))r_j\right).
\end{equation}
For sufficiently large $T$ and small $\alpha$,~\eqref{eq:zero_bath_transition_expansion} below gives $q_{\alpha,T}(\lambda)=\Theta(\alpha^2)>0$ uniformly on $[-1,1]$. Thus
\[
p_{\alpha,T}(\lambda):=\frac{a_{\alpha,T}(\lambda)}{q_{\alpha,T}(\lambda)}
\quad\Longrightarrow\quad
a_{\alpha,T}(\lambda)+(1-q_{\alpha,T}(\lambda))p_{\alpha,T}(\lambda)
=p_{\alpha,T}(\lambda),
\]
and
\begin{equation}\label{eq:zero_bath_stationary_product}
\rho_*:=\bigotimes_{j=1}^N v(p_{\alpha,T}(\lambda_j)),
\qquad
\Phi_\alpha(\rho_*)=\rho_*.
\end{equation}

Let $F_T(\nu)=\int_{-T}^T f(t)e^{i\nu t}\,\mathrm dt$, let $J_T(\nu)$ be the integral defining $J(\nu)$ in~\cref{cond:zero_bath_improvement} with all three times restricted to $[-T,T]$, and set $G_T(\nu)=-2\operatorname{Re}[\overline{F_T(\nu)}J_T(\nu)]$. A direct Dyson expansion gives
\begin{equation}\label{eq:zero_bath_transition_expansion}
\begin{aligned}
a_{\alpha,T}(\lambda)
&=\alpha^2|F_T(\lambda)|^2+\alpha^4G_T(\lambda)+\mathcal O(\alpha^6),\\
z_{\alpha,T}(\lambda)
&=\alpha^2|F_T(-\lambda)|^2+\alpha^4G_T(-\lambda)+\mathcal O(\alpha^6).
\end{aligned}
\end{equation}
Write $q_2(\lambda)=|\hat f(\lambda)|^2+|\hat f(-\lambda)|^2$, which is bounded below by a positive constant on $[-1,1]$. It follows that
\begin{equation}\label{eq:zero_bath_leading_stationary_occupation}
p_{\alpha,T}(\lambda)
=\frac{|\hat f(\lambda)|^2}{q_2(\lambda)}+\mathcal O(\alpha^2+\eta(T)).
\end{equation}
If~\eqref{eqn:zero_bath_second_order_thermal_condition} holds, define $\Delta_4(\lambda)=(1-p(\lambda))G(\lambda)-p(\lambda)G(-\lambda)$. Expanding the same ratio one order further gives
\begin{equation}\label{eq:zero_bath_stationary_bias}
p_{\alpha,T}(\lambda)-p(\lambda)
=\alpha^2\frac{\Delta_4(\lambda)}{q_2(\lambda)}
+\mathcal O(\alpha^4+\eta(T)).
\end{equation}
All constants in these estimates are independent of $N$.

\begin{proof}[Proof of~\cref{thm:zero_bath_lower_bound_rigor}]
Suppose first that both improvement conditions hold for every $\lambda\in[-1,1]$. Under the assumption $\eta(T)\le C_{\rm tail}\alpha^4$,~\eqref{eq:zero_bath_stationary_bias} gives
\[
\sup_{|\lambda|\le1}|p_{\alpha,T}(\lambda)-p(\lambda)|
=\mathcal O(\alpha^4).
\]
Using~\cref{lem:product_state_error},
\begin{equation}
\|\rho_*-\sigma_1(H)\|_1
\le C\sqrt N\,\alpha^4.
\end{equation}
Furthermore, the proof of~\cref{thm:thermal_zero_bath} shows that there exist constants $C,c>0$ such that
\begin{equation}\label{eq:zero_bath_exact_mixing_bound}
\sup_\rho\|\Phi_\alpha^m(\rho)-\rho_*\|_1
\le CN e^{-c\alpha^2m}.
\end{equation}
Choose $\alpha=\Theta(\epsilon^{1/4}N^{-1/8})$ with a sufficiently small prefactor and choose $T\ge T_0$ so that $\eta(T)\le C_{\rm tail}\alpha^4$. The stretched-exponential tail permits $T=\widetilde{\mathcal O}(1)$. Taking $m=\mathcal O(\alpha^{-2}\log(N/\epsilon))$ makes both errors at most $\epsilon/2$, and hence
\[
T_{\rm total}=2Tm=\widetilde{\mathcal O}(N^{1/4}/\epsilon^{1/2}).
\]
This proves sufficiency for arbitrary free-fermion Hamiltonians in the stated class.

Conversely, fix an energy $\lambda$ at which an improvement condition fails and set $H=\lambda\sum_{j=1}^N n_j$. If the second-order condition fails,~\eqref{eq:zero_bath_leading_stationary_occupation} leaves a nonzero limiting bias. Thus the equilibrium state is $\Omega(1)$ away from the target state. Otherwise $\Delta_4(\lambda)\ne0$. Since $\eta(T)\le C_{\rm tail}\alpha^4$,~\eqref{eq:zero_bath_stationary_bias} becomes
\[
p_{\alpha,T}(\lambda)-p(\lambda)
=\alpha^2\frac{\Delta_4(\lambda)}{q_2(\lambda)}+\mathcal O(\alpha^4).
\]
Thus, for sufficiently small $\alpha_0$ and every $0<\alpha\le\alpha_0$,
\[
|p_{\alpha,T}(\lambda)-p(\lambda)|
\ge\frac{|\Delta_4(\lambda)|}{2q_2(\lambda)}\alpha^2.
\]
For a fixed $p\in(0,1)$ and $r$ in a compact subinterval of $(0,1)$, the elementary binomial comparison is
\begin{equation}\label{eq:zero_bath_binomial_shift}
\|v(r)^{\otimes N}-v(p)^{\otimes N}\|_1
\ge c_p\min\{1,\sqrt N\,|r-p|\}.
\end{equation}
Applying this estimate to~\eqref{eq:zero_bath_stationary_product} yields
\[
\|\rho_*-\sigma_1(H)\|_1\ge c\min\{1,\sqrt N\,\alpha^2\}.
\]
Since $\rho_*$ is an allowed initial state, uniform $\epsilon$-accuracy requires $\alpha^2\lesssim\epsilon/\sqrt N$ for sufficiently small $\epsilon$.

Finally, set $q:=q_{\alpha,T}(\lambda)\in(0,1)$, $p_*:=p_{\alpha,T}(\lambda)$, and
\[
\rho_0^{(b)}:=v(b)^{\otimes N},
\qquad
\rho_m^{(b)}:=\Phi_\alpha^m(\rho_0^{(b)}),
\qquad b\in\{0,1\}.
\]
By~\eqref{eq:zero_bath_population_update},
\[
\rho_m^{(b)}=v(r_m^{(b)})^{\otimes N},
\qquad
r_m^{(b)}=p_*+(b-p_*)(1-q)^m.
\]
Uniform $\epsilon$-accuracy implies
\[
\begin{aligned}
2\epsilon
&\ge\sum_{b=0}^1\|\rho_m^{(b)}-\sigma_1(H)\|_1
\ge\|\rho_m^{(1)}-\rho_m^{(0)}\|_1\ge\left\|\operatorname{Tr}_{2,\ldots,N}
\bigl(\rho_m^{(1)}-\rho_m^{(0)}\bigr)\right\|_1
=\|v(r_m^{(1)})-v(r_m^{(0)})\|_1\\
&=2|r_m^{(1)}-r_m^{(0)}|=2(1-q)^m.
\end{aligned}
\]
Hence, using $q=\Theta(\alpha^2)$,
\[
m\ge\frac{\log(1/\epsilon)}{-\log(1-q)}
\gtrsim\frac1{\alpha^2}\log(1/\epsilon)
\gtrsim\frac{\sqrt N}{\epsilon}\log(1/\epsilon).
\]
This concludes the proof.
\end{proof}

\begin{proof}[Proof of~\cref{lem:zero_bath_lower_bound_rigor}]
Put $q(\nu)=e^{-\sqrt{1+\nu^2}/4}w(\nu/S)$, so that $F_S(\nu)=e^{-\nu/4}q(\nu)$ and the second-order condition holds. Since $q$ is even and positive on $[-1,1]$, the fourth-order condition is equivalent to $D(\nu)=0$ on that interval, where
\[
D(\nu)=e^{\nu/4}\operatorname{Re}J(\nu)-e^{-\nu/4}\operatorname{Re}J(-\nu).
\]
We show that $D'(0)>0$. For $x>0$ and $0\le s\le1/2$, define
\[
M_s(x)=\int_{\mathbb R}q(x/2+y)q(x/2-y)\cosh(sy)\,\mathrm dy,
\qquad k_s=\frac12-s,\quad c_s=\frac12-\frac s2.
\]
A direct Fourier calculation in the ordered time variables gives
\begin{equation}\label{eq:zero_bath_cutoff_derivative}
D'(0)=\frac1{\pi^2}\int_0^\infty\frac1x
\int_0^{1/2}M_s(x)
\bigl[-q'(x)\cosh(c_sx)-k_sq(x)\sinh(c_sx)\bigr]\,\mathrm ds\,\mathrm dx.
\end{equation}
The function $M_s(x)$ is nondecreasing in $s$, whereas $k_s\tanh(c_sx)$ is decreasing. Their opposite monotonicity, under the measure proportional to $\cosh(c_sx)\,\mathrm ds$, implies
\[
\frac{\int_0^{1/2}M_s(x)k_s\sinh(c_sx)\,\mathrm ds}
{\int_0^{1/2}M_s(x)\cosh(c_sx)\,\mathrm ds}
\le\tau(x)
:=\frac{\cosh(x/2)}{2[\sinh(x/2)-\sinh(x/4)]}-\frac2x
\]
whenever the denominator is nonzero. Elementary hyperbolic-function estimates show that $\tau$ is increasing and $0<\tau(x)<5x/48$. Moreover, putting $r=\sqrt{119}/5$ and $m(x)=x/(4\sqrt{1+x^2})$, direct evaluation gives
\[
\begin{aligned}
\tau(5/2)&<21/100<m(r),\\
\tau(11/4)&<223/1000<m(5/2),\\
\tau(3)&<117/500<m(11/4).
\end{aligned}
\]
Since $m(x)\ge5x/48$ for $0<x\le r$, these estimates and monotonicity give $m(x)>\tau(x)$ throughout $(0,3]$. The cutoff assumptions imply $-q'(x)\ge m(x)q(x)$, and $q$ vanishes outside $[-S,S]$ with $S\le3$. Thus~\eqref{eq:zero_bath_cutoff_derivative} has nonnegative integrand after integration in $s$, and it is strictly positive for $0<x<S$. Consequently $D'(0)>0$, contradicting fourth-order preservation on $[-1,1]$.

To obtain~\cref{thm:zero_bath_lower_bound_new_1} with the prescribed choice $S=\|h\|+1$, we set $S=2$ and choose a sufficiently small nonzero energy $\lambda$ at which fourth-order preservation fails. We take $N-1$ modes at energy $\lambda$ and one mode at energy $1$. Then $\|h\|=1$, and the stationary-bias and mixing lower bounds applied to the first $N-1$ modes give $\Omega(\sqrt{N-1}/\epsilon)=\Omega(\sqrt N/\epsilon)$.
\end{proof}

\section{State preparation in the random-bath setting}\label{sec:state_random_bath}

In this section, we consider both thermal state preparation and ground state preparation in the random-bath setting. We first provide a rigorous statement of Theorem~\ref{thm:main_ground}, which gives an upper bound on the complexity of ground state preparation, and then provide a rigorous statement of Theorem~\ref{thm:random_bath_lower_bound}, which establishes a lower bound on the complexity of thermal state preparation.

To simplify the analysis of the upper bound, we consider a setup that is simpler than the one in~\cref{assump:zero_bath}:
\begin{assumption}[Random bath Hamiltonian case]\label{assump:zero_bath_ground}
Let $\{b^\dagger_j,b_j\}^N_{j=1}$ be fermionic creation and annihilation operators on the bath. We set
\[
H_E=\omega\sum^{N}_{i=1}b^\dagger_{i}b_{i},\quad
B_{E,i}=b^\dagger_{i},\quad
\rho_E:\,\text{ground state of }H_E,
\]
and
\[
H_{\rm int}(t)
=f(t)\sum_{i=1}^N
\left(c^\dagger_i\otimes B_{E,i}+c_i\otimes B_{E,i}^\dagger\right)\,.
\]
Here, $\omega$ is sampled uniformly from $[-S,S]$, and $f(t)=\frac{e^{-t^2/(4\sigma^2)}}{\sqrt{\sigma\sqrt{2\pi}}}$.
\end{assumption}

The choice of $f(t)$ in~\cref{assump:zero_bath_ground} is inspired by~\cite{ding2025endtoendefficientquantumthermal,wang2025lindblad}. Although~\cref{assump:zero_bath_ground} is a simplified version of~\cref{assump:zero_bath}, involving only $N$ bath modes, we note that using the same choice of $f(t)$ in the more general setting of~\cref{assump:zero_bath} would lead to the same complexity scaling as in the following theorem:
\begin{thm}\label{thm:random_bath_ground}
Assume $\max_k|\lambda_k|=\mathcal O(1)$ and $\Delta:=\min_k|\lambda_k|>0$. When
\[
\sigma=\Theta(\Delta^{-1}\mathrm{polylog}(N/\epsilon)),\qquad
T=\Theta(\Delta^{-1}\mathrm{polylog}(N/\epsilon)),
\]
$\alpha\sqrt{\sigma}=\Theta(1)$ is sufficiently small, $S>\max_j|\lambda_j|$, and $S=\Theta(1)$, then, for any initial state $\rho$,
\[
\|\Phi^m_{\alpha}\rho-\ketbra{\psi_0}{\psi_0}\|_1\leq \epsilon,\qquad
m=\Theta\left(\frac{1}{\Delta}\mathrm{polylog}(N/\epsilon)\right)\,.
\]
In particular, the ground state $\ket{\psi_0}$ can be prepared to $\epsilon$-accuracy with total Hamiltonian simulation time
\[
T_{\rm total}=\widetilde{\mathcal{O}}(\Delta^{-2})\,.
\]
\end{thm}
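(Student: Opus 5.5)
The argument follows the zero-bath proof of~\cref{thm:thermal_zero_bath}, but uses exact single-mode analysis instead of the Lindblad approximation.

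\textbf{Step 1 (mode decoupling).} Introduce $e_j^\dagger=\sum_i U^*_{i,j}b_i^\dagger$ exactly as in the proof of~\cref{prop:zero_bath}. Since $H_E=\omega\sum_j e_j^\dagger e_j$ is invariant under this unitary rotation, the joint Hamiltonian splits as $\sum_j H_{\alpha,j}(t)$, with
\[
H_{\alpha,j}(t)=\lambda_j n_j+\omega e_j^\dagger e_j+\alpha f(t)\,(d_j^\dagger e_j^\dagger+d_j e_j).
\]
These terms are mutually commuting because each coupling is odd in both system and bath. Conditional on $\omega$, the channel therefore factorizes as $\Phi_\alpha^{\omega}=\Phi^\omega_{\alpha,1}\circ\cdots\circ\Phi^\omega_{\alpha,N}$. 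Each factor has the four-Kraus form~\eqref{eqn:exact_fermionic_mode_kraus}, with coefficients depending on $\omega$. The difficulty is that the average over the shared $\omega$ destroys this product structure.

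\textbf{Step 2 (exact preservation of the vacuum-type ground state).} The ground state is $\ket{\psi_0}=\bigotimes_j\ket{g_j}$, with $g_j=1$ if $\lambda_j<0$ and $g_j=0$ otherwise. The bath starts in $\ket{0}$ when $\omega>0$ (and in $\ket 1$ when $\omega<0$). Each $H_{\alpha,j}$ conserves the parity of $n_j+e_j^\dagger e_j$ within a two-dimensional block. Consider the block spanned by $\ket{g_j}\otimes\ket{\text{bath gs}}$ and the state obtained from it by one transition. This is a driven two-level system with detuning $\lambda_j\pm\omega$. For the harmful direction, where energy is pumped into the system, the effective detuning has magnitude $|\lambda_j|+|\omega|\ge\Delta$. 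The Gaussian $f$ is smooth and slowly varying, with $\sigma\gg\Delta^{-1}$, and $\alpha\|f\|_{L^\infty}\sim\alpha/\sqrt\sigma$ is small. Hence the adiabatic theorem for Gaussian pulses gives a leakage amplitude of order $\exp(-c\sigma^2\Delta^2)$ plus a tail error of order $e^{-T^2/\sigma^2}$. This bound holds to \emph{all orders} in $\alpha$, because the two-level evolution can be treated nonperturbatively (Landau--Zener/adiabatic estimate). With $\sigma,T=\Delta^{-1}\mathrm{polylog}(N/\epsilon)$, each mode's ground state is therefore preserved up to $\epsilon/N^2$ for every $\omega$. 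Summing over modes bounds $\|\Phi_\alpha(\ketbra{\psi_0}{\psi_0})-\ketbra{\psi_0}{\psi_0}\|_1$ by $\epsilon/N$.

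\textbf{Step 3 (uniform single-mode contraction).} Now consider the helpful transition, which takes an excited mode toward its ground state. It is resonant with the bath when $\omega\approx-\lambda_j$ (with the sign conventions above), and $S>\max|\lambda_j|$ guarantees that this value lies in $[-S,S]$. For $\omega$ in a window of width about $1/\sigma$, the pulse area $\alpha\int f\sim\alpha\sqrt\sigma=\Theta(1)$ produces an $\Omega(1)$ de-excitation probability. Averaging over the uniform $\omega$, the population of the wrong state in mode $j$ therefore decays by a factor $1-\kappa$ per step, with $\kappa=\Omega(1/(\sigma S))=\widetilde\Omega(\Delta)$. Coherences decay at least as fast.

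\textbf{Step 4 (global mixing despite shared $\omega$).} Use the observable $\mathcal N=\sum_j \mathbf 1[n_j\ne g_j]$, the number of wrong modes. Because each $\Phi^\omega_{\alpha,j}$ almost exactly preserves the correct value (Step 2), the wrong-mode indicator is a supermartingale conditional on $\omega$. Its expectation contracts after averaging: $\mathbb E\,\mathcal N_{m+1}\le(1-\kappa)\mathbb E\,\mathcal N_m+\epsilon/N$. This bound only needs the mode-wise marginals, so it survives the correlations introduced by the shared $\omega$. Then $\mathbb E\,\mathcal N_m\le Ne^{-\kappa m}+\epsilon/(N\kappa)$, and the fidelity error satisfies $1-\langle\psi_0|\rho_m|\psi_0\rangle\le\mathbb E\,\mathcal N_m$. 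Taking $m=\kappa^{-1}\log(N/\epsilon)=\Delta^{-1}\mathrm{polylog}(N/\epsilon)$ makes this at most $\epsilon^2$, and the Fuchs--van de Graaf inequality then gives trace distance at most $\epsilon$. Off-diagonal terms are handled by the Kraus structure: they never raise this fidelity bound, so no separate argument is needed. The total time is $2Tm=\widetilde{\mathcal O}(\Delta^{-2})$.

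\textbf{Main obstacle.} The hardest step is Step 2: the all-orders adiabatic leakage bound, uniform in $\omega\in[-S,S]$, including near $\omega\approx0$, where the bath gap closes but the detuning stays at least $\Delta$. I would first attempt it by writing the $2\times2$ Schrödinger equation and integrating by parts repeatedly against the phase $e^{i(\lambda+\omega)t}$. The smoothness of the Gaussian then yields super-polynomial decay.
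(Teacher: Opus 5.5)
Your route is the paper's route. You decouple into commuting single-mode problems conditional on $\omega$. You analyze the two-level block, where the correct occupation is exactly invariant for one sign of $\omega$ and leaks only off-resonantly (detuning $|\lambda_j|+|\omega|\ge\Delta$) for the other. You get contraction at rate $C\alpha^2\sim1/\sigma$ from a resonant window of width $\sim1/\sigma$. You finish with the wrong-mode number $\hat N$ and Fuchs--van de Graaf. Your remark that only mode-wise marginals matter is made precise in the paper by $\Phi^\dagger_{\alpha,k}(\omega)[n_j]=n_j$ for $k\neq j$ and $\Phi^\dagger_{\alpha,j}(\omega)[n_j]\in\mathrm{span}\{n_j,I\}$. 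These give $(\Phi_\alpha^m)^\dagger(n_j)=\bigl((\mathbb E_\omega\Phi_{\alpha,j}(\omega))^\dagger\bigr)^m[n_j]$ despite the shared $\omega$. Two points need correcting, though neither breaks the theorem.

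First, the leakage rate claimed in Step 2 is false. At $\alpha\sqrt\sigma=\Theta(1)$, only the first Dyson order decays like $e^{-\sigma^2\Delta^2}$. In the order-$(2r-1)$ time-ordered integral, the region where the times coalesce limits the decay to $e^{-\sigma^2\Delta^2/(2r-1)}$. Summing $\sum_r\frac{(C_0\alpha\sqrt\sigma)^{2r-1}}{(2r-1)!}e^{-\sigma^2\Delta^2/(2r-1)}$ gives roughly $\exp(-c\,\sigma\Delta\sqrt{\log(\sigma\Delta)})$. This also matches the Dykhne--Davis--Pechukas asymptotics for a Gaussian pulse, so it is the true size of the leakage, not an artifact of perturbation theory. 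No adiabatic argument will give $e^{-c\sigma^2\Delta^2}$.

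The weaker rate still suffices because $\sigma\Delta=\mathrm{polylog}(N/\epsilon)$. The paper obtains it from exactly this termwise Dyson bound (see \cref{lem:random_bath_ground}), plus a separate window error $\mathcal O(\alpha^2\sigma^2e^{-T^2/(4\sigma^2)}/T)$. If you pursue integration by parts instead, ``super-polynomial decay'' in $\sigma\Delta$ is not enough when $\sigma\Delta$ is only polylogarithmic. You need explicit control of the constants in the number of integrations, that is, a bound of the form $\exp(-c(\sigma\Delta)^a)$.

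Second, the bookkeeping in Step 4 does not close. A total per-step leakage of $\epsilon/N$ leaves a floor $\epsilon/(N\kappa)$ with $\kappa=\widetilde\Theta(\Delta)$, which is not below $\epsilon^2$ in general. Likewise, $m=\kappa^{-1}\log(N/\epsilon)$ only makes $Ne^{-\kappa m}=\epsilon$, not $\epsilon^2$. You need per-mode leakage $\lesssim\kappa\epsilon^2/N$ and $m\approx\kappa^{-1}\log(N/\epsilon^2)$. Both cost only a larger polylog in $\sigma$ and $T$. This corresponds to the $mN\cdot(\text{leakage})$ term under the square root in the paper's final bound.
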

We put the proof in~\cref{sec:pf_mixing_time_random_bath_ground}. Here, we suppress additional polylogarithmic factors in $1/\Delta$ in the parameter scalings and complexity bounds for simplicity.

Next, we provide a rigorous statement of Theorem~\ref{thm:random_bath_lower_bound}. We note that the lower bound is established under a more general assumption than~\cref{assump:zero_bath_ground}:
\begin{thm}\label{thm:random_bath_lower_bound_rigor}
For $N\ge2$, set the system bath interaction algorithm with~\cref{assump:zero_bath} and $f(t)\in L^1(\mathbb{R})$ such that
$\mathbb{E}_{\omega}\left(u(\omega;\lambda)\right)>0$ and $\mathbb{E}_{\omega}\left(d(\omega;\lambda)\right)>0$ for all $\lambda\in[-1,1]$.
If there exists $\lambda\in[-1,1]$ such that
\begin{equation}\label{eqn:vanish_variance}
\int^\infty_{-\infty}\left|u(\omega;\lambda)-\exp(-\lambda)d(\omega;\lambda)\right|^2\mathrm{d}\mu(\omega)>0
\end{equation}
then there are positive constants $T_0,\alpha_0,\epsilon_0,\kappa,C$ such that for any $T>T_0$, $\alpha<\alpha_0$, and $\epsilon<\epsilon_0$ satisfying $\int_{|t|>T/2}|f(t)|\mathrm{d}t\leq \kappa \alpha^2$, we have
\[
\sup_{H=\lambda\sum_{i=1}^Nn_i,|\lambda|\leq 1}\inf\left\{\tau\middle|\sup_{\rho}\sup_{m\geq \tau}\|\Phi^m_{\alpha}(\rho)-\sigma_1(H)\|_1\leq \epsilon\right\}\geq \frac{CN}{\epsilon}\,.
\]
In particular, to achieve $\epsilon$-accuracy for thermal state preparation, in the worst case over the stated Hamiltonians, the total Hamiltonian simulation time is $T_{\rm total}=\Omega(N/\epsilon)$.
\end{thm}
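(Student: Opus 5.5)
For $H=\lambda\sum_{i=1}^N n_i$ with $\lambda$ as in~\eqref{eqn:vanish_variance}, the plan is to show that the stationary state of $\Phi_\alpha$ is a \emph{mixture} of product states, $\rho_*=\mathbb{E}_\omega\bigl[v(p_\omega)^{\otimes N}\bigr]$, whose spread in $\omega$ costs $\Omega(1)$ trace distance unless $\alpha^2\lesssim\epsilon/N$. Combined with a mixing lower bound $m\gtrsim\alpha^{-2}$, this gives $m\gtrsim N/\epsilon$.

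\textbf{Step 1: exact dynamics at fixed $\omega$.} The Hamiltonian is already diagonal ($d_j=c_j$), so for a fixed realization of $\omega$ the mode decoupling in the proof of~\cref{prop:zero_bath} applies verbatim. The bath is a product of thermal single modes, and each system mode couples only to its own pair $b_{j,\pm}$. Hence the conditional channel $\Phi_\alpha^{(\omega)}$ factorizes over modes, with Kraus operators of the form~\eqref{eqn:exact_fermionic_mode_kraus}. As in~\eqref{eq:zero_bath_population_update}, it acts on diagonal product states by $r\mapsto a_\omega+(1-q_\omega)r$, with the same coefficients for all modes.

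Crucially, $\Phi_\alpha=\mathbb{E}_\omega\Phi_\alpha^{(\omega)}$ is applied with a \emph{fresh} $\omega$ in each iteration. Therefore $\Phi_\alpha$ is not the $N$-fold tensor power of one-mode channels: every mode sees the same random rates in a given step. A Dyson expansion at fixed $\omega$, as in~\eqref{eq:zero_bath_transition_expansion}, gives
\[
a_\omega=\alpha^2 u_T(\omega;\lambda)+\mathcal O(\alpha^4),\qquad z_\omega=\alpha^2 d_T(\omega;\lambda)+\mathcal O(\alpha^4),
\]
where the truncation errors are controlled by $\int_{|t|>T/2}|f|\le\kappa\alpha^2$.

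\textbf{Step 2: the stationary state carries an $\Omega(N\alpha^2)$ error.} I track the diagonal state through the occupation-number distribution $P(k)$, $k=|\{i:n_i=1\}|$, which is permutation symmetric. Rather than solving for the fixed point exactly, I compute the first two moments of $k$. Write $\bar u=\mathbb{E}_\omega u$ and $\bar d=\mathbb{E}_\omega d$. The mean relaxes to $N\bar p$ with $\bar p=\bar u/(\bar u+\bar d)+\mathcal O(\alpha^2)$, which I expect to equal the thermal $p(\lambda)$ to leading order. The key quantity is the variance of $k$. In one step, conditional on $\omega$, the modes update independently, but the shift of the mean, $N[(a_\omega-\bar a)-(q_\omega-\bar q)r]$, is common to all of them. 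Its variance is
\[
\alpha^4N^2\,\mathrm{Var}_\mu\!\bigl(u(\omega;\lambda)-p(\lambda)(u+d)(\omega;\lambda)\bigr),
\]
which is a positive multiple of~\eqref{eqn:vanish_variance} (because $1-p=e^{\lambda}p$, this is $p^2\,\mathrm{Var}(u-e^{-\lambda}d)$ up to a positive factor). The recursion for the variance has contraction factor $1-\Theta(\alpha^2)$. The stationary variance is therefore
\[
Np(1-p)+c_\lambda\,\alpha^2N^2+\mathcal O(\alpha^4N^2+\alpha^2 N),\qquad c_\lambda>0.
\]
The thermal state instead has variance exactly $Np(1-p)$ and the binomial law. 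A relative variance excess of order $\alpha^2N$ forces the total-variation distance of the laws of $k$, and hence the trace distance, to be at least $c\min\{1,\alpha^2N\}$. I would prove this by comparing the second moments of the centered statistic $(k-Np)/\sqrt N$, using fourth-moment bounds to control the tails. If the leading-order mean also deviates from $p(\lambda)$, an analogous first-moment argument gives an even larger bias, $\sqrt N\alpha^2$ or $\Omega(1)$. Since $\rho_*$ is an admissible initial state and uniform accuracy must hold for all $m\ge\tau$, we get $\alpha^2\le C'\epsilon/N$.

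\textbf{Step 3: mixing and conclusion.} Start from $v(0)^{\otimes N}$ and $v(1)^{\otimes N}$ and trace out all modes but one. The single-mode marginal evolves under the averaged affine map $r\mapsto\bar a+(1-\bar q)r$, so the two marginals differ by $(1-\bar q)^m$ with $\bar q=\Theta(\alpha^2)$, using the positivity assumptions on $\mathbb{E}u$ and $\mathbb{E}d$. As in the proof of~\cref{thm:zero_bath_lower_bound_rigor}, this gives $m\gtrsim\alpha^{-2}\ge CN/\epsilon$.

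\textbf{Main obstacle.} The hard part is Step 2, converting a variance excess into a trace-distance lower bound uniformly in $N$. Two points need care: the $\mathcal O(\alpha^4)$ corrections must not cancel the $\alpha^2N^2$ term, and the regime $\alpha^2N\gtrsim1$ must be handled by monotonicity or by capping at $\Omega(1)$.
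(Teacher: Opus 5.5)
Your Steps~1 and~3 coincide with the paper's argument, and the second-moment recursion in Step~2 is correct. Your excess $c_\lambda\alpha^2N^2$ in the stationary $\mathrm{Var}(k)$ is, to leading order, $N^2\mathrm{Var}(P_\infty)$. Here $P_\infty$ is the stationary law of the random affine recursion $P\mapsto a_\omega+(1-q_\omega)P$ with a fresh $\omega$ at each step, and the stationary state is $\rho_*=\mathbb{E}[v(P_\infty)^{\otimes N}]$. Your opening formula $\rho_*=\mathbb{E}_\omega[v(p_\omega)^{\otimes N}]$ is not this state. If $p_\omega=a_\omega/q_\omega$ is the fixed-$\omega$ fixed point, it has an $\alpha$-independent spread in $\omega$ under the theorem's hypothesis. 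That formula would then predict an $\Omega(1)$ error for every $\alpha$, which contradicts your own Step~2, whereas in fact $\mathrm{Var}(P_\infty)=\Theta(\alpha^2)$.

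The genuine gap is the step you flag as the main obstacle, and the method you propose for it cannot work. Suppose all you know is that $\Delta:=\mathbb{E}_{\rho_*}[Y^2]-\mathbb{E}_{\sigma_1(H)}[Y^2]\asymp N\alpha^2$ for $Y=(k-Np)/\sqrt N$, and that both fourth moments are at most $K$. Then Cauchy--Schwarz against the difference of the two laws, or truncation at any level, gives only $\|\rho_*-\sigma_1(H)\|_1\ge\Delta^2/(2K)$. Moment information alone cannot do better, since moving mass $\eta$ to distance $(K/\eta)^{1/4}$ saturates this bound. You would obtain $\alpha^2\lesssim\sqrt{\epsilon}/N$ and only $T_{\rm total}=\Omega(N/\sqrt{\epsilon})$.

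The missing idea is to combine the mixture-of-i.i.d.-products structure with a \emph{bounded} test function. The paper tests against $e^{i\theta k}$, whose expectation is $\mathbb{E}[h(P_\infty)]$ with $h(P)=(1+(e^{i\theta}-1)P)^N$. It expands to second order around $p$, taking $\theta\asymp\min\{N^{-1/2},(N\sqrt{\mathbb{E}(P_\infty-p)^2})^{-1}\}$. After factoring out $h'(p)$, the mean shift $\mathbb{E}P_\infty-p$ enters with a real coefficient, while $h''(p)/h'(p)$ has imaginary part $\asymp N\theta$. The two therefore cannot cancel, and the bound is $\gtrsim\min\{1,N\,\mathbb{E}(P_\infty-p)^2\}$, linear rather than quadratic.

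The cubic remainder in that expansion requires $\mathbb{E}|P_\infty-\mathbb{E}P_\infty|^4\lesssim\mathrm{Var}(P_\infty)^2$. The paper derives this from the stationarity identity $AX+B\overset{d}{=}X$. Your proposal tracks only the first two moments of $k$, so even your tail control is unsupported. Working with $\mathbb{E}(P_\infty-p)^2$ also removes a small-$N$ defect of your statistic. The $\mathcal O(\alpha^2N)$ term in $\mathrm{Var}(k)$ contains $N(1-2p)(\mathbb{E}P_\infty-p)$, which has unknown sign. When $N=\mathcal O(1)$ it is of the same order as $N(N-1)\mathrm{Var}(P_\infty)$ and could cancel it.
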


As shown in~\cref{sec:pf_random_bath_lower_bound}, although $\mathrm{E}_{\omega}(\Phi_{\alpha})$ may preserve the target thermal state at each single qubit up to arbitrary high order according to~\cite{wang2025lindblad}, it is not guaranteed that the fixed point of $\Phi_{\alpha}$ can approximate direct tensor product of the single-qubit thermal states. In fact, the $\mathcal{O}(\alpha^2)$ term in the fixed point will be a correlated state with a non-negligible deviation from the target thermal state, which gives rise to $\mathcal{O}(N\alpha^2)$ deviation from the target thermal state in the $N$-qubit case. The corresponding variance-vanishing condition, obtained by replacing $>0$ with $=0$ in~\eqref{eqn:vanish_variance}, thus helps to eliminate two-qubit correlations in the $\mathcal{O}(\alpha^2)$ term and prevents these fluctuations from producing an order-$\alpha^2$ deviation from the target thermal state.

\subsection{Proof of \texorpdfstring{\cref{thm:random_bath_ground}}{Theorem~\ref*{thm:random_bath_ground}}}\label{sec:pf_mixing_time_random_bath_ground}

We first show a proposition similar to~\cref{prop:zero_bath} for the random bath case. The proof is similar to that of~\cref{prop:zero_bath}.
\begin{prop}\label{prop:random_bath}
Under~\cref{assump:zero_bath_ground}, there are CPTP maps
$\Phi_{\alpha,j}(\omega):\mathbb C^{2^N\times2^N}\to
\mathbb C^{2^N\times2^N}$, which act on the system mode corresponding to $(d^\dagger_j,d_j)$ and depend on the random variable $\omega$,
\begin{itemize}
\item For $\omega\neq 0$,
$
[\Phi_{\alpha,j}(\omega),\Phi_{\alpha,k}(\omega)]=0
\quad(j\ne k)$.
\item $\Phi_\alpha
=\mathbb{E}_{\omega}\left(\Phi_\alpha\left(\omega\right)\right):=\mathbb{E}_{\omega}\left(\Phi_{\alpha,1}(\omega)\circ\cdots\circ\Phi_{\alpha,N}(\omega)\right).
$
\item For $\omega\neq 0$,
\begin{equation}
\Phi_{\alpha,j}(\omega)[\rho]=\mathrm{Tr}_{B_j}\left(U_{\alpha,j}(\omega)\left(\rho\otimes\rho_{B}(\omega)\right)U_{\alpha,j}^\dagger(\omega)\right)\,,
\end{equation}
where $U_{\alpha,j}(\omega):=\mathcal T\exp\left(
-i\int_{-T}^T H_{\alpha,j,\omega}(s)\,\mathrm ds\right)$, and $e_j = \sum_k U_{k,j} b_k, $
\[
H_{\alpha, j,\omega}(t) =  \lambda_j d_j^\dagger d_j + \omega  e_j^\dagger e_j + \alpha f(t) (d_j^\dagger \otimes e_j^\dagger + d_j \otimes e_j).
\]
\end{itemize}
More specifically,
\begin{equation}
    \Phi_{\alpha, j} = \mathcal{U}_j(T)\circ \Psi_{\alpha,j} \circ \mathcal{U}_j(T)
\end{equation}
with $\mathcal{U}_j(T)[\rho]=\exp(-i\lambda_jd_j^\dagger d_j T)\rho\exp(i\lambda_jd_j^\dagger d_jT)$, and
\begin{equation}
    \begin{aligned}
        \Psi_{\alpha, j}(\rho)& = \sum_{n,m\geq 0}(-1)^n(i\alpha)^{n+m}\int  \int_{-T\leq s_n\leq \cdots \leq s_1 \leq T, -T\leq \tau_m\leq \cdots\leq \tau_1 \leq T}f(s_1)\cdots f(s_n) f(\tau_1)\cdots f(\tau_m)\\
        &\sum_{\substack{\alpha_1,\ldots,\alpha_n\in\{-1,1\}\\\beta_1,\ldots,\beta_m\in\{-1,1\}}}
e^{-i(\omega+\lambda_j)\sum_{q=1}^m\beta_q\tau_q
+i(\omega+\lambda_j)\sum_{p=1}^{n}\alpha_p s_p}\mathrm{d}s_1 \cdots \mathrm{d}s_n \mathrm{d}\tau_1\cdots \mathrm{d}\tau_m \\
        &d_{\alpha_1,j}\cdots d_{\alpha_n,j} \rho  d_{-\beta_m,j} \cdots d_{-\beta_1,j}\mathrm{Tr}_{E_j}(e_{\alpha_1,j}\cdots e_{\alpha_n,j}\rho_B(\omega)  e_{-\beta_m,j} \cdots e_{-\beta_1,j} ) \,.\\
    \end{aligned}
\end{equation}
with $e_{\eta,j} = \begin{cases}
    e_j^\dagger & \eta = 1\\ e_j & \eta= -1
\end{cases}$, $d_{\eta,j} = \begin{cases}
    d_j^\dagger & \eta = 1\\ d_j & \eta= -1
\end{cases}$.
\end{prop}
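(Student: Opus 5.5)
The proof will follow the same route as \cref{prop:zero_bath}: a mode-by-mode change of basis, a commutation argument, a fermionic factorization of the partial trace, and finally a Dyson expansion in the interaction picture. We condition on a fixed realization of $\omega$. Since $\Phi_\alpha$ is by definition the $\omega$-average of the conditioned channel, the second bullet reduces to proving the factorization $\Phi_\alpha(\omega)=\Phi_{\alpha,1}(\omega)\circ\cdots\circ\Phi_{\alpha,N}(\omega)$ for each fixed $\omega$.

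\emph{Step 1: change of basis.} Write $c_i^\dagger=\sum_j U^*_{i,j}d_j^\dagger$ and define bath modes by $e_j=\sum_k U_{k,j}b_k$, so that $e_j^\dagger=\sum_k U^*_{k,j}b_k^\dagger$. Unitarity of $U$ gives two conclusions. First, the modes $\{e_j\}$ obey the CAR. Second, $H_E=\omega\sum_j e_j^\dagger e_j$ is invariant under this change, and its ground state $\rho_B(\omega)$ is again a product state in the $e$-modes. This product state is the vacuum or the fully filled state according to the sign of $\omega$, which is why $\omega\neq0$ is excluded; $\omega=0$ has measure zero.

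\emph{Step 2: rewrite the interaction.} The interaction becomes
\[
\sum_i c_i^\dagger b_i^\dagger=\sum_j d_j^\dagger e_j^\dagger ,
\]
because $\sum_i U^*_{i,j}U^*_{i,k}$ does not collapse to $\delta_{jk}$ directly. To get this identity I would instead define $e_j^\dagger=\sum_i U^*_{i,j}b_i^\dagger$ and check that the Hermitian conjugate term $c_i b_i=\sum_j d_j e_j$ pairs consistently. I will verify this index bookkeeping carefully, taking the proposition's stated convention as a guide. With it, $H_\alpha(t)=\sum_j H_{\alpha,j,\omega}(t)$.

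\emph{Step 3: commutation and factorization.} Each interaction term in $H_{\alpha,j,\omega}$ is a product of one system fermion and one bath fermion, hence even overall. Number terms are even, and distinct modes anticommute. Together these give $[H_{\alpha,j,\omega}(s),H_{\alpha,k,\omega}(t)]=0$ for $j\neq k$, so the time-ordered exponential factorizes as $U^\alpha=\prod_j U_{\alpha,j}(\omega)$ with commuting factors. For the partial trace I would repeat the argument of \cref{prop:zero_bath}. Because $\rho_B(\omega)$ is a product of diagonal one-mode states, only Dyson terms diagonal on each bath mode survive. Jordan–Wigner signs can then be absorbed by replacing $e_j,e_j^\dagger$ with local $\ket{0}\bra{1},\ket{1}\bra{0}$, since the surviving strings are even in each mode. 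This yields the composition formula and the commutation $[\Phi_{\alpha,j}(\omega),\Phi_{\alpha,k}(\omega)]=0$.

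\emph{Step 4: explicit expansion.} Pass to the interaction picture with respect to $\lambda_j n_j+\omega e_j^\dagger e_j$. The time-dependent interaction is
\[
\alpha f(t)\bigl(e^{i(\lambda_j+\omega)t}d_j^\dagger e_j^\dagger+e^{-i(\lambda_j+\omega)t}d_je_j\bigr)=\alpha f(t)\sum_{\eta=\pm1}e^{i\eta(\lambda_j+\omega)t}d_{\eta,j}e_{\eta,j},
\]
which shows that only the combined phase $\lambda_j+\omega$ appears. The free evolution over $[-T,T]$ splits into two half-evolutions, giving the $\mathcal U_j(T)\circ\Psi_{\alpha,j}\circ\mathcal U_j(T)$ structure. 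The bath free evolution is dropped because $\rho_B$ commutes with $H_E$ and the bath is traced out. Dyson-expanding the forward propagator to order $n$ and the backward propagator to order $m$ produces the coefficients $(-i\alpha)^n(i\alpha)^m=(-1)^n(i\alpha)^{n+m}$.

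Moving the bath operators past the system operators introduces fermionic signs, and I expect this to be the main obstacle. I would handle it by the same local-qubit replacement used in Step 3. Because total parity is preserved, the signs cancel in the surviving terms, leaving the factorized form $d\cdots d\,\rho\, d\cdots d\cdot\mathrm{Tr}_{E_j}(e\cdots e\,\rho_B\,e\cdots e)$ with the stated phases.
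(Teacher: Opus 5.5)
Your proposal is correct and follows essentially the same route as the paper: rotate the bath modes by the same unitary $U$, split $H_\alpha(t)$ into commuting single-mode pieces $H_{\alpha,j,\omega}$, factorize the unitary and then the reduced channel, and Dyson-expand in the interaction picture with phases $e^{i\eta(\lambda_j+\omega)t}$ (you are in fact more explicit than the paper about the $\mathcal U_j(T)\circ\Psi_{\alpha,j}\circ\mathcal U_j(T)$ splitting and the bath-trace factorization). The hesitation in Step 2 is unnecessary, because your Step 1 definition $e_j=\sum_k U_{k,j}b_k$ already gives $b_i^\dagger=\sum_k U_{i,k}e_k^\dagger$, hence $\sum_i c_i^\dagger b_i^\dagger=\sum_{j,k}(U^\dagger U)_{jk}\,d_j^\dagger e_k^\dagger=\sum_j d_j^\dagger e_j^\dagger$ and likewise $\sum_i c_ib_i=\sum_j d_je_j$, exactly as in the paper.
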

\begin{proof}[Proof of~\cref{prop:random_bath}]
    Under~\cref{assump:zero_bath_ground}, the total Hamiltonian is
\begin{equation}
\begin{aligned}
    H_\alpha(t) &= H + H_E + \alpha H_{\rm int}(t)\\
    & = \sum_{k = 1}^N \lambda_k d_k^\dagger d_k + \omega \sum_{k = 1}^N b_k^\dagger b_k + \alpha f(t)\sum_{k=1}^N (c_k^\dagger \otimes B_{E,k} + c_k \otimes B_{E,k}^\dagger)\,.
\end{aligned}
\end{equation}
After applying the unitary transformation, the system Hamiltonian is diagonalized in terms of $d_k, d_k^\dagger$, where the bath Hamiltonian keeps the diagonal form with $e_k, e_k^\dagger$. Here
\begin{equation}
    e_j = \sum_k U_{k,j} b_k, \quad e_j^\dagger = \sum_k U_{kj}^* b_k^\dagger\,.
\end{equation}
Therefore the interaction term is decomposed as a summation of one mode terms so that the total Hamiltonian can be rewritten as
\begin{equation}
    \begin{aligned}
        H_\alpha(t) = \sum_{k = 1}^N \lambda_k d_k^\dagger d_k + \omega \sum_{k = 1}^N e_k^\dagger e_k + \alpha f(t)\sum_{k=1}^N (d_k^\dagger \otimes e_k^\dagger + d_k \otimes e_k)= \sum_k H_{\alpha, k,\omega}(t)\,.\\
    \end{aligned}
\end{equation}
Here
\begin{equation}
    H_{\alpha, j,\omega}(t) =  \lambda_j d_j^\dagger d_j + \omega  e_j^\dagger e_j + \alpha f(t) (d_j^\dagger \otimes e_j^\dagger + d_j \otimes e_j).
\end{equation}
Those single mode Hamiltonian commutes with each other $[H_{\alpha, k,\omega}(t), H_{\alpha, j,\omega}(s)] = 0$ when $k \neq j$. Denote its unitary evolution operator as $U_{\alpha,j}$, the total evolution operator
\begin{equation}
\begin{aligned}
    U^\alpha(T) &= \mathcal{T}\left(\exp\left(-i\int_{-T}^T H_{\alpha}(s)\mathrm{d}s\right)\right)
     = \prod_k \mathcal{T}\left(\exp\left(-i\int_{-T}^T H_{\alpha,k,\omega}(s)\mathrm{d}s\right)\right) = \prod_kU_{\alpha, k }(T)
\end{aligned}
\end{equation}
The induced quantum channel becomes
\begin{equation}
    \Phi_\alpha(\rho) = \mathbb{E}_\omega[\Phi_{\alpha, 1}(\omega) \circ \Phi_{\alpha, 2}(\omega)\circ \cdots \circ \Phi_{\alpha, N}(\omega) (\rho)].
\end{equation}
Let $S_{\eta,j} = \begin{cases}
    d_j^\dagger & \eta = +1\\ d_j & \eta = -1\\
\end{cases}$, $B_{\eta,j} = \begin{cases}
    e_j^\dagger & \eta = +1 \\ e_j & \eta = -1\\
\end{cases}. $ From the Dyson series expansion, we have
\begin{equation}
    \Phi_{\alpha, j}(\rho) = \mathcal{U}_{j}(T) \circ \Psi_{\alpha, j}\circ  \mathcal{U}_{j}(T) (\rho).
\end{equation}
where
\begin{equation}
    \begin{aligned}
        \Psi_{\alpha, j}(\rho)& = \sum_{n,m\geq 0}(-1)^n(i\alpha)^{n+m}\int  \int_{-T\leq s_n\leq \cdots \leq s_1 \leq T, -T\leq \tau_m\leq \cdots\leq \tau_1 \leq T}\mathrm{d}s_1 \cdots \mathrm{d}s_n \mathrm{d}\tau_1\cdots \mathrm{d}\tau_m \\
        &\sum_{\alpha_i,\beta_j\in\{-1,1\}}S_{\alpha_1,j}(s_1)\cdots S_{\alpha_n,j}(s_n) \rho  S_{\beta_m,j}^\dag(\tau_m) \cdots S_{\beta_1,j}^\dag (\tau_1)\\
        & f(s_1)\cdots f(s_n) f(\tau_1)\cdots f(\tau_m)\mathrm{Tr}_{E_j}(B_{\alpha_1,j}(s_1)\cdots B_{\alpha_n,j}(s_n)\rho_B(\omega)  B_{\beta_m,j}^\dag(\tau_m) \cdots B_{\beta_1,j}^\dag (\tau_1))\,, \\
    \end{aligned}
\end{equation}
The Heisenbergy picture of the system and bath coupling operator can be characterized as
\begin{equation}
    S_{\eta, j}(t) = e^{i\eta\lambda_j t}S_{\eta, j}, \quad B_{\eta, j}(t) = e^{i\eta\omega t}B_{\eta,j}\,.
\end{equation}
Therefore, the expression can be further simplified to
\begin{equation}
    \begin{aligned}
        \Psi_{\alpha, j}(\rho)& = \sum_{n,m\geq 0}(-1)^n(i\alpha)^{n+m}\int  \int_{-T\leq s_n\leq \cdots \leq s_1 \leq T, -T\leq \tau_m\leq \cdots\leq \tau_1 \leq T}f(s_1)\cdots f(s_n) f(\tau_1)\cdots f(\tau_m)\\
        &\sum_{\substack{\alpha_1,\ldots,\alpha_n\in\{-1,1\}\\\beta_1,\ldots,\beta_m\in\{-1,1\}}}
e^{-i(\omega+\lambda_j)\sum_{q=1}^m\beta_q\tau_q
+i(\omega+\lambda_j)\sum_{p=1}^{n}\alpha_p s_p}\mathrm{d}s_1 \cdots \mathrm{d}s_n \mathrm{d}\tau_1\cdots \mathrm{d}\tau_m \\
        &S_{\alpha_1,j}\cdots S_{\alpha_n,j} \rho  S_{\beta_m,j}^\dag \cdots S_{\beta_1,j}^\dag\mathrm{Tr}_{E_j}(B_{\alpha_1,j}\cdots B_{\alpha_n,j}\rho_B(\omega)  B_{\beta_m,j}^\dag \cdots B_{\beta_1,j}^\dag ) \,.\\
    \end{aligned}
\end{equation}
\end{proof}

Specifically, the initial bath state is
\begin{equation}
    \rho_B(\omega)=\left\{\begin{aligned}\ket{0}\bra{0}\ \text{if}\ \omega>0\\
\ket{1}\bra{1}\ \text{if}\ \omega<0
\end{aligned}\right.\,.
\end{equation}
Define the number operator $n_j=d_j^\dagger d_j$ for the $j$-th system mode. We define the approximation of the induced channel as
\begin{equation}
    \widetilde{\Phi}_{\alpha,j} =\mathcal{U}_j(T)\circ \widetilde{\Psi}_{\alpha,j}\circ \mathcal{U}_j(T)
\end{equation}
where $\widetilde{\Psi}_j$ is the infinite time window approximation of the middle step mapping $\Psi_j$.
Then, we have the following lemma:
\begin{lem}\label{lem:random_bath_ground}
Under the settings and results of~\cref{prop:random_bath},
\begin{itemize}
  \item For $\omega\neq 0$, $\Phi^\dagger_{\alpha,j}(\omega)[n_k]=\widetilde{\Phi}^\dagger_{\alpha,j}(\omega)[n_k]=n_k$ for $k\neq j$.
  \item For $\omega\neq 0$, $\left\|\widetilde{\Phi}_{\alpha,j}(\omega)-\Phi_{\alpha,j}(\omega)\right\|_{1\leftrightarrow 1}=\mathcal{O}\left(\frac{\alpha^2\sigma^2}{T}\exp(-T^2/(4\sigma^2))\right)$.
  \item When $\sigma=\Omega(\Delta^{-1})$ large enough, $\alpha\sqrt{\sigma}=\mathcal{O}(1)$ sufficiently small, $S>\max_j|\lambda_j|$, and $S=\Theta(1)$, there exists uniform constants $C,C_0>0$ such that
  \begin{itemize}
  \item If $\lambda_j>0$, $\left(\mathbb{E}_{\omega}\left(\widetilde{\Phi}_{\alpha,j}(\omega)\right)\right)^\dagger[n_j]\preceq \left(1-C\alpha^2\right)n_j+\sum^\infty_{i=1}\frac{\left(C_0\alpha\sqrt{\sigma}\right)^{2i-1}}{(2i-1)!}\exp\left(-\frac{\sigma^2\Delta^2}{(2i-1)}\right) I$.
  \item If $\lambda_j<0$, $\left(\mathbb{E}_{\omega}\left(\widetilde{\Phi}_{\alpha,j}(\omega)\right)\right)^\dagger[I-n_j]\preceq \left(1-C\alpha^2\right)(I-n_j)+\sum^\infty_{i=1}\frac{\left(C_0\alpha\sqrt{\sigma}\right)^{2i-1}}{(2i-1)!}\exp\left(-\frac{\sigma^2\Delta^2}{(2i-1)}\right) I$\,.
  \end{itemize}
\end{itemize}
\end{lem}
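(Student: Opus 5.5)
We prove the three items separately, using the single-mode decomposition of \cref{prop:random_bath}.

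\emph{First item.} We pass to the Heisenberg picture. $\Phi^\dagger_{\alpha,j}(\omega)[X]=\mathrm{Tr}_{E_j}\bigl(\rho_B(\omega)\,U_{\alpha,j}^\dagger X U_{\alpha,j}\bigr)$, where the bath partial trace is interpreted through the replacement $e_j\to\ket{0}\bra{1}$ used in the proof of \cref{prop:zero_bath}. The generator $H_{\alpha,j,\omega}(t)$ is a sum of $\lambda_j n_j$, $\omega e_j^\dagger e_j$, and interaction terms that are odd in $d_j$ and odd in $e_j$. Each of these commutes with $n_k=d_k^\dagger d_k$ for $k\neq j$: a product of two odd operators anticommuting with $d_k$ commutes with it, and it also commutes with $d_k^\dagger$, hence with $n_k$. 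So $U_{\alpha,j}$ commutes with $n_k$, and $\Phi^\dagger_{\alpha,j}(\omega)[n_k]=n_k$. The Dyson representation of $\widetilde\Psi_{\alpha,j}$ contains only products of $d_j,d_j^\dagger$, and each such even product commutes with $n_k$, so the same identity holds for $\widetilde\Phi_{\alpha,j}$.

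\emph{Second item.} This is a truncation estimate. We compare the Dyson series of $\Psi_{\alpha,j}$, with all times in $[-T,T]$, against that of $\widetilde\Psi_{\alpha,j}$, with times in $\mathbb{R}$. Every order-$(n+m)$ term carries $\alpha^{n+m}$ and a time-ordered integral of $|f|$. The difference consists of terms in which at least one time lies outside $[-T,T]$. The ordered simplex volume gives $\|f\|_{L^1}^{n+m-1}/(n+m-1)!\cdot\int_{|t|>T}|f|$, and $\alpha\|f\|_{L^1}\sim\alpha\sqrt\sigma=\mathcal O(1)$. Since a single interaction creates a bath excitation that is traced out, nontrivial terms start at order $\alpha^2$. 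The Gaussian tail gives $\int_{|t|>T}|f|\lesssim \sigma^{3/2}T^{-1}e^{-T^2/(4\sigma^2)}$, and combining it with one factor $\|f\|_{L^1}\sim\sqrt\sigma$ produces the claimed $\alpha^2\sigma^2T^{-1}e^{-T^2/4\sigma^2}$.

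\emph{Third item.} This is the core of the proof; we treat $\lambda_j>0$, the other case being symmetric under particle--hole exchange. We expand $\mathbb{E}_\omega\widetilde\Psi_{\alpha,j}^\dagger[n_j]$ order by order in $\alpha$. Because $\rho_B(\omega)$ is a bath eigenstate and each interaction flips both parities, only even orders survive.

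At order $\alpha^2$, the term equals $-|\hat f(\lambda_j+\omega)|^2 n_j$ plus a creation term proportional to $|\hat f(\omega-\lambda_j)|^2(I-n_j)$, with roles depending on the sign of $\omega$. The relevant sign dichotomy is as follows. For $\omega>0$ the bath starts in $\ket0$, so the only process is $d_j^\dagger e_j^\dagger$, with amplitude $\hat f(\lambda_j+\omega)$; this is exponentially small in $\sigma^2(\lambda_j+\omega)^2\ge\sigma^2\Delta^2$. The useful decay instead comes from $\omega<0$ near $-\lambda_j$, where the bath starts filled and $d_j e_j$ removes a particle at resonance. Averaging $\omega$ uniformly over $[-S,S]$ with $S>\lambda_j$ therefore gives a decay rate $C\alpha^2$ with $C=\Theta(1)$, since $\int|\hat f|^2\,d\omega\sim 1$ under the normalization $f\sim\sigma^{-1/2}$. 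The heating term is $\mathcal O(\alpha^2 e^{-\sigma^2\Delta^2})$.

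For the higher orders, the off-resonant heating contributions at order $2i$ carry Gaussian factors from products of $\hat f$ at frequencies off by at least $\Delta$; we expect these to decay like $e^{-\sigma^2\Delta^2/(2i-1)}$. They are controlled by $(C_0\alpha\sqrt\sigma)^{2i-1}/(2i-1)!$ through the simplex bound, which yields the stated series times $I$. The order-$\ge4$ corrections to the decay coefficient of $n_j$ are $\mathcal O((\alpha\sqrt\sigma)^2\alpha^2)$, and we absorb them into $C$ by taking $\alpha\sqrt\sigma$ small.

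\emph{Main obstacle.} The hardest step is extracting the factor $e^{-\sigma^2\Delta^2/(2i-1)}$ at each order. We plan to write the $2i$-fold ordered time integral of Gaussians in Fourier space. The phase then forces a total frequency mismatch at least $\Delta$ distributed among at most $2i-1$ independent Gaussian factors. By Cauchy--Schwarz, the sum of squared mismatches is at least $\Delta^2/(2i-1)$, which gives the exponent.
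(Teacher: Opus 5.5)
Your arguments for the first two items are fine. The second is a self-contained version of the truncation estimate that the paper imports from Lemma S3 of \cite{wang2025lindblad}. For the third item you have the paper's picture: off-resonant heating at frequency $\lambda_j+\omega\ge\Delta$ for $\omega>0$, and resonant cooling near $\omega=-\lambda_j$ for $\omega<0$. (Your sentence about a creation term with $|\hat f(\omega-\lambda_j)|^2$ is wrong here: with the pair coupling $d_j^\dagger e_j^\dagger+d_je_j$ of \cref{assump:zero_bath_ground}, only the frequency $\lambda_j+\omega$ occurs.)

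The gap is in turning this picture into the stated inequality. The idea you are missing is that the single-mode problem is exactly a two-level problem. The coupling only connects $\ket{0_s0_b}$ and $\ket{1_s1_b}$, while $\ket{1_s0_b}$ and $\ket{0_s1_b}$ are frozen. Hence, to all orders, $\widetilde\Psi^\dagger_{\alpha,j}(n_j)=(1-p_j)n_j+p_jI$ for $\omega>0$ and $\widetilde\Psi^\dagger_{\alpha,j}(n_j)=(1-p_j)n_j$ for $\omega<0$. Here $p_j(\omega)=|A(\lambda_j+\omega)|^2$, where $A=\sum_{r\ge1}A_{2r-1}$ is the series of odd-order alternating time-ordered amplitudes. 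The index $2i-1$ in the target series is the order of the \emph{amplitude}, not of the channel. The heating coefficient is exactly $\int_{\omega>0}g\,|A|^2\le\int_{\omega>0}g\,|A|$ once $|A|\le1$, and the decay coefficient is $\int g\,p_j$ with $p_j\ge0$. Your order-by-order bookkeeping does not match this: you control an order-$2i$ channel term by $(C_0\alpha\sqrt\sigma)^{2i-1}$ and describe it as a single $2i$-fold ordered integral. In fact the order-$2i$ term is $\sum_{n+m=2i}A_n\overline{A_m}$, a product of two separately ordered integrals.

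The second problem is the estimate you flag as the main obstacle, and your plan for it does not work as stated. In Fourier space the time ordering becomes kernels $(\Omega_p\pm i0)^{-1}$ on partial frequency sums, and these are not absolutely integrable. So you cannot bound the integrand by $\prod_\ell|\hat f(\xi_\ell)|$ and then apply Cauchy--Schwarz to $\sum_\ell\xi_\ell=\nu$. You also need the $1/k!$ simplex factor to come out of the same estimate in order to sum the series.

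What works is the exact identity $\sum_{\ell=1}^k\xi_\ell^2=\nu^2/k+\sum_\ell(\xi_\ell-\nu/k)^2$ on the constraint surface, followed by a return to the time domain. Equivalently, shift all times by the same imaginary amount, $t_\ell\mapsto t_\ell+i\theta$:
\begin{itemize}
\item this preserves the ordering;
\item it multiplies the phase by $e^{-\nu\theta}$, because the alternating signs of an odd-length word sum to $+1$;
\item it costs a factor $e^{k\theta^2/(4\sigma^2)}$ from the Gaussians.
\end{itemize}
Taking $\theta=2\sigma^2\nu/k$ and using the simplex bound gives $|A_k(\nu)|\le\frac{(\alpha\|f\|_{L^1})^k}{k!}e^{-\sigma^2\nu^2/k}$ for odd $k$. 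This is the Lemma S8 of \cite{wang2025lindblad} that the paper invokes.

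Your claim that the order-$\ge4$ corrections to the decay coefficient are $O(\alpha^2(\alpha\sqrt\sigma)^2)$ after averaging over $\omega$ also depends on this localization in $\nu$. The simplex bound alone gives only $O(\alpha^4\sigma^2)$, which exceeds $\alpha^2$ for large $\sigma$ at fixed $\alpha\sqrt\sigma$. The paper avoids this by using $p_j\ge0$ and restricting to the window $J_j=[-\lambda_j,-\lambda_j+(2\sigma)^{-1}]\subset[-S,0]$. There the first-order amplitude is of order $\alpha\sqrt\sigma$ and dominates the $O((\alpha\sqrt\sigma)^3)$ higher orders pointwise. This gives $p_j\ge c\alpha^2\sigma$ on $J_j$, hence $\int g\,p_j\ge C\alpha^2$.
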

The proof of~\cref{lem:random_bath_ground} is similar to that of~\cite[Theorem S14]{wang2025lindblad}. For completeness, we provide the proof in~\cref{sec:pf_random_bath_ground_lem}.

Now, we are ready to prove~\cref{thm:random_bath_ground}.
\begin{proof}[Proof of~\cref{thm:random_bath_ground}]
In the proof, we consider the decay of the total number operator $\hat{N}=\sum_{j=1}^N \left(n_j\textbf{1}_{\lambda_j>0}+(I-n_j)\textbf{1}_{\lambda_j<0}\right)$. First, according to the commuting property in~\cref{lem:random_bath_ground}, for each $j$ and $m>0$, we have
\[
\left(\Phi^m_{\alpha}\right)^\dagger\left(n_j\right)=\left(\left(\mathbb{E}_{\omega}\left(\Phi_{\alpha,j}(\omega)\right)\right)^m\right)^\dagger[n_j]=\left(\left(\mathbb{E}_{\omega}\left(\widetilde{\Phi}_{\alpha,j}(\omega)\right)\right)^m\right)^\dagger[n_j]+\mathcal{O}\left(m\alpha^2\sigma^2\exp(-T^2/(4\sigma^2))/T\right)\,.
\]
This implies that
\[
\begin{aligned}
&\mathrm{Tr}\left(\Phi^m_{\alpha}[\rho]\hat{N}\right)=\sum^N_{j=1}\left[\mathrm{Tr}\left(\left(\mathbb{E}_{\omega}\left(\Phi_{\alpha,j}(\omega)\right)\right)^m[\rho]n_j\right)\textbf{1}_{\lambda_j>0}+\mathrm{Tr}\left(\left(\mathbb{E}_{\omega}\left(\Phi_{\alpha,j}(\omega)\right)\right)^m[\rho](I-n_j)\right)\textbf{1}_{\lambda_j<0}\right]\\
=&\sum^N_{j=1}\left[\mathrm{Tr}\left(\left(\mathbb{E}_{\omega}\left(\widetilde{\Phi}_{\alpha,j}(\omega)\right)\right)^m[\rho]n_j\right)\textbf{1}_{\lambda_j>0}+\mathrm{Tr}\left(\left(\mathbb{E}_{\omega}\left(\widetilde{\Phi}_{\alpha,j}(\omega)\right)\right)^m[\rho]\left(I-n_j\right)\right)\textbf{1}_{\lambda_j<0}\right]+\mathcal{O}\left(mN\alpha^2\sigma^2\exp(-T^2/(4\sigma^2))/T\right)\\
\leq &\left(1-C\alpha^2\right)^m\mathrm{Tr}\left(\rho\hat{N}\right)+\sum^N_{j=1}\sum^\infty_{i=1}m\frac{\left(C_0\alpha\sqrt{\sigma}\right)^{2i-1}}{(2i-1)!}\exp\left(-\frac{\sigma^2\Delta^2}{(2i-1)}\right)+\mathcal{O}\left(mN\alpha^2\sigma^2\exp(-T^2/(4\sigma^2))/T\right)\\
\end{aligned}\,.
\]
By Fuchs-van de Graaf inequality, for any density matrix $\rho$,
\[
\|\rho-\ketbra{\psi_0}{\psi_0}\|_1\leq 2\sqrt{1-\bra{\psi_0}\rho\ket{\psi_0}}\leq 2\sqrt{\mathrm{Tr}(\rho\hat{N})}\,.
\]
Combining this with $\mathrm{Tr}\left(\rho\hat{N}\right)\leq N$, we have
\[
\|\Phi^m_{\alpha}\rho-\ketbra{\psi_0}{\psi_0}\|_1\leq 2\left(1-C\alpha^2\right)^{m/2}\sqrt{N}+2\sqrt{\sum^N_{j=1}\sum^\infty_{i=1}m\frac{\left(C_0\alpha\sqrt{\sigma}\right)^{2i-1}}{(2i-1)!}\exp\left(-\frac{\sigma^2\Delta^2}{(2i-1)}\right)+\mathcal{O}\left(mN\alpha^2\sigma^2\exp(-T^2/(4\sigma^2))/T\right)}
\]
Plugging in the choices of $\sigma$, $T$, and $\alpha$, we obtain the result.
\end{proof}

\subsection{Proof of~\texorpdfstring{\cref{thm:random_bath_lower_bound_rigor}}{Theorem~\ref*{thm:random_bath_lower_bound_rigor}}}\label{sec:pf_random_bath_lower_bound}

We establish the lower bound for $H=\lambda\sum_{j=1}^N n_j$. Fix $\omega$. We first consider one system mode, with $d=\ket{0}\bra{1}$, $n=d^\dagger d$, and two bath modes $b_+$ and $b_-$ as in~\cref{assump:zero_bath}. The reduced channel is
\begin{equation}\label{eq:random_bath_single_mode_channel}
\Phi^{(1)}_{\alpha,\omega}(\rho)
=\operatorname{Tr}_B\!\left[U(\rho\otimes\sigma_B)U^\dagger\right],
\end{equation}
where $\sigma_B=\sigma_b^{\otimes2}$,
\[
\sigma_b=(1-p_{\rm b}(\omega))\ket{0}\bra{0}
+p_{\rm b}(\omega)\ket{1}\bra{1}\,.
\]
and $U=\mathcal T\exp(-i\int_{-T}^T H^{(1)}_{\alpha,\omega}(t)\,\mathrm dt)$, with
\begin{equation}
\begin{aligned}
H^{(1)}_{\alpha,\omega}(t)
&=\lambda n+\omega(n_{b,+}+n_{b,-})+\alpha\left(d^\dagger\otimes F(t)+d\otimes F(t)^\dagger\right),\\
F(t)&=f(t)b_+^\dagger+\overline{f(t)}b_-,
\end{aligned}
\end{equation}
where $n_{b,\pm}=b_\pm^\dagger b_\pm$. In particular, $d^\dagger\ket{0}=\ket{1}$ and $d\ket{1}=\ket{0}$, so the interaction sends $\ket{0}\otimes\ket{\chi}$ to $\ket{1}\otimes F(t)\ket{\chi}$ and $\ket{1}\otimes\ket{\chi}$ to $\ket{0}\otimes F(t)^\dagger\ket{\chi}$ for any bath vector $\ket{\chi}$.

Let $U_{rs}=(\bra{r}\otimes I_B)U(\ket{s}\otimes I_B)$, $r,s\in\{0,1\}$, denote the bath-operator blocks of $U$. The joint evolution therefore has the exact form
\begin{equation}
\begin{aligned}
U(\ket{0}\otimes\ket{\chi})
&=\ket{0}\otimes U_{00}\ket{\chi}
+\ket{1}\otimes U_{10}\ket{\chi},\\
U(\ket{1}\otimes\ket{\chi})
&=\ket{0}\otimes U_{01}\ket{\chi}
+\ket{1}\otimes U_{11}\ket{\chi}.
\end{aligned}
\end{equation}
Taking the bath trace gives
\begin{equation}
\Phi^{(1)}_{\alpha,\omega}({\ket{s}\bra{s}})
=\sum_{r,r'=0}^1
\operatorname{Tr}_B(U_{rs}\sigma_BU_{r's}^\dagger)
\ket{r}\bra{r'}.
\end{equation}
The conserved parity $(-1)^{n+n_{b,+}+n_{b,-}}$ and diagonal bath state imply $\operatorname{Tr}_B(U_{rs}\sigma_BU_{r's}^\dagger)=0$ for $r\ne r'$. Unitarity also implies $\sum_rU_{rs}^\dagger U_{rs}=I_B$. Consequently, the channel acts exactly as
\begin{equation}\label{eq:random_bath_basis_state_action}
\begin{aligned}
\Phi^{(1)}_{\alpha,\omega}({\ket{0}\bra{0}})
&=(1-a(\omega)){\ket{0}\bra{0}}+a(\omega){\ket{1}\bra{1}},\\
\Phi^{(1)}_{\alpha,\omega}({\ket{1}\bra{1}})
&=z(\omega){\ket{0}\bra{0}}+(1-z(\omega)){\ket{1}\bra{1}},
\end{aligned}
\end{equation}
where
\begin{equation}\label{eq:random_bath_transition_probabilities}
\begin{aligned}
a(\omega)&=\operatorname{Tr}_B(U_{10}\sigma_BU_{10}^\dagger)
=\bra{1}\Phi^{(1)}_{\alpha,\omega}({\ket{0}\bra{0}})\ket{1},\\
z(\omega)&=\operatorname{Tr}_B(U_{01}\sigma_BU_{01}^\dagger)
=\bra{0}\Phi^{(1)}_{\alpha,\omega}({\ket{1}\bra{1}})\ket{0}.
\end{aligned}
\end{equation}
In particular, $0\le a(\omega),z(\omega)\le1$, and occupation-diagonal inputs remain diagonal without any approximation in $\alpha$. Set $F_T(\nu)=\int_{-T}^T f(t)e^{i\nu t}\,\mathrm dt$ and define $r=p_{\rm b}(\omega)$. It is straightforward to check that
\begin{equation}\label{eq:random_bath_leading_population_rates}
\begin{aligned}
a(\omega)=\alpha^2a_2(\omega)+\mathcal O(\alpha^4),\quad a_2(\omega)=(1-r)\left|F_T(\lambda+\omega)\right|^2+r\left|F_T(\omega-\lambda)\right|^2,\\
z(\omega)=\alpha^2z_2(\omega)+\mathcal O(\alpha^4),\quad z_2(\omega)=(1-r)\left|F_T(\omega-\lambda)\right|^2+r\left|F_T(\lambda+\omega)\right|^2.
\end{aligned}
\end{equation}
The remainders in~\eqref{eq:random_bath_leading_population_rates} are uniform in $\omega$ and $T$, since the interaction-picture coupling has norm at most $2|f(t)|$ and $f\in L^1(\mathbb R)$.
The coefficients in~\cref{eq:random_bath_transition_probabilities} are Born probabilities for occupation measurements. If $X_t\in\{0,1\}$ denotes the system occupation, then
\begin{equation}
\begin{aligned}
\mathbb P(X_{t+1}=1\mid X_t=0,\omega)&=a(\omega),\\
\mathbb P(X_{t+1}=0\mid X_t=1,\omega)&=z(\omega).
\end{aligned}
\end{equation}
On diagonal states, the action of the quantum channel is described by the scalar occupation-probability map $\mathcal T_\omega$: For $p\in[0,1]$, let $v(p)=(1-p)\ket{0}\bra{0}+p\ket{1}\bra{1}\in\mathcal D(\mathbb C^2)$ denote the diagonal one-qubit density operator with occupation probability $p$. By linearity and~\cref{eq:random_bath_basis_state_action},
\begin{equation}
\begin{aligned}
\Phi^{(1)}_{\alpha,\omega}(v(p))=[(1-p)(1-a(\omega))+pz(\omega)]\ket{0}\bra{0}+[(1-p)a(\omega)+p(1-z(\omega))]\ket{1}\bra{1}=v(\mathcal T_\omega(p)),
\end{aligned}
\end{equation}
where
\begin{equation}\label{eq:random_bath_probability_update}
\begin{aligned}
\mathcal T_\omega(p)=(1-p)a(\omega)+p(1-z(\omega))=a(\omega)+(1-q(\omega))p,\quad q(\omega)&=a(\omega)+z(\omega).
\end{aligned}
\end{equation}
For independent frequencies $\omega_t\sim\mu$ sampled at successive iterations, the occupation probability evolves as the Markov chain
\[
P_{t+1}=\mathcal T_{\omega_t}(P_t),\qquad P_0=p\in[0,1].
\]
Here $P_t$ is the occupation probability conditioned on the sampled frequency history. Averaging a single update gives
\[
\mathbb E_\omega\!\left[\Phi^{(1)}_{\alpha,\omega}(v(p))\right]
=v\!\left(\mathbb E_\omega[\mathcal T_\omega(p)]\right).
\]
For sufficiently small $\alpha$ and sufficiently large $T$ as in the theorem, $0\le q(\omega)<1$ uniformly and $\mathbb E_\omega[q(\omega)]>0$, so the random affine maps contract on average. Let $\nu$ denote the stationary distribution of the chain $P_t$ on $[0,1]$. By definition, if $P_\infty\sim\nu$ and a fresh $\omega\sim\mu$ are independent, then
 \begin{equation}
     \mathcal{T}_\omega(P_\infty)\sim \nu.
 \end{equation}
The stationary distribution can also be characterized as the limiting distribution of $P_t$ when $t\rightarrow\infty$, or through the backward limit $ \lim_{\tau\rightarrow \infty} \mathcal{T}_{\omega_{-1}}\circ \mathcal{T}_{\omega_{-2}}\circ \cdots \circ \mathcal{T}_{\omega_{-\tau}}(p)$ for any initial value $p\in[0,1]$. We finally calculate the expectation and variance of $P_\infty$:
\begin{itemize}
    \item Expectation: $\mu_{P_\infty}=\mathbb{E}[P_\infty] = \frac{\mathbb{E}_{\omega} [a(\omega)]}{\mathbb{E}_{\omega}[q(\omega)]}.$
    \item Variance: $\mathrm{Var}[P_\infty] = \frac{\mathbb{E}_{\omega}[(a(\omega) -\mu_{P_\infty}q(\omega))^2]}{\mathbb{E}_{\omega} [2q(\omega)- q^2(\omega)]}$.
\end{itemize}
This finishes the discussion of the single-mode case.

For $N$ modes, the joint Hamiltonian is a sum of commuting terms, each acting on one system mode and its two bath modes. The conserved parities $(-1)^{n_j+n_{b,j,+}+n_{b,j,-}}$ imply diagonal preservation. Since $\lambda_j=\lambda$, all modes have the same occupation-probability update $\mathcal T_\omega$. At fixed $\omega$, the evolution of a product diagonal state therefore satisfies
\begin{equation}\label{eq:random_bath_product_population_update}
\begin{aligned}
\Phi_{\alpha,\omega}(v(p)^{\otimes N})=\bigotimes_{j=1}^N\Phi^{(1)}_{\alpha,\omega}(v(p))=v(\mathcal T_\omega(p))^{\otimes N}.
\end{aligned}
\end{equation}
Starting from $v(p)^{\otimes N}$, the state conditioned on the frequency history is thus $v(P_t)^{\otimes N}$, whereas the averaged state is
\[
\Phi_\alpha^t(v(p)^{\otimes N})
=\mathbb E[v(P_t)^{\otimes N}].
\]
The same frequency is shared by all modes at each iteration. Consequently, averaging over frequency histories can create classical correlations even though the state is a product conditional on the frequency history.

Define the stationary density operator
\begin{equation}
\rho_*=\mathbb E_{P_\infty\sim\nu}[v(P_\infty)^{\otimes N}].
\end{equation}
Using~\cref{eq:random_bath_product_population_update} and averaging over both $P_\infty$ and an independent fresh frequency gives
\begin{equation}
\begin{aligned}
\Phi_\alpha(\rho_*)=\mathbb E_{P_\infty,\omega}
\!\left[v(\mathcal T_\omega(P_\infty))^{\otimes N}\right]=\mathbb E_{P_\infty\sim\nu}[v(P_\infty)^{\otimes N}]=\rho_*.
\end{aligned}
\end{equation}

Let the binomial function  $b_{N,k}(p) = \binom{N}{k} p^k (1-p)^{N-k},$ and $p_\beta(\lambda) = \frac{1}{1+ e^{\beta \lambda}}$ denote the occupation probability of the state $\ketbra{1}$ in the Gibbs state. The difference between the fixed point $\rho_*$ and the Gibbs state can be characterized as
\begin{equation}
    \begin{aligned}
        \|\rho_* - \sigma_\beta\|_1 &= \sum_{k = 0}^N |\mathbb{E}[b_{N,k}(P_\infty)] - b_{N,k}(p_\beta)|\\
    \end{aligned}
\end{equation}

We introduce the following lemma to bound the difference between the corresponding binomial probabilities.
\begin{lem}\label{lem:bino_mix amplification}
Given a random variable $ P \in [0,1]$, and a fixed value $p \in(0,1)$. Suppose that there exists $C>0$ such that
\begin{equation}
    \mathbb{E}|P - \mathbb{E}P|^4 \leq C \mathrm{Var}(P)^2\,,
\end{equation}
then there exists a constant $c>0$, depending on $p$ and $C$, such that for every $N\geq 2$
\begin{equation}
    \sum_{k = 0}^N |\mathbb{E}[b_{N,k}(P)] - b_{N,k}(p)|\geq c \min \{1, N (\mathrm{Var}(P) + (\mathbb{E}(P) - p)^2)\}\,.
\end{equation}
\end{lem}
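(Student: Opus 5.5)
The plan is to lower bound the $\ell^1$ distance, which equals $2\,d_{\rm TV}(\mathcal L(K_P),\mathrm{Bin}(N,p))$, by testing against bounded functions. Here $K_P$ denotes the mixture: conditionally on $P$, $K_P\sim\mathrm{Bin}(N,P)$. I use $\sum_k|\mu_k-\nu_k|\ge |\sum_k g(k)(\mu_k-\nu_k)|/\sup|g|$ with well-chosen $g$. Write $m=\mathbb E P$, $b=m-p$, $V=\mathrm{Var}(P)$, $D=V+b^2=\mathbb E(P-p)^2$. The hypothesis $\mathbb E|P-m|^4\le CV^2$ gives $\mathbb E(P-p)^4\le 8(CV^2+b^4)\le C'D^2$. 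Combined with binomial moment formulas, this yields, for $X=(K-Np)/\sqrt N$, the bound $\mathbb E X^4\lesssim 1+N^2D^2$ under both laws, with constants depending only on $p$ and $C$.

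\emph{Case 1 (mean shift dominates):} $|b|\ge c_1\sqrt N\,V$ or $b^2\ge V/2$. I take the linear statistic $g(k)=\mathrm{clip}_L(X)$, truncated at a level $L$. The untruncated difference of expectations is exactly $\sqrt N\,b$. The truncation error is at most $\mathbb E|X|^3/L^2\lesssim(1+(ND)^{3/2})/L^2$. When $ND\le 1$, a fixed large $L$ gives $d_{\rm TV}\gtrsim\sqrt N|b|$. In that regime $\sqrt N|b|\ge N b^2$, and I must also check $\sqrt N|b|\gtrsim ND$: this follows from $|b|\ge c_1\sqrt N V$ (so $NV\le \sqrt N|b|/c_1$) or from $V\le 2b^2$. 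When $ND>1$, I instead show the mixture places constant mass on $\{|X|\ge L\}$, while the binomial places mass $\lesssim L^{-4}$ there. This uses Paley--Zygmund applied to $\sqrt N(P-p)$, whose second moment is $ND$ and whose fourth moment is $\lesssim (ND)^2$ by the hypothesis. That gives $d_{\rm TV}\gtrsim 1$.

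\emph{Case 2 (variance dominates):} $V\ge 2b^2$ and $|b|\le c_1\sqrt N V$. I use the quadratic statistic $Y=(K-Nm)^2/N$, truncated at level $L$. Under the mixture, $\mathbb E Y=\mathbb E[P(1-P)]+NV$. Under the binomial, $\mathbb E Y=p(1-p)+Nb^2$. The difference is
\[
N(V-b^2)+(1-2p)b-D\ \ge\ \tfrac12 NV-|b|-2V\ \ge\ \tfrac14 NV
\]
for $c_1$ small and $N\ge 2$, after adjusting constants. Truncation costs $\mathbb E X^4/L\lesssim (1+N^2D^2)/L$. So for $ND\le\delta$ and a fixed large $L$, I get $d_{\rm TV}\gtrsim NV\asymp ND$. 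For $ND>\delta$, the Paley--Zygmund argument of Case 1 again gives a constant lower bound.

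The main obstacle is the interplay in Case 2 between the $O(|b|)$ cross term and the target $NV$ when $V\lesssim N^{-2}$. This is exactly why the case split uses the threshold $|b|\lessgtr c_1\sqrt N V$ rather than $b^2\lessgtr V$. The other obstacle is making the truncation uniform across all regimes of $ND$, which is the only place the fourth-moment hypothesis is used.
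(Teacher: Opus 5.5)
There is a genuine gap in the small-$ND$ regime. This is exactly the regime in which the lemma is used: in the random-bath lower bound, $ND\asymp N\alpha^2$ is of order $\epsilon$.

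In both cases you bound the truncation error under each law \emph{separately}, by $\mathbb E|X|^3/L^2$ and $\mathbb E X^4/L$ respectively. For fixed $L$ these are constants that do not shrink with the signal. Already under $\mathrm{Bin}(N,p)$, $\mathbb E|X|^3$ is of order $(p(1-p))^{3/2}$, independent of $b$ and $V$. So what you actually obtain is $|\mathbb E_{\rm mix}g-\mathbb E_{\rm bin}g|\ge\sqrt N|b|-O(L^{-2})$, respectively $\ge\tfrac14NV-O(L^{-1})$. These bounds are vacuous once $\sqrt N|b|$ or $NV$ drops below a fixed constant.

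Letting $L$ grow like an inverse power of the signal does not help. You then divide by $\sup|g|\sim L$ and lose a power of the signal. Controlling the difference of the truncated parts by Cauchy--Schwarz, via $|\mu_k-\nu_k|^2\le(\mu_k+\nu_k)|\mu_k-\nu_k|$, fails in the same way. It only gives $d_{\rm TV}\gtrsim(\text{signal})^2$, i.e.\ $\min\{1,N^2D^2\}$ instead of $\min\{1,ND\}$. The first-order dependence on $D$ is the whole content of the lemma, and your argument does not establish it.

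What is missing is a way to compare the two laws \emph{through the parameter}. Write $\mathbb E_{\rm mix}g-\mathbb E_{\rm bin}g=\mathbb E[\Psi_g(P)-\Psi_g(p)]$ with $\Psi_g(q)=\sum_k g(k)b_{N,k}(q)$. Then Taylor expand in $q$, so that the first two terms provably cannot cancel and the remainder is controlled by the fourth-moment hypothesis.

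The paper does exactly this with the bounded test function $g(k)=e^{ik\theta}$. Then $\Psi_g(q)=h(q)=(1+(e^{i\theta}-1)q)^N$ is explicit and $|h'''|\le N^3\theta^3$. Because $b$ is real,
\[
\bigl|h'(p)b+\tfrac12h''(p)D\bigr|\ge\tfrac12|h'(p)|\,\bigl|\operatorname{Im}(h''(p)/h'(p))\bigr|\,D\gtrsim N^2\theta^2D .
\]
So mean shift and variance are handled at once, with no case split. Choosing $\theta\asymp\min\{N^{-1/2},(N\sqrt D)^{-1}\}$ also covers large $ND$ without Paley--Zygmund. If you want to keep your clipped moment statistics, you would need analogous derivative bounds on $\Psi_g$ for them, which is considerably messier.

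A secondary, fixable issue concerns Paley--Zygmund. Applied to $\sqrt N(P-p)$, it yields $|X|\ge L$ with constant mixture probability only once $ND\gtrsim L^2$. But $L$ must be large, depending on $C$, for the binomial tail to be negligible. So the window $\delta<ND\lesssim L^2$ is not covered as written. You can handle it with the truncated moment statistics at a separately chosen truncation level, since the signals are bounded below there; the two thresholds just need to be decoupled.
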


\begin{proof}[Proof of \cref{lem:bino_mix amplification}]
    We may assume $C\ge1$. We consider the function for $0<\theta\leq 1/\sqrt{N}$,
    \begin{equation}
        h(p) \coloneqq \sum_{k = 0}^N e^{ik\theta}b_{N,k}(p) = (1+(e^{i\theta} - 1)p)^N\,.
    \end{equation}
    It corresponds to the generating function of $P$ such that
    \begin{equation}\label{eq:first}
        |\mathbb{E}[h(P)] - h(p)| = \left|\sum_{k = 0}^N e^{ik\theta} \left(\mathbb{E}[b_{N,k}(P)] - b_{N,k}\left(p\right)\right)\right|\leq \sum_{k = 0}^N  \left|\mathbb{E}[b_{N,k}(P)] - b_{N,k}\left(p\right)\right|\,.
    \end{equation}
    We therefore derive a lower bound using Taylor expansion of the function $h$. Notice that
    \begin{equation}
        h(q) = h(p) + h'(p)(q-p) + \frac{h''(p)}{2!}(q-p)^2 +\frac{1}{2}\int_p^q (q-s)^2 h'''(s)\mathrm{d}s 
    \end{equation}
    where
    \begin{equation}
        |h'''(s)| = |N(N-1)(N-2) (1+(e^{i\theta} - 1)s)^{N-3}(e^{i\theta} - 1)^3|\leq N^3\theta^3\,.
    \end{equation}
    We decompose the difference into the leading term and the remainder
    $$\mathbb{E}[h(P)] - h(p) = M + R$$
    \begin{itemize}
        \item The contribution from the first and second-order terms satisfies
        \begin{equation}
        \begin{aligned}
            |M| &= |h'(p)(\mathbb{E}[P]-p) +  \frac{h''(p)}{2!}\mathbb{E}[(P-p)^2]| = \left|h'(p)\left((\mathbb{E}[P]-p) +  \frac{h''(p)}{2! h'(p)}\mathbb{E}[(P-p)^2]\right)\right|\\
            &\geq \frac{| h'(p)|}{2} \mathrm{Im}\left(\frac{h''(p)}{h'(p)}\right)\mathbb{E}[(P-p)^2]\geq \frac{N^2\theta^2}{32} \mathbb{E}[(P-p)^2]\,.
        \end{aligned}
        \end{equation}
        \item For the remainder term, we have
        \begin{equation}
            |R|\leq \frac{N^3\theta^3}{6} \mathbb{E}[|P-p|^3] 
        \end{equation}
        Using $(a+b)^4 \le 8(a^4 + b^4)$ together with the assumption, we obtain
        \begin{equation}
            \mathbb{E}[(P - p)^4] \leq 8 \left( C\mathrm{Var}(P)^2 + (\mathbb{E}P -p)^4\right)\leq 8C \left(\mathbb{E}[\left(P-p\right)^2]\right)^2,
        \end{equation}
        By Cauchy--Schwarz, $\mathbb E[|P-p|^3]\leq\sqrt{\mathbb E[(P-p)^2]\mathbb E[(P-p)^4]}\leq\sqrt{8C}\,\mathbb E[(P-p)^2]^{3/2}$, and therefore
        \begin{equation}
            |R|\leq \frac{\sqrt{8C}\,N^3\theta^3}{6}\mathbb{E}[(P-p)^2]^{3/2}\,.
        \end{equation}
    \end{itemize}
    Set $S=\mathbb E[(P-p)^2]$. If $S=0$, the claim is immediate. Otherwise, combining these bounds gives
    \begin{equation}
        \left|\mathbb{E}[h(P)] - h(p)\right|\geq N^2\theta^2 S\left(\frac1{32}-\frac{\sqrt{8C}}6N\theta\sqrt S\right).
    \end{equation}
    Choose $\theta=c_C\min\{N^{-1/2},(N\sqrt S)^{-1}\}$, where $0<c_C\leq\min\{1,3/(32\sqrt{8C})\}$. The factor in parentheses is then at least $1/64$, so
    \begin{equation}\label{eq:last}
        \left|\mathbb{E}[h(P)] - h(p)\right|\gtrsim \min\{1, N \mathbb{E}[(P-p)^2]\}\,.
    \end{equation}
    Finally, since $\mathbb{E}[(P-p)^2] = \mathrm{Var}(P) + (\mathbb{E}(P) - p)^2$, the result follows by combining~\cref{eq:first} and~\cref{eq:last}.
\end{proof}

Now, we are ready to prove Theorem~\ref{thm:random_bath_lower_bound_rigor}:
\begin{proof}

Fix $\lambda\in[-1,1]$ satisfying~\eqref{eqn:vanish_variance}, and set $H=\lambda\sum_{j=1}^Nn_j$, $p_1=(1+e^\lambda)^{-1}$, $Q(\omega)=u(\omega;\lambda)+d(\omega;\lambda)$, $\overline Q=\mathbb E_\omega[Q(\omega)]>0$, and $p_0=\mathbb E_\omega[u(\omega;\lambda)]/\overline Q$. The tail assumption gives
\[
\sup_\nu|F_T(\nu)-\widehat f(\nu)|
\leq\int_{|t|>T}|f(t)|\,\mathrm dt\leq\kappa\alpha^2.
\]
Together with~\eqref{eq:random_bath_leading_population_rates}, this yields, uniformly in $\omega$ and admissible $T$,
\begin{equation}
\begin{aligned}
a(\omega)&=\alpha^2u(\omega;\lambda)+\mathcal O(\alpha^4),\quad z(\omega)=\alpha^2d(\omega;\lambda)+\mathcal O(\alpha^4),\\
\overline q:=\mathbb E_\omega[q(\omega)]
&=\alpha^2\overline Q+\mathcal O(\alpha^4)=\Theta(\alpha^2).
\end{aligned}
\end{equation}
In particular, $\mu_{P_\infty}=\mathbb E_\omega[a(\omega)]/\overline q=p_0+\mathcal O(\alpha^2)$.

If $p_0\ne p_1$, then for sufficiently small $\alpha$,
\[
\|\rho_*-\sigma_1(H)\|_1
\geq2|\mu_{P_\infty}-p_1|\geq|p_0-p_1|>0,
\]
by measuring one system occupation. Since $\rho_*$ is stationary, the theorem's accuracy requirement is impossible for sufficiently small $\epsilon$, and the claimed lower bound holds. It remains to consider $p_0=p_1$, which is equivalent to~\eqref{eqn:random_bath_second_order_thermal_condition}. In this case,
\begin{equation}\label{eqn:random_bath_expectation_thermal_condition}
\mu_{P_\infty}-p_1=\mathcal O(\alpha^2).
\end{equation}
\begin{equation}\label{eqn:random_bath_variance_thermal_condition}
\operatorname{Var}(P_\infty)
=\alpha^2\frac{\mathbb E_\omega[(u(\omega;\lambda)-p_1Q(\omega))^2]}{2\overline Q}
+\mathcal O(\alpha^4)=\Theta(\alpha^2).
\end{equation}
Indeed, $u-p_1Q=(u-e^{-\lambda}d)/(1+e^{-\lambda})$, so the leading coefficient is strictly positive by~\eqref{eqn:vanish_variance}.

To verify the fourth-moment hypothesis of~\cref{lem:bino_mix amplification}, write $X=P_\infty-\mu_{P_\infty}$, $A=1-q(\omega)$, and $B=a(\omega)-\mu_{P_\infty}q(\omega)$. Stationarity and independence of $X$ and the fresh frequency give $AX+B\overset{d}=X$, $\mathbb E_\omega[B]=0$, and
\begin{equation}
\begin{aligned}
(1-\mathbb E_\omega[A^4])\mathbb E[X^4]=\mathbb E_\omega[B^4]
+6\mathbb E_\omega[A^2B^2]\mathbb E[X^2]+4\mathbb E_\omega[A^3B]\mathbb E[X^3].
\end{aligned}
\end{equation}
Here $\|q\|_\infty,\|B\|_\infty=\mathcal O(\alpha^2)$, $\mathbb E[X^2]=\Theta(\alpha^2)$, and $|\mathbb E[X^3]|\leq\mathbb E[X^2]$ because $|X|\leq1$. Moreover, $\mathbb E_\omega[A^3B]=\mathbb E_\omega[(A^3-1)B]=\mathcal O(\alpha^4)$ and $1-\mathbb E_\omega[A^4]=\Theta(\alpha^2)$. Consequently,
\[
\mathbb E[X^4]=\mathcal O(\alpha^4)
=\mathcal O\!\left(\operatorname{Var}(P_\infty)^2\right).
\]
Applying~\cref{lem:bino_mix amplification} with $p=p_1$ and $N\ge2$ therefore yields
\begin{equation}
\begin{aligned}
\|\rho_*-\sigma_1(H)\|_1\geq c\min\{1,N(\operatorname{Var}(P_\infty)+(\mu_{P_\infty}-p_1)^2)\}\geq c'\min\{1,N\alpha^2\}.
\end{aligned}
\end{equation}
Since $\rho_*$ is an allowed initial state, the theorem's uniform accuracy requirement forces $\alpha^2\lesssim\epsilon/N$ for sufficiently small $\epsilon$.

Finally, applying the averaged occupation update to the initial states $v(0)^{\otimes N}$ and $v(1)^{\otimes N}$ gives
\[
\operatorname{Tr}\!\left[n_1\Phi_\alpha^m(v(1)^{\otimes N})\right]
-\operatorname{Tr}\!\left[n_1\Phi_\alpha^m(v(0)^{\otimes N})\right]
=(1-\overline q)^m.
\]
If both outputs are within trace norm $\epsilon$ of $\sigma_1(H)$, measuring the first occupation implies $(1-\overline q)^m\leq\epsilon$. Since $\overline q=\Theta(\alpha^2)$,
\[
m\geq\frac{\log(1/\epsilon)}{-\log(1-\overline q)}
\gtrsim\frac1{\alpha^2}\log(1/\epsilon)
\gtrsim\frac{N}{\epsilon}\log(1/\epsilon).
\]
This proves the stated worst-case iteration lower bound. Since $T>T_0>0$, the total Hamiltonian simulation time is also $\Omega(N/\epsilon)$.
\end{proof}

\section{Tensorization of product-state errors}\label{sec:product_state_errors}

\begin{lem}\label{lem:product_state_error}
Let $\{\rho_j,\sigma_j\}_{j=1}^N$ be density operators such that $\sigma_j\geq C I$ for some $C>0$. Then
\[
\left\|\bigotimes_{j=1}^N\rho_j-\bigotimes_{j=1}^N\sigma_j\right\|_1
\leq \min\left\{\left(\frac{2}{C}\sum_{j=1}^N\|\rho_j-\sigma_j\|_1^2\right)^{1/2},\sum^N_{j=1}\|\rho_j-\sigma_j\|_1\right\}.
\]
In particular, the left-hand side is at most $\epsilon$ if $\|\rho_j-\sigma_j\|_1\leq \sqrt{C/2}\,\epsilon/\sqrt{N}$ for every $j$.
\end{lem}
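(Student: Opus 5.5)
The bound is the minimum of two estimates, and I would prove each one separately.

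\emph{Linear bound.} Telescope through hybrid states: write $\bigotimes_j\rho_j-\bigotimes_j\sigma_j=\sum_{k=1}^N \sigma_1\otimes\cdots\otimes\sigma_{k-1}\otimes(\rho_k-\sigma_k)\otimes\rho_{k+1}\otimes\cdots\otimes\rho_N$. The trace norm is multiplicative on tensor products, and every density operator has unit trace norm. The triangle inequality then gives $\sum_k\|\rho_k-\sigma_k\|_1$.

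\emph{Square-root bound.} Pass through a divergence that tensorizes additively and is controlled quadratically when the reference state is bounded below. The natural choice is the $\chi^2$-type quantity, or more simply the relative entropy.

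First apply the quantum Pinsker inequality,
\[
\Bigl\|\bigotimes_j\rho_j-\bigotimes_j\sigma_j\Bigr\|_1\le\sqrt{2D\Bigl(\bigotimes_j\rho_j\Big\|\bigotimes_j\sigma_j\Bigr)}=\Bigl(2\sum_jD(\rho_j\|\sigma_j)\Bigr)^{1/2},
\]
where the equality is the additivity of relative entropy under tensor products.

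Then bound each $D(\rho_j\|\sigma_j)$ by the $\chi^2$ divergence. Use $D(\rho\|\sigma)\le\log\Tr(\rho^2\sigma^{-1})\le\Tr(\rho^2\sigma^{-1})-1$, the first step being monotonicity of sandwiched Rényi divergences in the order with $\alpha=2$ (Petz form also works). This gives $D(\rho_j\|\sigma_j)\le \Tr\bigl((\rho_j-\sigma_j)^2\sigma_j^{-1}\bigr)$. With $\sigma_j\ge CI$, so that $\sigma_j^{-1}\le C^{-1}I$, and $\|X\|_2\le\|X\|_1$, we get $\Tr\bigl((\rho_j-\sigma_j)^2\sigma_j^{-1}\bigr)\le C^{-1}\|\rho_j-\sigma_j\|_2^2\le C^{-1}\|\rho_j-\sigma_j\|_1^2$.

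Combining the two steps yields $\bigl(\tfrac{2}{C}\sum_j\|\rho_j-\sigma_j\|_1^2\bigr)^{1/2}$, which is exactly the constant claimed.

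The delicate step is the inequality $D(\rho\|\sigma)\le\Tr(\rho^2\sigma^{-1})-1$, together with the identity $\Tr(\rho^2\sigma^{-1})-1=\Tr\bigl((\rho-\sigma)^2\sigma^{-1}\bigr)$ in the noncommuting case.

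\emph{The identity.} Expand $\Tr\bigl((\rho-\sigma)\sigma^{-1}(\rho-\sigma)\bigr)$ to get $\Tr(\rho\sigma^{-1}\rho)-2\Tr\rho+\Tr\sigma$, which equals $\Tr(\rho^2\sigma^{-1})-1$. I would use the symmetric form $(\rho-\sigma)\sigma^{-1}(\rho-\sigma)$ throughout to avoid ordering issues, so the identity is exact.

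\emph{The inequality.} I would cite the standard chain $D=D_1\le \widetilde D_2\le D_2^{\rm Petz}=\log\Tr(\rho^2\sigma^{-1})$. If one prefers to avoid Rényi machinery, an alternative is operator concavity of $\log$, which gives $D(\rho\|\sigma)\le\log\Tr(\rho\,e^{\log\rho-\log\sigma})$. Golden--Thompson then yields $\le\log\Tr(\rho^2\sigma^{-1})$.

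The "in particular" statement follows by substituting $\|\rho_j-\sigma_j\|_1\le\sqrt{C/2}\,\epsilon/\sqrt N$ into the square-root bound.
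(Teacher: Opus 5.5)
Your proof is correct and follows the paper's argument essentially step for step. For the linear bound you use the same telescoping hybrid sum, and for the square-root bound you use the same chain: quantum Pinsker, additivity of $D$, the bound $D\le\log\Tr(\rho^2\sigma^{-1})$ from monotonicity in the R\'enyi order, and then $\sigma_j\ge CI$ together with $\|X\|_2\le\|X\|_1$.

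One caution about your optional R\'enyi-free aside. Golden--Thompson gives $\Tr(\rho\,e^{\log\rho-\log\sigma})\ge\Tr e^{2\log\rho-\log\sigma}$, which is a lower bound on the quantity you need to control from above, so it cannot deliver $\log\Tr(\rho^2\sigma^{-1})$ as you wrote. That route should instead use convexity of $\alpha\mapsto\log\Tr e^{\alpha\log\rho+(1-\alpha)\log\sigma}$ to get $D(\rho\|\sigma)\le\log\Tr e^{2\log\rho-\log\sigma}$, and only then apply Golden--Thompson. Your main argument does not depend on this aside.
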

\begin{proof}
Let $\rho=\bigotimes_j\rho_j$ and $\sigma=\bigotimes_j\sigma_j$. To get the second term in the right handside, we use the telescoping sum
\[
\rho-\sigma=\sum_{j=1}^N \left(\bigotimes_{k=1}^{j-1}\rho_k\otimes(\rho_j-\sigma_j)\otimes\bigotimes_{k=j+1}^{N}\sigma_k\right),
\]
and then apply the triangle inequality and the submultiplicativity of the trace norm:
\[
\|\rho-\sigma\|_1\leq \sum_{j=1}^N \|\rho_j-\sigma_j\|_1.
\]
By quantum Pinsker's inequality, additivity of relative entropy, and monotonicity of the Petz R\'enyi divergence in its order,
\[
\frac{1}{2}\|\rho-\sigma\|_1^2
\leq D(\rho\|\sigma)
=\sum_jD(\rho_j\|\sigma_j)
\leq\sum_j\log\mathrm{Tr}(\rho_j^2\sigma_j^{-1}).
\]
Since $\sigma_j\geq C I$,
\[
\log\mathrm{Tr}(\rho_j^2\sigma_j^{-1})
=\log\left(1+\mathrm{Tr}\left((\rho_j-\sigma_j)^2\sigma_j^{-1}\right)\right)
\leq C^{-1}\|\rho_j-\sigma_j\|_1^2.
\]
Combining these inequalities proves the result.
\end{proof}

\section{Proof of~\texorpdfstring{\cref{lem:random_bath_ground}}{Lemma~\ref*{lem:random_bath_ground}}}\label{sec:pf_random_bath_ground_lem}
\begin{proof}[Proof of~\cref{lem:random_bath_ground}]
 The error from the infinite time window follows from ~\cite{wang2025lindblad}[Lemma S3]. It is sufficient to consider the adjoint channel,
 \begin{equation}
     \widetilde\Phi_{\alpha, j}^\dagger(O) = \mathcal{U}_{j}^\dagger(T) \circ \widetilde\Psi_{\alpha, j}^\dagger\circ  \mathcal{U}_{j}^\dagger(T) (O),
\end{equation}
where we obtain $\widetilde\Psi_{\alpha,j}^\dagger(O)$ by taking $T\to\infty$ in
\begin{equation}
\begin{aligned}
    \Psi_{\alpha, j}^\dagger(O)
    &= \sum_{n,m\geq 0}(-1)^n(i\alpha)^{n+m}\int  \int_{-T\leq s_n\leq \cdots \leq s_1 \leq T, -T\leq \tau_m\leq \cdots\leq \tau_1 \leq T}\mathrm{d}s_1 \cdots \mathrm{d}s_n \mathrm{d}\tau_1\cdots \mathrm{d}\tau_m \\
        &\sum_{\alpha_i,\beta_j\in\{-1,1\}}  S_{\beta_m,j}^\dag \cdots S_{\beta_1,j}^\dag O S_{\alpha_1,j}\cdots S_{\alpha_n,j}\mathrm{Tr}_{E_j}(B_{\alpha_1,j}\cdots B_{\alpha_n,j}\rho_B(\omega)  B_{\beta_m,j}^\dag \cdots B_{\beta_1,j}^\dag )\\
        & f(s_1)\cdots f(s_n) f(\tau_1)\cdots f(\tau_m)e^{-i(\omega+\lambda_j)\sum_{q=1}^{m}\beta_q\tau_q
+i(\omega+\lambda_j)\sum_{p=1}^{n}\alpha_p s_p}\,. \\
\end{aligned}
\end{equation}
By substituting the corresponding number operator $n_j = d_j^\dagger d_j$, and denote
\begin{equation}
p_{j}(\omega):=\left|\sum_{r=1}^{\infty}(-i\alpha)^{2r-1}\int_{t_1<\cdots<t_{2r-1}}e^{i(\lambda_j+\omega)\sum_{\ell=1}^{2r-1}(-1)^{\ell+1}t_\ell}\prod_{\ell=1}^{2r-1}f(t_\ell)\,\mathrm{d}t_1\cdots\mathrm{d}t_{2r-1}\right|^2 ,
\end{equation}
\begin{itemize}
    \item When $\omega>0$, $\rho_B(\omega) = \ketbra{0}$,
    \begin{equation}
        \begin{aligned}
            \widetilde\Psi_{\alpha, j}^\dagger(n_j)
& = n_j +  \sum_{n,m\geq 0\text{ odd}, n+m\geq2}(-1)^n(i\alpha)^{n+m}\int  \int_{ s_n\leq \cdots \leq s_1,  \tau_m\leq \cdots\leq \tau_1 }\mathrm{d}s_1 \cdots \mathrm{d}s_n \mathrm{d}\tau_1\cdots \mathrm{d}\tau_m\\
            &f(s_1)\cdots f(s_n) f(\tau_1)\cdots f(\tau_m)
         e^{-i(\omega+\lambda_j)\sum_{q=1}^{m}(-1)^{q+1}\tau_q
+i(\omega+\lambda_j)\sum_{p=1}^{n}(-1)^{p+1}s_p}(I-n_j)\\
&= (1-p_{j}(\omega)) n_j + p_j(\omega) I \,. \\
        \end{aligned}
    \end{equation}
    \item When $\omega<0$, $\rho_B(\omega) = \ketbra{1}$,
        \begin{equation}
        \begin{aligned}
            \widetilde\Psi_{\alpha, j}^\dagger(n_j) &= n_j +  \sum_{n,m\geq 0, n+m\geq2, n,m \text{  even}}(-1)^n(i\alpha)^{n+m}\int  \int_{ s_n\leq \cdots \leq s_1 , \tau_m\leq \cdots\leq \tau_1 }\mathrm{d}s_1 \cdots \mathrm{d}s_n \mathrm{d}\tau_1\cdots \mathrm{d}\tau_m\\
            &  f(s_1)\cdots f(s_n) f(\tau_1)\cdots f(\tau_m)  e^{-i(\omega+\lambda_j)\sum_{q=1}^{m}(-1)^{q+1}\tau_q
+i(\omega+\lambda_j)\sum_{p=1}^{n}(-1)^{p+1}s_p} n_j\\
&=(1-p_{j}(\omega)) n_j\,. \\
        \end{aligned}
    \end{equation}
\end{itemize}
Furthermore, $\widetilde{\Phi}_{\alpha,j}^\dagger(n_j) =\widetilde{\Psi}^\dagger_{\alpha,j}(n_j)$. We average over both signs of $\omega$, with $g(\omega)=(2S)^{-1}\mathbf{1}_{[-S,S]}(\omega)$, to obtain
\begin{equation}\label{eq:ground_frequency_average}
    \mathbb{E}_\omega(\widetilde{\Phi}^\dagger_{\alpha,j}(\omega))(n_j) =  \left(1-\int_{\mathbb{R}} g(\omega) p_{j}(\omega)\mathrm{d}\omega\right) n_j + \int_{\mathbb{R}^+} g(\omega)p_j(\omega)\mathrm{d}\omega I
\end{equation}
For $\lambda_j>0$, we apply Lemma S8 in\cite{wang2025lindblad} with $\|f\|_{L^1}=2^{3/4}\pi^{1/4}\sqrt{\sigma}$, to obtain
\begin{equation}
\int_{\mathbb{R}^+} g(\omega)p_j(\omega)\mathrm{d}\omega\leq \int_{\mathbb{R}^+} g(\omega) \left|\sum_{r\geq 1} \alpha^{2r-1} \frac{(2^{3/4}\pi^{1/4}\sqrt{\sigma})^{2r-1}}{(2r-1)!}e^{-\frac{\sigma^2}{2r-1}(\lambda_j + \omega)^2}\right|^2\mathrm{d}\omega
\end{equation}
On $J_j=[-\lambda_j,-\lambda_j+(2\sigma)^{-1}]\subset[-S,0]$, the first-order transition amplitude has magnitude at least $2^{3/4}\pi^{1/4}e^{-1/4}\alpha\sqrt{\sigma}$. We bound the higher odd orders by $\mathcal O((\alpha\sqrt{\sigma})^3)$ and obtain $p_j(\omega)\ge c\alpha^2\sigma$ on $J_j$ for sufficiently small $\alpha\sqrt{\sigma}$ and a constant $c>0$, which gives
\begin{equation}
    \int_{\mathbb{R}^-} g(\omega) p_{j}(\omega)\mathrm{d}\omega\ge C\alpha^2
\end{equation}
Since $\lambda_j+\omega\ge\Delta$ for $\omega>0$ and the amplitude bound is at most one for sufficiently small $\alpha\sqrt{\sigma}$, we obtain
\begin{equation}
\mathbb{E}_\omega(\widetilde{\Phi}^\dagger_{\alpha,j}(\omega))(n_j) \leq (1- C\alpha^2)n_j + \sum_{r\geq 1}\frac{(\mathcal{O}(\alpha \sqrt{\sigma}))^{2r-1}}{(2r-1)!} e^{-\frac{\sigma^2}{2r-1}\Delta^2}I\,.
\end{equation}
For $\lambda_j<0$, we exchange the positive and negative frequency sectors and apply the same argument to $I-n_j$.

\end{proof}

\end{document}